\providecommand{\U}[1]{\protect\rule{.1in}{.1in}}
\newtheorem{theorem}{Theorem}
\newtheorem{lemma}[theorem]{Lemma}
\newenvironment{proof}[1][Proof]{\noindent\textbf{#1.} }{\ \rule{0.5em}{0.5em}}
\begin{document}
\preprint{ }

\title[ ]{Quantum thermodynamics and semi-definite optimization}

\author{Nana Liu}
\affiliation{Institute of Natural Sciences, Shanghai Jiao Tong University, Shanghai 200240, China}
\affiliation{School of Mathematical Sciences, Shanghai Jiao Tong University, Shanghai
200240, China}
\author{Michele Minervini}
\affiliation{School of Electrical and Computer Engineering, Cornell University, Ithaca, New
York 14850, United States}
\author{Dhrumil Patel}
\affiliation{School of Computer Science, Cornell University, Ithaca, New York 14850, United States}
\author{Mark M. Wilde}
\affiliation{School of Electrical and Computer Engineering, Cornell University, Ithaca, New
York 14850, United States}

\keywords{quantum Boltzmann machines, non-Abelian thermal states, semi-definite
programming, quantum thermodynamics, semi-definite optimization, stochastic gradient descent, stochastic gradient ascent}

\begin{abstract}
In quantum thermodynamics, a system is described by a Hamiltonian and a list
of non-commuting charges representing conserved quantities like particle
number or electric charge, and an important goal is to determine the system's
minimum energy in the presence of these conserved charges. In optimization
theory, a semi-definite program (SDP) involves a linear objective function optimized
over the cone of positive semi-definite operators intersected with an affine
space. These problems arise from differing motivations in the physics and
optimization communities and are phrased using very different terminology, yet
they are essentially identical mathematically. By adopting Jaynes' mindset
motivated by quantum thermodynamics, we observe that minimizing free energy in
the aforementioned thermodynamics problem, instead of energy, leads to an
elegant solution in terms of a dual chemical potential maximization problem that is concave in
the chemical potential parameters. As such,
one can employ standard (stochastic) gradient ascent methods to find the
optimal values of these parameters, and these methods are guaranteed to
converge quickly. At low temperature, the minimum free energy provides an
excellent approximation for the minimum energy. We then show how this Jaynes-inspired gradient-ascent approach can be used in both first- and second-order classical and
hybrid quantum--classical algorithms for minimizing energy, and equivalently, how it can be used
for solving SDPs, with guarantees on the runtimes of the
algorithms. The approach discussed here is well grounded in quantum
thermodynamics and, as such, provides physical motivation underpinning why
algorithms published fifty years after Jaynes' seminal work, including the
matrix multiplicative weights update method, the matrix exponentiated gradient
update method, and their quantum algorithmic generalizations, perform well at solving SDPs.
\end{abstract}
\date{\today}
\startpage{1}
\endpage{10}
\maketitle
\tableofcontents

\section{Introduction}

\subsection{Background and motivation}

Quantum mechanics and semi-definite optimization are inevitably intertwined.
The postulates of quantum mechanics assert that a state of a quantum system is
specified by a positive semi-definite operator with trace equal to one. They
also assert that a measurement is specified by a tuple of positive
semi-definite operators that sum to the identity. Beyond states and
measurements, quantum channels are general models of physical
evolutions~\cite{Kraus1983}, and these are in one-to-one correspondence with
their Choi matrices~\cite{Choi1975}, which are positive semi-definite
operators that reduce to the identity matrix after a partial trace over the
channel output system. All of these matrix constraints are semi-definite
constraints, and so all of the core constituents of quantum mechanics are
represented as constrained semi-definite operators.

As such, when one is faced with certain optimization tasks in quantum
mechanics, like computing the minimum energy of a physical system or the
optimal error probabilities for communication~\cite{Yuen1970,Yuen1975}, one
soon thereafter realizes that they can be formulated as semi-definite
optimization problems, also known as semi-definite programs (SDPs). Beyond
these basic problems, various problems in many-body
physics~\cite{Mazziotti2004,Barthel2012,SimmonsDuffin2015,Berenstein2023,Fawzi2024}
and quantum communication~\cite{W18thesis} can also be addressed by SDPs, so
that the field of semi-definite optimization has greatly contributed to
quantum mechanics. Spurred by the large interest in quantum information
science, SDPs have become an indispensable tool in the
field~\cite{khatri2020principles,Siddhu2022,Skrzypczyk2023}.

Despite the aforementioned intimate connection between semi-definite
optimization and quantum mechanics, the former domain has its roots in
operations research and computer science, and in particular, a special class of
semi-definite optimization problems called linear optimization or linear
programming~\cite[Section~4.3]{Boyd2004}. Indeed, many problems of practical
interest in these and other fields can be formulated as or relaxed to SDPs,
including problems in combinatorial
optimization~\cite{Goemans1997,Rendl1999,tunccel2016polyhedral},
finance~\cite{Gepp2020}, job scheduling in operations
research~\cite{Skutella2001}, and machine
learning~\cite{dAspremont2007,hall2018optimization,majumdar2020recent}, to
name a few. These practical applications have motivated the development of
fast algorithms for solving SDPs, which are based on intuition coming from
earlier algorithms for solving linear programs. Some prominent algorithms
include interior-point methods~\cite{Alizadeh1998,Todd1998,Jiang2020}, the
matrix exponentiated gradient update method~\cite{Tsuda2005}, and the very closely-related
matrix multiplicative weights update method
\cite{Arora2007,Arora2012,Arora2016}.

In light of the above, one might be left to wonder:\ given all that
semi-definite optimization has contributed to quantum mechanics, can quantum
mechanics also offer insights into semi-definite optimization? Hitherto, this
question has been addressed in a particular way by the development of various
algorithms for solving SDPs on quantum computers, some with guaranteed
runtimes
\cite{Brandao2017,Apeldoorn2019,Brandao2019,vanApeldoorn2020quantumsdpsolvers,Kerenidis2020,Augustino2023quantuminterior,watts2023quantum}
and others based on heuristics
\cite{Bharti2022,Patel2024variationalquantum,Patti2023quantumgoemans,westerheim2023dualvqequantumalgorithmlower,chen2023qslackslackvariableapproachvariational}. However, another way of addressing this question is to employ physical
intuition and insights for understanding semi-definite optimization and developing
algorithms for solving SDPs. The latter is the main theme of the present paper.

\subsection{Summary of contributions}

\label{sec:summary-contribs}

In this paper, we show how physical intuition coming from quantum
thermodynamics can be used for understanding semi-definite optimization. In
particular, we consider the problem of minimizing a physical system's energy
in the presence of conserved non-commuting charges. This is a fundamental
problem in quantum thermodynamics and has its roots in Jaynes' seminal work on
unifying information theory and statistical mechanics
\cite{Jaynes1957,Jaynes1957a,Jaynes1962}. It has also resurfaced in recent works on quantum information and thermodynamics~\cite{YungerHalpern2016,Guryanova2016,Lostaglio2017,YungerHalpern2020,Anshu2021,Kranzl2023,Majidy2023}. By glancing at the precise
formulation of this problem in~\eqref{eq:energy-minimization}, it should be
clear to readers having background in semi-definite optimization that
\eqref{eq:energy-minimization} is nearly in the standard form of an SDP
\cite[Eq.~(4.51)]{Boyd2004}, with the only exception being that there is a
unit trace constraint imposed on the positive semi-definite operator $\rho$ therein in
order for it to be a legitimate quantum state.

\begin{figure}
\centering
\includegraphics[width=0.60\linewidth]{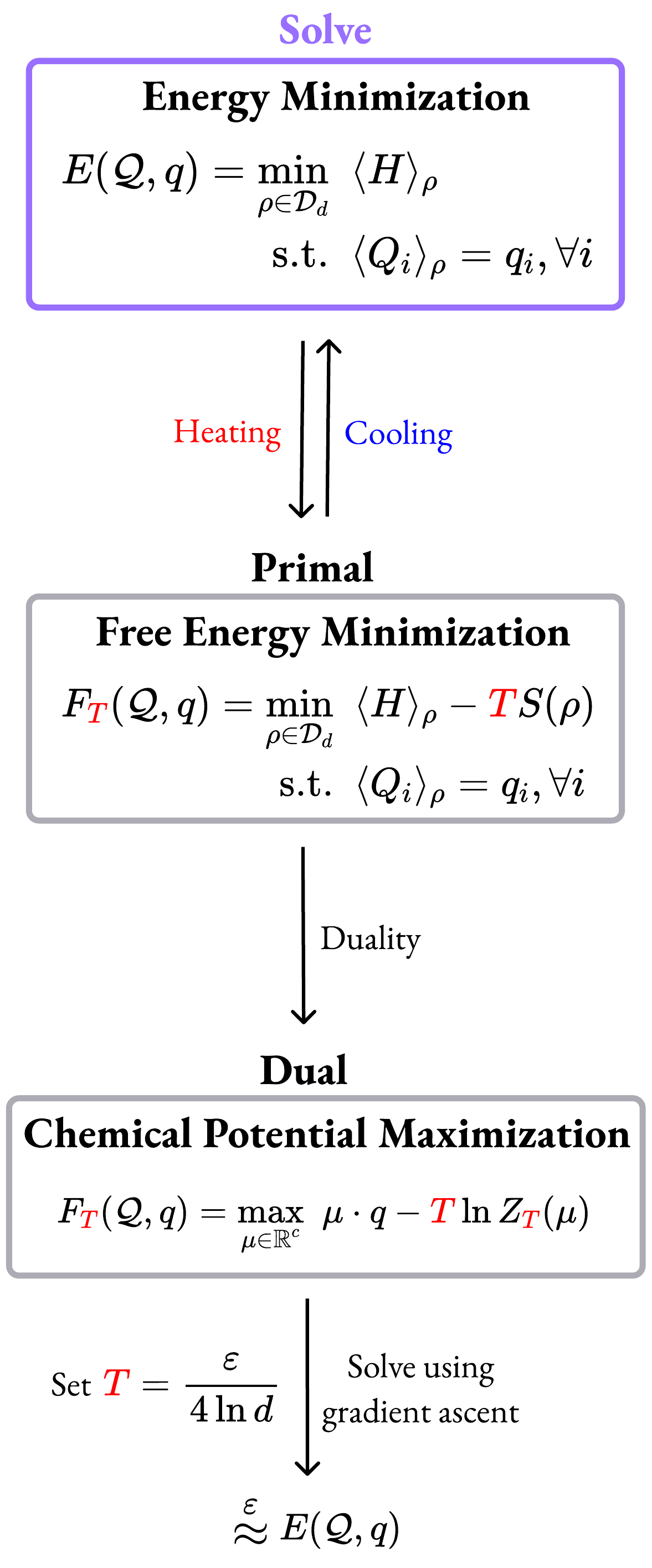}
\caption{Depiction of the main idea behind solving an energy minimization
problem in quantum thermodynamics, specified by a Hamiltonian $H$ and a tuple
$(Q_{1}, \ldots, Q_{c})$ of conserved non-commuting charges with respective
expected values $(q_{1}, \ldots, q_{c})$. The goal is to determine the minimum
energy $E(\mathcal{Q},q)$ of the system. Inspired by the fact that physical
systems operate at a strictly positive temperature $T>0$, we instead minimize the free
energy $F_{T}(\mathcal{Q},q)$ of the system at a low temperature $T$. By
employing Lagrangian duality and quantum relative entropy, we can rewrite the free energy minimization
problem as the dual problem of chemical potential maximization. This latter
problem is concave in the chemical potential vector $\mu$ and thus can be
solved quickly by gradient ascent. The approach leads to classical and hybrid
quantum--classical algorithms for energy minimization. See Section~\ref{sec:energy-min-q-thermo} for details.}
\label{fig:energy-minimization}
\end{figure}

Our physics-inspired approach for solving~\eqref{eq:energy-minimization}
consists of a few observations:

\begin{enumerate}
\item Rather than directly solve the energy minimization problem in
\eqref{eq:energy-minimization}, one can instead minimize the free energy of
the system, defined in~\eqref{eq:min-free-energy}, at a strictly positive
temperature, which gives an excellent approximation to the minimum energy at
sufficiently low temperature (in particular, ``sufficiently low'' means that the temperature should be  inversely proportional to the number of qubits in the system).

\item By employing Lagrangian duality and quantum relative entropy, one can
express the free-energy minimization problem in dual form as a chemical
potential maximization problem, i.e., a maximization of the Legendre
transformation of a certain log-partition function (see
\eqref{eq:dual-free-energy}). The maximization is taken over a vector $\mu$,
which is known as the chemical potential vector in thermodynamics.

\item The latter maximization problem is concave in the chemical potential
vector~$\mu$ and can thus be solved quickly by gradient ascent.
\end{enumerate}

\noindent This line of reasoning is depicted in
Figure~\ref{fig:energy-minimization}, and the precise gradient ascent
algorithm is given as Algorithm~\ref{alg:basic-grad-asc-free-energy} in Section~\ref{sec:solution-gradient-ascent}.
Interestingly, Jaynes already made the observation stated in step~2 above when he
considered the related problem of maximizing the entropy of a thermodynamic
system described by a Hamiltonian and non-commuting conserved charges~\cite[page~200]{Jaynes1962}. The first observation in step~3, regarding concavity, was
attributed to Jaynes before the statement of~\cite[Proposition~3]{liu2006gibbsstatesconsistencylocal}.

The analysis in step~2 demonstrates that parameterized thermal states of the
form in~\eqref{eq:grand-canon-thermal-state} are optimal for minimizing the
free energy. Thus, one can confine the search to states of this form, and the
concavity claim in step~3 simplifies the search for the optimal state even
more, such that the chemical potential maximization can be performed by
gradient ascent and is guaranteed to converge quickly. Such parameterized
thermal states go by alternative names in the literature, including grand
canonical thermal states, non-Abelian thermal states
\cite{YungerHalpern2016,YungerHalpern2020,Kranzl2023,Majidy2023}, and quantum
Boltzmann machines~\cite{Amin2018,Benedetti2017,Kieferova2017}.

Not only does Algorithm~\ref{alg:basic-grad-asc-free-energy} specify a
classical algorithm, but it also leads to a hybrid quantum--classical (HQC)
algorithm for performing energy minimization (see
Algorithm~\ref{alg:energy-min}). The only change needed in promoting
Algorithm~\ref{alg:basic-grad-asc-free-energy} to an HQC algorithm is that
expressions like $\left\langle Q_{i}\right\rangle _{\rho_{T}(\mu)}$ and
$\left\langle H\right\rangle _{\rho_{T}(\mu)}$ should instead be estimated on
a quantum computer, by preparing the thermal state $\rho_{T}(\mu)$ and
measuring the observables $Q_{i}$ and $H$. This HQC algorithm is thus broadly
similar to the HQC algorithm proposed in~\cite{Coopmans2024} for generative
modeling of quantum states, and as such, Algorithm~\ref{alg:energy-min} can be considered a
\textit{quantum Boltzmann machine learning} algorithm for energy minimization. As mentioned, it is necessary for the HQC algorithm to make use of thermal-state preparation algorithms as a subroutine, and we note that there has been much progress in this area in recent years~\cite{chen2023q_Gibbs_sampl,chen2023thermalstatepreparation,rajakumar2024gibbssampling,bergamaschi2024gibbs_sampling,chen2024sim_Lindblad,rouze2024efficientthermalization,bakshi2024hightemperaturegibbsstates,ding2024preparationlowtemperaturegibbs}.  Due to a
quantum computer only producing estimates of the values $\left\langle
Q_{i}\right\rangle _{\rho_{T}(\mu)}$ and $\left\langle H\right\rangle
_{\rho_{T}(\mu)}$, it is necessary to analyze the performance of
Algorithm~\ref{alg:energy-min} under the framework of stochastic gradient
ascent (see~\cite[Chapter~5]{garrigos2024handbookconvergencetheoremsstochastic} and~\cite[Section~6]{Bubeck2015}).
However, again due to concavity of the maximization problem, it is guaranteed
that, on average, stochastic gradient ascent converges quickly to
an approximately optimal solution.

\begin{figure}
\centering
\includegraphics[width=0.60\linewidth]{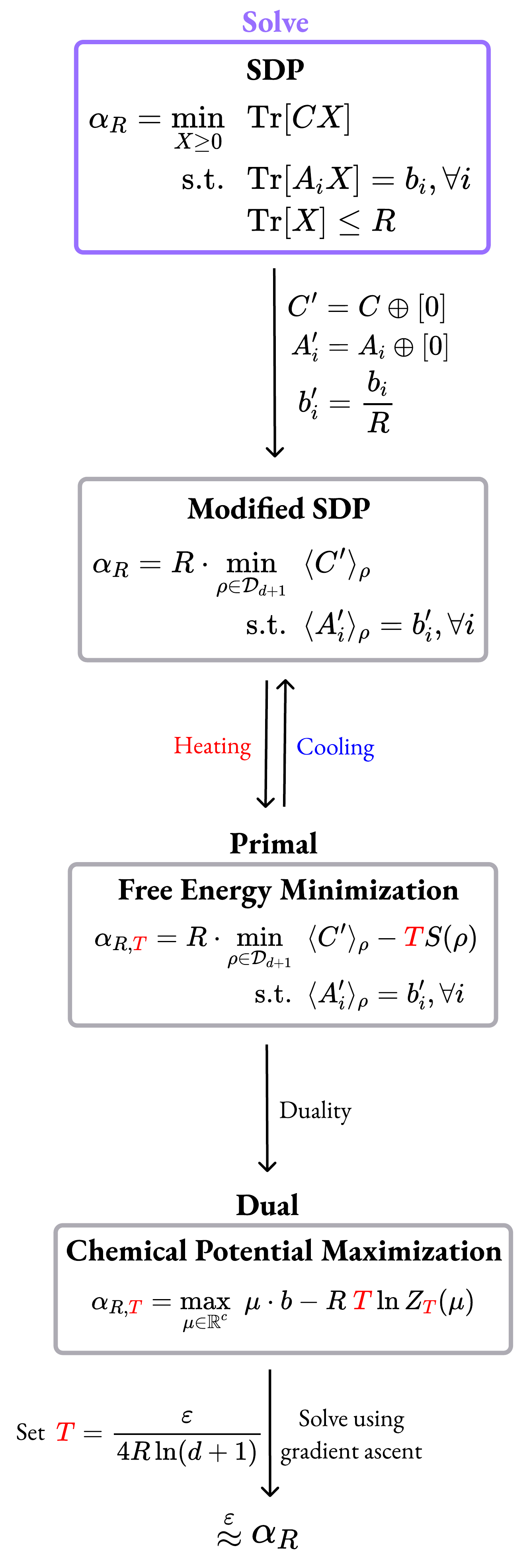}
\caption{Depiction of the main idea behind solving a general semi-definite
optimization problem (abbreviated SDP), specified by the tuple of $d\times d$ Hermitian
matrices $(C,A_{1}, \ldots, A_{c})$. The reduction from~\cite[Footnote~3]
{Brandao2019} allows for reducing a general SDP to an energy minimization
problem, where $R \geq0$ is a guess on the trace of an optimal solution to the
original SDP. The resulting algorithm is sample-efficient as long as $R$ does not
grow too quickly with $d$. See Section~\ref{sec:SDP-to-energy-min} for details.}
\label{fig:general-SDP}
\end{figure}

Going beyond energy minimization problems of the form in~\eqref{eq:energy-minimization}, one can modify this approach just slightly in
order to solve general SDPs of the standard form in~\eqref{eq:SDP-standard}.
This makes use of a simple reduction pointed out in~\cite[Footnote~3]{Brandao2019}.
Indeed, as recalled in Section~\ref{sec:general-SDPs} and summarized in
Figure~\ref{fig:general-SDP}, one can reduce a general SDP\ to a scaled
energy-minimization problem, such that the overall runtime ends up being a
function of the scaling factor. Furthermore, this leads to both classical and
HQC algorithms for solving general SDPs.

Algorithms~\ref{alg:basic-grad-asc-free-energy}\ and \ref{alg:energy-min}
are first-order methods for solving energy minimization problems (i.e., which
use the gradient in a gradient ascent update rule in order to search for an
optimal solution). By employing an analytical form for the Hessian of the objective function in~\eqref{eq:dual-free-energy}, as stated in~\cite[Lemma~26]{Anshu2021}, we establish a second-order method for energy
minimization and semi-definite optimization, i.e., which incorporates
second-derivative information to aid the search. This leads to second-order
classical and HQC algorithms for these tasks. The second-order HQC algorithms employ Hessian estimation algorithms similar to those recently proposed in~\cite{patel2024quantumboltzmannmachinelearning,patel2024naturalgradientparameterestimation,minervini2025evolvedquantumboltzmannmachines,minervini2025quantumnaturalgradientthermalstate}, involving a combination of random sampling, Hamiltonian simulation
\cite{lloyd1996universal,childs2018toward}, and the Hadamard test
\cite{Cleve1998}.
Interestingly, for our case, the Hessian matrix is equal to the negative of
the Kubo--Mori information matrix~\cite{Bengtsson2006,Liu2019,Sidhu2020,Jarzyna2020,Meyer2021fisherinformationin,sbahi2022provablyefficientvariationalgenerative}, indicating that the Euclidean geometry on the chemical potential parameter
space is aligned with the geometry induced by the quantum relative entropy on parameterized thermal states of the form in~\eqref{eq:grand-canon-thermal-state}. Related, this indicates that the
second-order method is equivalent to natural gradient ascent according to the Kubo--Mori
geometry induced by quantum relative entropy, the latter detailed recently in~\cite{patel2024naturalgradientparameterestimation,minervini2025evolvedquantumboltzmannmachines}.

The rest of our paper is organized as follows. In Section~\ref{sec:energy-min-q-thermo}, we begin by setting up the energy minimization problem in quantum thermodynamics, and we then show that it can be approximated by free energy minimization, which is in turn related to the dual problem of chemical potential maximization. This dual problem can be solved quickly by gradient ascent. See Figure~\ref{fig:energy-minimization} for a summary of the method. In Section~\ref{sec:algorithm-analysis}, we analyze the smoothness parameter for the chemical potential maximization problem, which is essential for establishing a lower bound on the number of steps required for gradient ascent to converge. In Section~\ref{sec:SDP-to-energy-min}, we recast a general semi-definite optimization problem as an energy minimization problem and remark how the latter can be used to solve the former. See Figure~\ref{fig:general-SDP} for a summary of the method. In Section~\ref{sec:q-algs-SDPs}, we propose HQC algorithms for energy minimization and semi-definite optimization, and we also analyze their convergence under stochastic gradient ascent. In Section~\ref{sec:second-order}, we discuss  a second-order approach to gradient ascent that involves HQC algorithms for estimating the elements of the Hessian matrix. In Section~\ref{sec:MEG-MMW-compare}, we compare the approach put forward here to other approaches like the matrix exponentiated gradient update method~\cite{Tsuda2005},  the matrix multiplicative weights update method~\cite{Arora2007,Arora2012,Arora2016}, and a quantum algorithm for solving SDPs~\cite{Apeldoorn2019}. Finally, in Section~\ref{sec:conclusion}, we conclude with a summary of our findings and open directions for future research. 

\section{Minimizing energy in quantum thermodynamics}

\label{sec:energy-min-q-thermo}

\subsection{Energy minimization problem}

\label{sec:energy-min-problem}

Minimizing energy and the related task of minimizing free energy are both problems of
fundamental interest in quantum thermodynamics
\cite{Jaynes1957,Jaynes1957a,Jaynes1962}, and related problems have resurfaced
in recent works on quantum information
\cite{YungerHalpern2016,Guryanova2016,Lostaglio2017,YungerHalpern2020,Anshu2021,Kranzl2023,Majidy2023}. To set up the problem, given is a Hamiltonian~$H$ and a tuple $\left(
Q_{1},\ldots,Q_{c}\right)  $ of non-commuting conserved charges corresponding
to a $d$-dimensional quantum system, where $c,d\in\mathbb{N}$. Define
\begin{equation}
\mathcal{Q}\equiv\left(  H,Q_{1},\ldots,Q_{c}\right)  .
\label{eq:observable-tuple}
\end{equation}
Mathematically, each of these matrices is a $d\times d$ Hermitian matrix; they
are also called observables in quantum mechanics. Physically, the Hamiltonian
$H$ describes the energy of the system, and the non-commuting charges $\left(
Q_{1},\ldots,Q_{c}\right)  $ represent conserved quantities such as particle
number or electric charge. All of the operators in $\mathcal{Q}$ need not
commute. A constraint vector
\begin{equation}
q\equiv\left(  q_{1},\ldots,q_{c}\right)  \in\mathbb{R}^{c}
\end{equation}
specifies the expectation values of the conserved charges with respect to a
quantum state $\rho\in\mathcal{D}_{d}$ of the system:
\begin{equation}
\left\langle Q_{i}\right\rangle _{\rho}=q_{i}\ \ \forall i\in\left[  c\right]
,
\end{equation}
where
\begin{equation}
\mathcal{D}_{d}\coloneqq\left\{  \rho\in\mathbb{M}_{d}:\rho\geq
0,\ \operatorname{Tr}[\rho]=1\right\}
\end{equation}
denotes the set of $d\times d$ density matrices (positive semi-definite
matrices with trace equal to one), $\mathbb{M}_{d}$ denotes the set of
$d\times d$ matrices with entries in $\mathbb{C}$, 
\begin{equation}
\left\langle O\right\rangle _{\rho}\equiv\operatorname{Tr}[O\rho]
\end{equation}
denotes the expected value of the observable $O$ in the state $\rho$, and
\begin{equation}
\left[  c\right]  \coloneqq\left\{  1,\ldots,c\right\}  .
\end{equation}
Minimizing the energy in the presence of the conserved charges  corresponds to the
following optimization task:
\begin{equation}
E(\mathcal{Q},q)\coloneqq\min_{\rho\in\mathcal{D}_{d}}\left\{  \left\langle
H\right\rangle _{\rho}:\left\langle Q_{i}\right\rangle _{\rho}=q_{i}\ \forall
i\in\left[  c\right]  \right\}  . \label{eq:energy-minimization}
\end{equation}

\subsection{Approximation by free energy minimization}

Finding an analytic form for the solution of~\eqref{eq:energy-minimization} or
the set of optimizing states is not obvious. However, one can consider that
real physical systems operate at a strictly positive temperature $T>0$ and instead seek
to determine the minimum free energy at temperature $T$:
\begin{equation}
F_{T}(\mathcal{Q},q)\coloneqq\min_{\rho\in\mathcal{D}_{d}}\left\{
\left\langle H\right\rangle _{\rho}-TS(\rho):\left\langle Q_{i}\right\rangle
_{\rho}=q_{i}\ \forall i\in\left[  c\right]  \right\}  ,
\label{eq:min-free-energy}
\end{equation}
where $\left\langle H\right\rangle _{\rho}-TS(\rho)$ is the free energy and
the von Neumann entropy is defined as
\begin{equation}
S(\rho)\coloneqq-\operatorname{Tr}[\rho\ln\rho].
\end{equation}

Determining the free energy $F_{T}(\mathcal{Q},q)$ when the temperature $T$ is
low gives an approximation for the energy $E(\mathcal{Q},q)$. To make the
approximation precise, one can employ the following bounds on the von Neumann
entropy of a $d$-dimensional state $\rho$:
\begin{equation}
0\leq S(\rho)\leq\ln d, \label{eq:von-neumann-entropy-ineqs}
\end{equation}
which lead to the bounds:
\begin{equation}
E(\mathcal{Q},q)\geq F_{T}(\mathcal{Q},q)\geq E(\mathcal{Q},q)-T\ln d.
\label{eq:energy-free-energy-bounds}
\end{equation}
With this approximation in place, Eq.~\eqref{eq:energy-free-energy-bounds}
clarifies that
\begin{equation}
\lim_{T\rightarrow0}F_{T}(\mathcal{Q},q)=E(\mathcal{Q},q)
\end{equation}
and that we should set $\delta=T\ln d$ in order to achieve a $\delta>0$ error when
approximating the energy $E(\mathcal{Q},q)$ in terms of the free energy
$F_{T}(\mathcal{Q},q)$. As such, we see that the temperature $T$ should be set
as low as $T=\sfrac{\delta}{\ln d}$ in order to achieve this error (i.e., the
temperature $T$ should be inversely proportional to the number of qubits in
the system). See Appendix~\ref{app:energy-free-energy-bnd} for further details
on establishing the bounds in~\eqref{eq:energy-free-energy-bounds}.

\subsection{Chemical potential maximization}

In order to determine the minimum energy $E(\mathcal{Q},q)$ up to error
$\delta$, we can thus instead determine the free energy $F_{T}(\mathcal{Q},q)$
at low temperature $T=\sfrac{\delta}{\ln d}$. By employing Lagrange
duality~\cite[Chapter~5]{Boyd2004} and the quantum relative
entropy~\cite{Umegaki1962} 
\begin{equation}
D(\omega\Vert\tau)\coloneqq\operatorname{Tr}[\omega\left(  \ln\omega-\ln
\tau\right)  ]
\end{equation}
of density matrices $\omega,\tau\in\mathcal{D}_{d}$ (also known as Umegaki relative entropy), one can mirror Jaynes' argument in~\cite[page~200]{Jaynes1962} to arrive at the
following alternative expression for the free energy~$F_{T}(\mathcal{Q}
,q)$:
\begin{equation}
F_{T}(\mathcal{Q},q)=\sup_{\mu\in\mathbb{R}^{c}}\left\{  \mu\cdot q-T\ln
Z_{T}(\mu)\right\}  , \label{eq:dual-free-energy}
\end{equation}
where $Z_{T}(\mu)$ denotes the following partition function:
\begin{equation}
Z_{T}(\mu)\coloneqq\operatorname{Tr}\!\left[  e^{-\frac{1}{T}\left(
H-\mu\cdot Q\right)  }\right]  ,
\end{equation}
and we adopt the shorthand
\begin{equation}
\mu\cdot Q\equiv\sum_{i\in\left[  c\right]  }\mu_{i}Q_{i}.
\end{equation}
See Appendix~\ref{app:proof-dual-expression} for a detailed proof of
\eqref{eq:dual-free-energy}. The vector $\mu\in\mathbb{R}^{c}$ corresponds
mathematically to Lagrange multipliers for the constraints in
\eqref{eq:min-free-energy}, and in quantum thermodynamics, its entries are
known as chemical potentials. By inspection, Eq.~\eqref{eq:dual-free-energy}
is the Legendre transformation or Fenchel dual of the temperature-scaled
log-partition function $T\ln Z_{T}(\mu)$.

Beyond the
equality in~\eqref{eq:dual-free-energy}, it also follows that the optimal
state $\rho$ in~\eqref{eq:min-free-energy} is unique and given by the grand
canonical thermal state $\rho_{T}(\mu^{\ast})$ of the following form
\cite[page~200]{Jaynes1962}:
\begin{equation}
\rho_{T}(\mu)\coloneqq\frac{e^{-\frac{1}{T}\left(  H-\mu\cdot Q\right)  }
}{Z_{T}(\mu)}, \label{eq:grand-canon-thermal-state}
\end{equation}
where $\mu^{\ast}$ denotes the optimal value in~\eqref{eq:dual-free-energy}.
The proof in Appendix~\ref{app:proof-dual-expression} clarifies this point. In
more recent literature, $\rho_{T}(\mu)$ is known as a non-Abelian thermal
state~\cite{YungerHalpern2016,YungerHalpern2020,Kranzl2023,Majidy2023} or a
quantum Boltzmann machine~\cite{Amin2018,Benedetti2017,Kieferova2017}.

\subsection{Solution using gradient ascent}

\label{sec:solution-gradient-ascent}

The optimal chemical potential vector $\mu^{\ast}$ depends on the fixed
constraint vector $q$ in~\eqref{eq:min-free-energy}, but it does not have a
closed form in general. However, the log-partition function $\ln Z_{T}(\mu)$
in~\eqref{eq:dual-free-energy} is a convex function of $\mu$ \footnote{The
proof of convexity is attributed to~\cite{Jaynes1962} just before
\cite[Proposition~3]{liu2006gibbsstatesconsistencylocal}.}, and Jaynes proved
that the elements of the gradient $\nabla_{\mu}\left(  \mu\cdot q-T\ln
Z_{T}(\mu)\right)  $ are given by~\cite[page~200]{Jaynes1962}
\begin{equation}
\frac{\partial}{\partial\mu_{i}}\left(  \mu\cdot q-T\ln Z_{T}(\mu)\right)
=q_{i}-\left\langle Q_{i}\right\rangle _{\rho_{T}(\mu)}.
\label{eq:gradient-formula}
\end{equation}
See Appendix~\ref{app:gradient-hessian} for a derivation
of~\eqref{eq:gradient-formula}. As such, the objective function $\mu\cdot q-T\ln Z_{T}(\mu)$ in \eqref{eq:dual-free-energy} is concave in $\mu$, and one can search for the value of $\mu$
that maximizes~\eqref{eq:dual-free-energy} by means of the standard gradient
ascent algorithm, which is guaranteed to converge to a point $\varepsilon
$-close to the global maximum of the objective function
in~\eqref{eq:dual-free-energy} in $O(1/\varepsilon)$ steps, where
$\varepsilon>0$~\cite[Corollary~3.5]
{garrigos2024handbookconvergencetheoremsstochastic}. See Figure~\ref{fig:energy-minimization} for a concise depiction of the method summarized in Sections~\ref{sec:energy-min-problem}--\ref{sec:solution-gradient-ascent}.

This completes the essential reasoning behind an algorithm for calculating the
minimum energy $E(\mathcal{Q},q)$, which we provide as Algorithm~\ref{alg:basic-grad-asc-free-energy} (the smoothness
parameter $L\geq0$ will be specified later in Section~\ref{sec:algorithm-analysis}):

\begin{algorithm}[H]
\caption{$\mathtt{minimize\_energy}(H, (Q_i)_i, (q_i)_i, L, d, \varepsilon, r)$}
\label{alg:basic-grad-asc-free-energy}
\begin{algorithmic}[1]
\STATE \textbf{Input:} \begin{itemize}\setlength\itemsep{-0.15em}
    \item Observables $H$ and  $(Q_i)_{i\in [c]}$
    \item Constraint values $(q_i)_{i\in[c]}$
    \item Smoothness parameter $L$
    \item Hilbert space dimension $d$
    \item Desired error $\varepsilon > 0$
    \item Radius $r$: An upper bound on $\|\mu^*\|$, where $\mu^*$ is the optimal solution to~\eqref{eq:dual-free-energy} for $T=\frac{\varepsilon}{4\ln d}$
\end{itemize}
\STATE Set $T \leftarrow \frac{\varepsilon}{4 \ln d}$
\STATE Initialize $\mu^0 \leftarrow (0, \ldots, 0)$
\STATE Set learning rate $\eta \in (0, \sfrac{1}{L}]$
\STATE Choose $M = \left\lceil\sfrac{L r^2}{\varepsilon}\right\rceil$
\FOR{$m = 1$ to $M$}
    \STATE $\mu^m \leftarrow \mu^{m-1} + \eta \left(q-\langle Q \rangle_{\rho_T(\mu^{m-1})} \right)$
\ENDFOR
\RETURN $\mu^M\cdot
q + \langle H  - \mu^M\cdot
 Q\rangle _{\rho_{T}(\mu^M)}$
\end{algorithmic}
\end{algorithm}

In Algorithm~\ref{alg:basic-grad-asc-free-energy}, we have employed the notation $q-\langle Q\rangle _{\rho_{T}(\mu^M)}$, which is a shorthand for the $c$-dimensional vector with its $i$th component given by $q_i-\langle Q_i\rangle _{\rho_{T}(\mu^M)}$, for all $i \in [c]$.

The equality in~\eqref{eq:dual-free-energy}, the inequality
in~\eqref{eq:energy-free-energy-bounds}, and~\cite[Corollary~3.5]{garrigos2024handbookconvergencetheoremsstochastic} guarantee that the output
\begin{align}
\tilde{E} & \equiv \langle H \rangle_{\rho_T(\mu^M)} + \mu^M\cdot
(q-\langle Q\rangle _{\rho_{T}(\mu^M)})    \\
& = \mu^M\cdot
q + \langle H  - \mu^M\cdot
 Q\rangle _{\rho_{T}(\mu^M)}
\end{align}
of
Algorithm~\ref{alg:basic-grad-asc-free-energy} is an $\varepsilon$-approximate
estimate of $E(\mathcal{Q},q)$. There are three sources of error in
Algorithm~\ref{alg:basic-grad-asc-free-energy}:\ the error from approximating
the minimum energy $E(\mathcal{Q},q)$ by the minimum free energy
$F_{T}(\mathcal{Q},q)$ at temperature~$T$, the error from gradient
ascent arriving at an approximate global minimum of the free energy, and the
error from outputting $\tilde{E}$
instead of
\begin{align}
&   \langle H \rangle_{\rho_T(\mu^M)} -TS(\rho_{T}
(\mu^{M})) + \mu^M\cdot
(q-\langle Q\rangle _{\rho_{T}(\mu^M)})  \nonumber \\
& =  \mu^M\cdot q-T\ln
Z_{T}(\mu^M). \label{eq:f-mu-to-free-energy-exp}
\end{align}
See \eqref{eq:log-part-back-to-free-energy-1}--\eqref{eq:log-part-back-to-free-energy-2} for a proof of \eqref{eq:f-mu-to-free-energy-exp}. Outputting $\tilde{E}$ is
simpler, as this expression involves just matrix traces, and it is more
amenable to generalizing Algorithm~\ref{alg:basic-grad-asc-free-energy} to an HQC algorithm, as done later on in Section~\ref{sec:q-algs-SDPs}. See
Appendix~\ref{app:alg-error-analysis}\ for a detailed error analysis of
Algorithm~\ref{alg:basic-grad-asc-free-energy}.

The starting point for Algorithm~\ref{alg:basic-grad-asc-free-energy} is
$\mu^{0}=\left(  0,\ldots,0\right)  $, which corresponds to setting the
initial state to be $\rho_{T}(0)=\frac{e^{-H/T}}{\operatorname{Tr}[e^{-H/T}]}
$, i.e., a thermal state of the Hamiltonian $H$ at temperature $T$. Intuitively, Algorithm~\ref{alg:basic-grad-asc-free-energy} proceeds by
increasing the magnitude of the $i$th chemical potential $\mu_{i}$ if the
$i$th constraint is violated as follows: $q_{i}>\left\langle Q_{i}\right\rangle $. If instead $q_{i}<\left\langle Q_{i}\right\rangle $, the algorithm proceeds to decrease the chemical potential $\mu_i$. When $q_{i}>\left\langle Q_{i}\right\rangle $ and we increase the chemical potential $\mu_i$, this has the thermodynamical interpretation of making the system more thermodynamically favorable to gaining charges from the reservoir and thus they are added to the system. The system responds by increasing $\left\langle Q_{i}\right\rangle$, which thus guides $\left\langle Q_{i}\right\rangle$ to be closer to $q_i$. Similarly, if $q_{i}<\left\langle Q_{i}\right\rangle $ and we decrease the chemical potential $\mu_i$ in the gradient ascent step, the system responds by discouraging more charge addition, hence decreasing $\left\langle Q_{i}\right\rangle$ to be closer to $q_i$. This process continues until there is a balance between the charges entering the system and those exiting, and charge conservation of the system, with expected charge being $q_i$, is reached. The
final state is then guaranteed to satisfy the constraints approximately while
minimizing the energy $\left\langle H\right\rangle $ approximately, with the
quality of the approximation guaranteed by the choice of the temperature $T$.

The main missing piece needed for understanding the convergence of
Algorithm~\ref{alg:basic-grad-asc-free-energy} is the smoothness parameter
$L$, which we address in Section~\ref{sec:algorithm-analysis}. Additionally,
one should have an upper bound $r$ on $\left\Vert \mu^{\ast}\right\Vert $ (i.e.,
amounting to a rough guess of $\mu^{\ast}$ or the size of the search space) in order to know the number of
steps, $M$, that are required for
Algorithm~\ref{alg:basic-grad-asc-free-energy} to converge. See
\eqref{eq:number-of-steps-bound-smoothness-1}--\eqref{eq:number-of-steps-bound-smoothness}
in Section~\ref{sec:algorithm-analysis} for a precise lower bound on the
number of steps, $M$, required by
Algorithm~\ref{alg:basic-grad-asc-free-energy} to reach an $\varepsilon$-approximate
estimate of~$E(\mathcal{Q},q)$.

Before moving on, let us remark here that
Algorithm~\ref{alg:basic-grad-asc-free-energy} serves as the basic blueprint
for an HQC algorithm for minimizing the free energy, as well as for
classical and HQC algorithms for solving general SDPs.

\section{Analysis of the smoothness parameter in
Algorithm~\ref{alg:basic-grad-asc-free-energy}}

\label{sec:algorithm-analysis}

As stated at the end of Section~\ref{sec:solution-gradient-ascent}, the main missing piece for establishing convergence of
Algorithm~\ref{alg:basic-grad-asc-free-energy} is the smoothness parameter
$L\geq0$. This parameter is defined to be a Lipschitz constant for the
gradient, i.e., a constant $L$ such that the following inequality
holds~\cite[Definition~2.24]{garrigos2024handbookconvergencetheoremsstochastic}:
\begin{equation}
\left\Vert \nabla_{\mu}f(\nu_{1})-\nabla_{\mu}f(\nu_{2})\right\Vert \leq
L\left\Vert \nu_{1}-\nu_{2}\right\Vert \quad\forall\nu_{1},\nu_{2}
\in\mathbb{R}^c,
\end{equation}
where, for our problem,
\begin{equation}
\label{eq:obj_fun}
f(\mu)\coloneqq\mu\cdot q-T\ln Z_{T}(\mu).
\end{equation}
As such, the smoothness parameter $L$ controls the maximum amount that the
gradient $\nabla_{\mu}f(\mu)$ can change when changing $\mu$ during the
gradient ascent search in Algorithm~\ref{alg:basic-grad-asc-free-energy}, and
thus it plays a role in determining how large of a stepsize one should take
when performing gradient ascent. Equivalently, for a smooth function, the
smoothness parameter $L$ can be taken as an upper bound on the largest
singular value of the Hessian matrix of $f(\mu)$, the latter defined in terms
of its matrix elements as $\frac{\partial^{2}}{\partial\mu_{i}\partial\mu_{j}
}f(\mu)$ \cite[Lemma~2.26]{garrigos2024handbookconvergencetheoremsstochastic}. As such, one can then resort to bounding the largest singular value
of the Hessian in order to determine the performance of gradient ascent.

Interestingly, Jaynes also identified a formula for the matrix elements of the
Hessian of $f(\mu)$~\cite[page~200]{Jaynes1962}, and the negative of this formula is now
known as the Kubo--Mori information matrix of the parameterized family
$\left(  \rho_{T}(\mu)\right)  _{\mu\in\mathbb{R}^{c}}$, a concept from
quantum information geometry
\cite{Bengtsson2006,Liu2019,Sidhu2020,Jarzyna2020,Meyer2021fisherinformationin,sbahi2022provablyefficientvariationalgenerative}. Specifically, Jaynes established the following formula for all
$i,j\in\left[  c\right]  $:
\begin{align}
\frac{\partial^{2}}{\partial\mu_{i}\partial\mu_{j}}f(\mu)  &  =-I_{ij}
^{\operatorname{KM}}(\mu),\label{eq:hess_-KM}\\
I_{ij}^{\operatorname{KM}}(\mu)  &  =\frac{1}{T}\int_{0}^{1}
ds\ \operatorname{Tr}[\rho_{T}(\mu)^{s}Q_{i}\rho_{T}(\mu)^{1-s}Q_{j}
]\nonumber\\
&  \qquad-\frac{1}{T}\left\langle Q_{i}\right\rangle _{\rho_{T}(\mu
)}\left\langle Q_{j}\right\rangle _{\rho_{T}(\mu)}.\label{eq:hessian_via_KM}
\end{align}
By applying the triangle inequality and a multivariate generalization of
H\"{o}lder's inequality, the following upper bound holds on the magnitude of
the matrix elements of the Hessian for all $i,j\in\left[  c\right]  $:
\begin{equation}
\left\vert \frac{\partial^{2}}{\partial\mu_{i}\partial\mu_{j}}f(\mu
)\right\vert \leq\frac{2}{T}\left\Vert Q_{i}\right\Vert \left\Vert
Q_{j}\right\Vert . \label{eq:hessian-elements-bound}
\end{equation}
The value $\frac{2}{T}
\sum_{i \in[c]}\left\Vert Q_{i}\right\Vert ^{2}$ serves as an upper bound  on the largest singular value
of the Hessian, because the largest singular value is bounded from above by the trace norm of the Hessian and the trace norm is bounded from above by $\frac{2}{T}
\sum_{i \in[c]}\left\Vert Q_{i}\right\Vert ^{2}$ in this case, which implies
that we can set the smoothness parameter $L$ as follows:
\begin{equation}
L=\frac{2}{T}
\sum_{i \in[c]}\left\Vert Q_{i}\right\Vert ^{2}
. \label{eq:smoothness-parameter-choice}
\end{equation}
See Lemmas~\ref{lem:upper-bnd-matrix-elements-hessian} and \ref{lem:hessian-upper-bound} in Appendix~\ref{app:gradient-hessian} for details
of~\eqref{eq:hessian-elements-bound} and \eqref{eq:smoothness-parameter-choice}, respectively.

Now revisiting Algorithm~\ref{alg:basic-grad-asc-free-energy}, it follows from~\eqref{eq:smoothness-parameter-choice} and the error
analysis in Appendix~\ref{app:alg-error-analysis} that Algorithm~\ref{alg:basic-grad-asc-free-energy}
converges to an $\varepsilon$-approximate optimal solution in a number of steps, $M$, which
satisfies
\begin{align}
M    = \left\lceil \frac{Lr^2 }{\varepsilon}\right\rceil 
\label{eq:number-of-steps-bound-smoothness-1}
&  =\left\lceil\left(\frac{2}{T}
\sum_{i \in[c]}\left\Vert Q_{i}\right\Vert ^{2}\right) \frac{r^2
}{\varepsilon} \right\rceil \\
&  = \left\lceil\frac{8 \, r^2 \ln d}{\varepsilon^2} 
\sum_{i \in[c]} \left\Vert Q_{i}\right\Vert ^{2}\right\rceil .
\label{eq:number-of-steps-bound-smoothness}
\end{align}
Furthermore, we can set the step size $\eta$ to satisfy
\begin{align}
    0 < \eta \leq \frac{1}{L} 
     & = \frac{T}{2
\sum_{i \in[c]}\left\Vert Q_{i}\right\Vert ^{2}} \\
& = \frac{\varepsilon}{8 (\ln d) \sum_{i \in[c]}\left\Vert Q_{i}\right\Vert ^{2}}.
\end{align}
This concludes our analysis of Algorithm~\ref{alg:basic-grad-asc-free-energy}.

\section{Semi-definite optimization as energy minimization}

\label{sec:SDP-to-energy-min}

\label{sec:general-SDPs}A standard form for a semi-definite optimization
problem, also called a semi-definite program (SDP),\ is as
follows~\cite[Eq.~(4.51)]{Boyd2004}:
\begin{equation}
\alpha\coloneqq\min_{X\geq0}\left\{  \operatorname{Tr}[CX]:\operatorname{Tr}
[A_{i}X]=b_{i}\ \forall i\in\left[  c\right]  \right\}
,\label{eq:SDP-standard}
\end{equation}
where $C$ is a $d\times d$ Hermitian matrix, so is each $A_{i}$, and $b_{i}
\in\mathbb{R}$ for all $i\in\left[  c\right]  $ (see also
\cite{wolkowicz2012handbook}).

Clearly, if one of the constraints in~\eqref{eq:SDP-standard} is
$\operatorname{Tr}[X]=1$ (i.e., corresponding to $A_{i}=I$ and $b_{i}=1$ for
some $i\in\left[  c\right]  $), then the SDP in
\eqref{eq:SDP-standard} is identical to the energy minimization problem in
\eqref{eq:energy-minimization}, with the identifications
\begin{equation}
C\leftrightarrow H, \quad A_{i}\leftrightarrow Q_{i}, \quad b_{i}
\leftrightarrow q_{i} \quad\forall i\in\left[  c\right]  .
\label{eq:identify-SDP-energy-min}
\end{equation}
In this case, one can then solve the SDP\ in~\eqref{eq:SDP-standard} up to
$\varepsilon$~error by means of Algorithm~\ref{alg:basic-grad-asc-free-energy}, using the same number of steps given in~\eqref{eq:number-of-steps-bound-smoothness-1}--\eqref{eq:number-of-steps-bound-smoothness}.

In the case that the constraint $\operatorname{Tr}[X]=1$ is not present, one
can make a few adjustments to the SDP\ in~\eqref{eq:SDP-standard} to reduce it
to an energy minimization problem of the form
in~\eqref{eq:energy-minimization}, as previously detailed in~\cite[Footnote~3]
{Brandao2019}. To begin with, let us define the following modified SDP:
\begin{equation}
\alpha_{R}\coloneqq\min_{X\geq0}\left\{  \operatorname{Tr}
[CX]:\operatorname{Tr}[X]\leq R,\ \operatorname{Tr}[A_{i}X]=b_{i}\ \forall
i\in\left[  c\right]  \right\}  , \label{eq:R-SDP}
\end{equation}
where $R>0$. Due to the fact that this is an extra constraint, the following
inequality holds
\begin{equation}
\alpha_{R}\geq\alpha
\end{equation}
and it is saturated in the limit as $R\rightarrow\infty$:
\begin{equation}
\lim_{R\rightarrow\infty}\alpha_{R}=\alpha.
\end{equation}
Here, $R$ represents a guess on the trace of an optimal solution
to~\eqref{eq:SDP-standard} and is similar in spirit to the guess $r$ on the norm
$\left\|  \mu^{*} \right\|  $ of the optimal dual variable $\mu^{*}$, which is
needed to determine the precise number of steps required for gradient ascent
(see discussion after Algorithm~\ref{alg:basic-grad-asc-free-energy}).

The following lemma establishes a reduction of the SDP\ optimization task in
\eqref{eq:R-SDP} to the energy minimization problem in~\eqref{eq:energy-minimization}.

\begin{lemma}
[\cite{Brandao2019}]\label{lem:reduction-SDP-to-energy-min} The following
equality holds:
\begin{equation}
\alpha_{R}=R\cdot\min_{\rho\in\mathcal{D}_{d+1}}\left\{  \left\langle
C^{\prime}\right\rangle _{\rho}:\left\langle A_{i}^{\prime}\right\rangle
_{\rho}=b_{i}^{\prime}\ \forall i\in\left[  c\right]  \right\}  ,
\label{eq:reduction-to-states}
\end{equation}
where
\begin{equation}
C^{\prime} \coloneqq C\oplus\left[  0\right]  ,\ A_{i}^{\prime}
\coloneqq A_{i}\oplus\left[  0\right]  ,\ b_{i}^{\prime} \coloneqq \frac
{b_{i}}{R} \quad\forall i\in\left[  c\right]  .
\end{equation}

\end{lemma}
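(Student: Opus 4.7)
The plan is to prove the claimed equality by establishing matching inequalities via two explicit feasibility-preserving reductions, one in each direction. Since the constraints in \eqref{eq:reduction-to-states} involve only $C' = C\oplus [0]$ and $A_i' = A_i \oplus [0]$, which are supported on the first $d$ diagonal entries, the last row/column of any $(d+1)\times(d+1)$ state is effectively a free slack variable; this is the whole mechanism that lets $\operatorname{Tr}[X]\leq R$ be encoded as a trace-one constraint in one higher dimension.

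For the direction $\alpha_R \geq R\cdot \min_\rho\{\cdots\}$, take any feasible $X$ for \eqref{eq:R-SDP} (so $X\geq 0$, $\operatorname{Tr}[X]\leq R$, $\operatorname{Tr}[A_i X] = b_i$) and set
\begin{equation}
\rho \coloneqq \tfrac{1}{R}\bigl( X \oplus [\,R - \operatorname{Tr}[X]\,] \bigr).
\end{equation}
Then $\rho \geq 0$ because both blocks are nonnegative, and $\operatorname{Tr}[\rho] = \tfrac{1}{R}(\operatorname{Tr}[X] + R - \operatorname{Tr}[X]) = 1$, so $\rho \in \mathcal{D}_{d+1}$. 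Since $A_i'$ and $C'$ both vanish on the bottom-right entry, one gets $\langle A_i'\rangle_\rho = \tfrac{1}{R}\operatorname{Tr}[A_i X] = b_i/R = b_i'$ and $R\langle C'\rangle_\rho = \operatorname{Tr}[C X]$. Hence $\rho$ is feasible for the right-hand problem in \eqref{eq:reduction-to-states} with the same (scaled) objective value, giving the desired inequality after infimizing over $X$.

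For the reverse direction $\alpha_R \leq R\cdot \min_\rho\{\cdots\}$, take any feasible $\rho \in \mathcal{D}_{d+1}$ and write it in block form
\begin{equation}
\rho = \begin{pmatrix} \sigma & v \\ v^\dagger & p \end{pmatrix},
\end{equation}
where $\sigma$ is the $d\times d$ principal submatrix, $p\in[0,1]$, and $\operatorname{Tr}[\sigma] + p = 1$. Since $\sigma$ is a principal submatrix of a positive semi-definite matrix, $\sigma \geq 0$. Define $X \coloneqq R\sigma$. Then $X\geq 0$ and $\operatorname{Tr}[X] = R(1-p) \leq R$. Because $A_i'$ and $C'$ act only on the top-left block, $\operatorname{Tr}[A_i X] = R\operatorname{Tr}[A_i \sigma] = R\langle A_i'\rangle_\rho = R b_i' = b_i$, and $\operatorname{Tr}[C X] = R \langle C'\rangle_\rho$. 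Thus $X$ is feasible for \eqref{eq:R-SDP} with objective value exactly $R\langle C'\rangle_\rho$, which yields the matching inequality.

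Combining the two inequalities proves the lemma. There is no real obstacle here; the one point worth emphasizing in the write-up is that the off-diagonal block $v$ of $\rho$ is completely immaterial to both the objective and the constraints on the right-hand side, which is precisely why the dimension-lifting trick works and why the equality (rather than merely an inequality) holds.
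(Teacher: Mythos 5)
Your proof is correct and uses the same dimension-lifting mechanism as the paper's (slack variable for the trace inequality, rescaling by $R$, and embedding into a $(d+1)$-dimensional density matrix with $C'=C\oplus[0]$, $A_i'=A_i\oplus[0]$). The only difference is presentational: you verify the two inequalities separately and explicitly handle the reverse direction (extracting a feasible $X$ from an arbitrary feasible $\rho$, noting the off-diagonal block is immaterial), whereas the paper argues via a chain of substitutions and leaves that direction implicit.
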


\begin{proof}
See Appendix~\ref{app:reduction-SDP-to-energy-min}.
\end{proof}

\medskip

Given the reduction from Lemma~\ref{lem:reduction-SDP-to-energy-min}, it is
thus possible to solve SDPs of the general form in~\eqref{eq:SDP-standard} by
means of Algorithm~\ref{alg:basic-grad-asc-free-energy}.  Under the reduction
from Lemma~\ref{lem:reduction-SDP-to-energy-min}, however, it is necessary to
solve the energy minimization problem in~\eqref{eq:reduction-to-states} to
error $\sfrac{\varepsilon}{R}$ in order for the overall error in approximating
$\alpha_{R}$ to be $\varepsilon$. As such, the number of steps, $M$, required
by Algorithm~\ref{alg:basic-grad-asc-free-energy} then should satisfy the
following:
\begin{equation}
M = \left\lceil 8 \left(\frac{ R r }{\varepsilon}\right)^2 \ln (d+1)
\sum_{i \in[c]} \left\Vert A_{i}\right\Vert ^{2}\right\rceil ,
\end{equation}
which follows from the substitutions $\varepsilon\to\sfrac{\varepsilon}{R}$ and $d \to
d+1$ in~\eqref{eq:number-of-steps-bound-smoothness} and where we used the fact
that $\left\Vert A_{i}\right\Vert = \left\Vert A_{i}^{\prime}\right\Vert $.
Additionally, $r$ is an upper bound on $\left \| \mu^*\right\|$, where $\mu^{*}$ is the optimal solution to the dual problem
in~\eqref{eq:dual-free-energy}, at temperature $T = \frac{\varepsilon}
{4 R\ln(d+1)}$
and under the identifications in~\eqref{eq:identify-SDP-energy-min}. See Figure~\ref{fig:general-SDP} for a concise depiction of the proposed method for solving general SDPs.

\section{Hybrid quantum--classical algorithms for energy minimization and
semi-definite optimization}

\label{sec:q-algs-SDPs}

\subsection{Overview}

Building on the developments in the previous sections,
we now put forward hybrid quantum--classical (HQC) algorithms for energy minimization and
semi-definite optimization. An HQC algorithm for energy minimization is
simpler than that for a general SDP because
Algorithm~\ref{alg:basic-grad-asc-free-energy} maps over essentially directly,
with only two key modifications needed for execution on a quantum computer:

\begin{enumerate}
\item Each element of $\mathcal{Q}$ in~\eqref{eq:observable-tuple} corresponds
to an observable that can be measured on a quantum computer.

\item The parameterized thermal state $\rho_{T}(\mu)$ in~\eqref{eq:grand-canon-thermal-state} can be prepared on a
quantum computer.
\end{enumerate}

\noindent These requirements were already identified in~\cite{Brandao2017},
and these or related requirements have played a key role in several quantum
algorithms proposed for semi-definite optimization
\cite{Brandao2019,Apeldoorn2019,vanApeldoorn2020quantumsdpsolvers,watts2023quantum}. These requirements are also the same as those employed in various quantum
Boltzmann machine learning algorithms~\cite{Anshu2021,Coopmans2024}.

An important aspect of our HQC algorithm for energy minimization is that
the quantum computer is used exclusively for estimating each expectation
$\langle Q_{i}\rangle_{\rho_{T}(\mu)}$, for all $i \in[c]$, during the
gradient update and final steps and then for estimating the expectation $\langle
H - \mu\cdot Q\rangle_{\rho_{T}(\mu)}$ at the final step of
Algorithm~\ref{alg:basic-grad-asc-free-energy}. This requires the ability to
repeatedly prepare the parameterized thermal state $\rho_{T}(\mu)$ and to
measure the observables in $\mathcal{Q}$. As such, it is a variational quantum
algorithm~\cite{Cerezo2021vqa,bharti2021noisy}, i.e., employing a hybrid quantum-classical approach. 
However, in
contrast to many existing variational quantum algorithms,
Algorithm~\ref{alg:basic-grad-asc-free-energy} is guaranteed to converge to an
approximate globally optimal solution.

\subsection{Setup and assumptions}

In what follows, we assume that $d=2^{n}$, so that each observable in
$\mathcal{Q}$ acts on $n$ qubits. We also assume that each
observable in $\mathcal{Q}$ defined in~\eqref{eq:observable-tuple} is efficiently measurable on a quantum computer.
This is the case, for example, if each observable in $\mathcal{Q}$  can
be written as a succinct linear combination of tensor products of Pauli matrices, i.e., if $H$ and every $Q_i$ can be written as
\begin{align}
    H & = \sum_{\vec{\jmath}} h_{\vec{\jmath}} \, \sigma_{\vec{\jmath}} ,\label{eq:H_pauli}\\
    Q_i & = \sum_{\vec{\jmath}} a_{i, \vec{\jmath}} \, \sigma_{\vec{\jmath}} \label{eq:Q_pauli},
\end{align}
for which the number of non-zero terms in each sum is
polynomial in $n$. In the above, we have adopted the following notation,
including the multi-index
\begin{equation}
\vec{\jmath} \coloneqq(j_1, \dots, j_n) \in \{0,1,2,3\}^n,
\end{equation}
with Pauli matrices
$
\sigma_{0} \coloneqq I, \quad  \sigma_1 \coloneqq X, \quad \sigma_2 \coloneqq Y, \quad \sigma_3\coloneqq Z,
$
Pauli string
\begin{equation}
\sigma_{\vec\jmath}\coloneqq 
\sigma_{j_{1}}\otimes\sigma_{j_{2}}\otimes\cdots\otimes\sigma_{j_{n}},
\end{equation}
and coefficients $h_{\vec\jmath},a_{i,\vec\jmath}\in\mathbb R$.

By absorbing any sign into $\sigma_{\vec\jmath}$, we assume, without loss of generality, that all coefficients are non-negative. 
Furthermore,  the triangle inequality for the operator norm and every Pauli string having operator norm equal to one imply that
\begin{align}
    \left\Vert H \right\Vert & \leq \sum_{\vec{\jmath}} \left\vert h_{\vec{\jmath}} \right\vert
\left\Vert \sigma_{\vec{\jmath}} \right\Vert = \sum_{\vec{\jmath}} \left\vert h_{\vec{\jmath}}\right\vert
    \eqqcolon \left\Vert h \right\Vert _{1},\\
\left\Vert Q_i \right\Vert & \leq \sum_{\vec{\jmath}} \left\vert a_{i, \vec{\jmath}} \right\vert
\left\Vert \sigma_{\vec{\jmath}} \right\Vert = \sum_{\vec{\jmath}} \left\vert a_{i, \vec{\jmath}}\right\vert
\eqqcolon \left\Vert a_i \right\Vert _{1}.\label{eq:upper_bound_Q_norm}
\end{align}
Thus, each observable $H$ and $Q_i$ being efficiently measurable on a quantum computer is a consequence of $\left\Vert h \right\Vert _{1}$ and $\left\Vert a_i \right\Vert _{1}$ being polynomial in $n$; i.e., $\left\Vert h \right\Vert _{1} = O(\operatorname{poly}(n))$ and $\left\Vert a_i \right\Vert _{1} = O(\operatorname{poly}(n))$, for all $i \in [c]$. 

\subsection{Approach}

\label{sec:SGA_approach}

Now we present a stochastic gradient-based approach for solving the energy minimization problem in~\eqref{eq:energy-minimization}. We do so, as before, by appealing to the related chemical potential maximization in~\eqref{eq:dual-free-energy}. The details are presented in Algorithm~\ref{alg:energy-min} listed below. At a high level,  Algorithm~\ref{alg:energy-min} proceeds similarly to Algorithm~\ref{alg:basic-grad-asc-free-energy}. The main distinction is that Algorithm~\ref{alg:energy-min} employs stochastic gradient ascent rather than gradient ascent, which means that calculations of $\langle Q_i \rangle_{\rho_T(\mu^{m-1})}$ and $\langle H  - \mu^M\cdot
 Q\rangle _{\rho_{T}(\mu^M)}$ are replaced with respective estimates $\tilde{Q}_i$ and $\tilde{G}$ of these quantities obtained by a quantum computer. These estimates are realized by calling the subroutine $\mathtt{estimate\_obs}$ in steps~7 and 14 of Algorithm~\ref{alg:energy-min}, respectively. See Algorithm~\ref{alg:est_exp_Q} in Appendix~\ref{app:grad_est} for details of this subroutine. Other changes involve using the projection $\Pi_{\mathcal{X}}$ in step~10, the  choice of $\overline{\mu^M}$ in steps~12--15 rather than $\mu^M$, and choices for the learning rate $\eta$ and number of iterations, $M$, in~\eqref{eq:stepsize-SGA} and~\eqref{eq:num_iter_SGA}, respectively. These last changes are made in order to obtain a convergence guarantee for Algorithm~\ref{alg:energy-min}, by invoking a known result on the convergence of stochastic gradient ascent~\cite[Theorem~6.3]{Bubeck2015}.

 Our HQC algorithm for energy minimization is presented as Algorithm~\ref{alg:energy-min} below:

\begin{algorithm}[H]
\caption{$\mathtt{minimize\_energy}(H, (Q_i)_i, (q_i)_i, d, \varepsilon, \delta,r)$}\label{alg:energy-min}
\begin{algorithmic}[1]
\STATE \textbf{Input:}
\begin{itemize}\setlength\itemsep{-0.15em}
    \item Observables $H$ and  $(Q_i)_{i\in [c]}$ (as given in~\eqref{eq:H_pauli}--\eqref{eq:Q_pauli})
    \item Constraint values $(q_i)_{i\in[c]}$
    \item Hilbert space dimension $d$
    \item Accuracy $\varepsilon > 0$
    \item Error probability $\delta\in(0,1)$
    \item Radius $r$: An upper bound on $\|\mu^*\|$, where $\mu^*$ is the optimal solution to~\eqref{eq:dual-free-energy} for $T=\frac{\varepsilon}{4\ln d}$
\end{itemize}

\STATE Initialize $\mu^0 \leftarrow (0, \ldots, 0)$
\STATE Set learning rate $\eta$ as in~\eqref{eq:stepsize-SGA}
\STATE Set number of iterations, $M$, as in~\eqref{eq:num_iter_SGA}
\FOR{$m = 1$ to $M$}
\FOR{$i = 1$ to $c$}
  \STATE $\tilde{Q}_i \leftarrow \mathtt{estimate\_obs}(\mu^{m-1},  (a_{i,\vec\jmath})_{\vec\jmath}, \varepsilon, \delta)$
  \STATE $\overline{g}_i(\mu^{m-1}) \leftarrow q_i - \tilde{Q}_i $ 
  \ENDFOR
  \STATE Update: $\mu^{m} \leftarrow \Pi_{\mathcal{X}} (\mu^{m-1} + \eta \, \overline{g}(\mu^{m-1}))$
\ENDFOR
\STATE Set $\overline{\mu^M} \leftarrow \frac{1}{M}\sum_{m=1}^M \mu^m$
\STATE Set $g_{\vec\jmath} \leftarrow h_{\vec\jmath} - \sum_{i\in[c]}\left[\overline{\mu^M} \right]_i a_{i,\vec\jmath}$, for all $\vec\jmath$
\STATE  $\tilde{G} \leftarrow \mathtt{estimate\_obs}(\overline{\mu^M}, ( g_{\vec\jmath})_{\vec\jmath},  \sfrac{\varepsilon}{4}, \delta)$
\RETURN Output $\overline{\mu^M} \cdot q + \tilde{G}  $

\end{algorithmic}
\end{algorithm}

Let us now discuss various aspects of Algorithm~\ref{alg:energy-min} in more detail. 
Since the objective function $f(\mu)$ defined in~\eqref{eq:obj_fun} is concave (see Lemma~\ref{lem:concavity}), gradient-based methods are guaranteed to converge to the global optimum. However, evaluating the exact gradient $\nabla_\mu f(\mu)$ requires perfect estimates of the expectation values $\left\langle Q_{i} \right\rangle_{\rho_{T}(\mu)}$, and this is not possible when sampling a finite number of times. To address this issue, we employ a projected stochastic gradient ascent algorithm (abbreviated as SGA), in which the gradient is estimated using measurements on quantum states, introducing inherent randomness. See~\cite[Section~3 and page~331]{Bubeck2015} for background on projected stochastic gradient ascent. As we note below, due to the concavity of $f(\mu)$, SGA converges to an 
$\varepsilon$-approximate optimal solution.

The canonical SGA algorithm solves the chemical potential maximization problem by iteratively updating the chemical potential parameter vector $\mu$ as follows:
\begin{equation}\label{eq:SGA}
    \mu^{m+1} = \Pi_{\mathcal{X}}(\mu^m  + \eta \overline{g}(\mu^m)),
\end{equation}
where the superscripts $m+1$ and $m$ on $\mu$ indicate the iteration of the algorithm, $\eta>0$ is the learning rate, and $\overline{g}(\mu^{m})$ is the
stochastic gradient evaluated at $\mu^{m}$. Additionally,  $\mathcal{X} \subset \mathbb{R}^c$ denotes a Euclidean ball of radius~$r$ that contains the optimal solution $\mu^*$  to~\eqref{eq:dual-free-energy} for $T=\frac{\varepsilon}{4\ln d}$, so that
\begin{equation}
    r \geq \left \| \mu^* \right\|.
\end{equation}
Similar to how it was with Algorithm~\ref{alg:basic-grad-asc-free-energy}, we need to have a guess on the size of the search space in $\mathbb{R}^c$ (as given by $r$), and we restrict the search to this space. Also, $\Pi_{\mathcal{X}}$ is a Euclidean projection operation that projects the update $\mu^m  + \eta \overline{g}(\mu^m)$ back to $\mathcal{X}$ if it falls outside of $\mathcal{X}$. Formally, it is defined as
\begin{equation}
    \Pi_{\mathcal{X}}(y)\coloneqq \operatorname{argmin}_{x\in\mathcal{X}
}\left\Vert y-x\right\Vert _{2},
\end{equation}
and it is easy to perform this optimization.
The stochastic
gradient~$\overline{g}(\mu)$ should be unbiased, meaning that
\begin{equation}
\mathbb{E}[\overline{g}(\mu)]=\nabla_{\mu}f(\mu) = q-\left\langle Q\right\rangle _{\rho_{T}(\mu)},
\label{eq:unbiased-condition}
\end{equation}
for all $\mu
\in\mathbb{R}^{c}$, where the expectation is over all the randomness
associated with the generation of~$\overline{g}(\mu)$. Also, the notation $q-\left\langle Q\right\rangle _{\rho_{T}(\mu)}$ is a shorthand for the vector with $i$th component given by $q_{i}-\left\langle Q_{i}\right\rangle _{\rho_{T}(\mu)}$, for all $i\in[c]$.

The gradient of $f(\mu)$ is given in~\eqref{eq:gradient-formula} and \eqref{eq:unbiased-condition}; its elements are as follows:
\begin{equation}
\frac{\partial f(\mu)}{\partial\mu_{i}} = q_{i}-\left\langle Q_{i}\right\rangle _{\rho_{T}(\mu)}.
\end{equation}
To estimate the partial derivative $\frac{\partial f(\mu)}{\partial\mu_{i}}$, we prepare the state $\rho_T (\mu)$ and then measure the observable $Q_{i}$. When $Q_i$ is expressed as a linear combination of Pauli strings, as in~\eqref{eq:Q_pauli}, the measurement can be performed using standard techniques. Through repetition, the estimate of $\left\langle Q_{i}\right\rangle_{\rho_{T}(\mu)} $ can be made as
accurate as desired. This procedure is a subroutine called $\mathtt{estimate\_obs}$ and is described in detail in Appendix~\ref{app:grad_est} as Algorithm~\ref{alg:est_exp_Q}, the result of which is that $O(\left\Vert
a_i \right\Vert _{1}^{2}\varepsilon^{-2}\ln\delta^{-1})$ samples of
$\rho_T(\mu)$ are required to estimate $\left\langle Q_{i}\right\rangle_{\rho_{T}(\mu)} $ with an accuracy of $\varepsilon>0$ and a
failure probability of $\delta\in\left(  0,1\right)  $ (see Appendix~\ref{app:sample_complexity_gradient} for further details).

Finally, we set the number of iterations, $M$, and stepsize $\eta$ of SGA as follows:
\begin{align}
\eta & \coloneqq \left[  \frac{8\ln d}{\varepsilon}
\sum_{i\in\left[  c\right]  }\left\Vert a_{i}\right\Vert _{1}^2+\frac{\sigma}
{r}\sqrt{\frac{M}{2}}\right]  ^{-1},\label{eq:stepsize-SGA}\\
M& \coloneqq \left\lceil \frac{16r^2}{\varepsilon^{2}}\left(  2\sigma^{2}+8 ( \ln
d)  \sum_{i\in\left[  c\right]  }\left\Vert a_{i}\right\Vert
_{1}^2\right)  \right\rceil ,\label{eq:num_iter_SGA}\\
\sigma^2 & \coloneqq c\varepsilon
^{2}+\delta\sum_{i\in\left[  c\right]  }\left\Vert a_{i}\right\Vert _{1}^{2}.
\end{align}
These choices are made to guarantee convergence of Algorithm~\ref{alg:energy-min} and are justified in Appendix~\ref{app:sample-comp-SGD}.

\subsection{Convergence of hybrid quantum--classical algorithm for energy minimization}

By invoking~\cite[Theorem~6.3]{Bubeck2015} and further analysis in Appendix~\ref{app:sample-comp-SGD}, we conclude that SGA  converges to an $\varepsilon$-approximate globally optimum solution with a sample complexity given in~\eqref{eq:main-sample-complexity}. That is, our  convergence guarantee for Algorithm~\ref{alg:energy-min} and the energy minimization problem is as follows:

\begin{theorem}
\label{thm:main}
Let $H$ and $Q_{i}$ for each
$i\in\lbrack c]$, $c\in\mathbb{N}$, be a Hamiltonian and observables as
defined in~\eqref{eq:H_pauli} and~\eqref{eq:Q_pauli} respectively, and let $h$ and
$a_{i}$ be the Pauli-coefficient vectors of $H$ and $Q_{i}$, respectively. Let $\varepsilon>0$ and $ \delta\in (0,1)$. Then
to reach, in expectation, an
$\varepsilon$-approximate estimate $\hat{E}$ of the optimal value
in~\eqref{eq:energy-minimization} with success probability $\geq1-\delta$, i.e., such that
\begin{equation}
\Pr\!\left[\left\vert E(\mathcal{Q},q)-\mathbb{E}\!\left[\hat{E}\right]\right\vert \leq\varepsilon\right] \geq 1-\delta ,
\label{eq:SGA-guarantee-main-text}
\end{equation}
the sample complexity (the number of thermal-state samples  of the form in~\eqref{eq:grand-canon-thermal-state}) of Algorithm~\ref{alg:energy-min} is given by 
\begin{equation}
    O\!\left (\frac{r^2 \ln\!\left(\sfrac{1}{\delta}\right)\ln d}{\varepsilon^4}  \left[\max\!\left\{\sum\nolimits_{i\in[c]} \left \Vert a_i \right \Vert_1^2 , \left \Vert h \right \Vert_1 \right\}\right]^2\right).
    \label{eq:main-sample-complexity}
\end{equation}
In~\eqref{eq:SGA-guarantee-main-text}, the expectation $\mathbb{E}$ is with respect to the $M$ steps of SGA updates, and the probability $\Pr$ is with respect to the final step in estimating
\begin{equation}
    \overline{\mu^{M}}\cdot   q + \langle H - \overline{\mu^{M}}\cdot
Q\rangle _{\rho_{T}(\overline{\mu^{M}})},
\end{equation}
where $\overline{\mu^M} \coloneqq \frac{1}{M}\sum_{m=1}^M \mu^m$.
Additionally, the radius $r$ is such that $r\geq \left \| \mu^*\right\|$, where $\mu^*$ satisfies $f(\mu^{\ast})   =F_{T}(\mathcal{Q},q)$ for $T = \frac{\varepsilon}{4\ln d}$ and $f(\mu)$ defined in~\eqref{eq:obj_fun}.
\end{theorem}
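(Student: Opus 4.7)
The plan is to bound the error $|E(\mathcal{Q},q) - \mathbb{E}[\hat{E}]|$ by decomposing it into four controllable contributions and then count the samples expended across all iterations and the final estimation step. Writing $\hat{E} = \overline{\mu^M}\cdot q + \tilde{G}$, where $\tilde{G}$ estimates $\langle H - \overline{\mu^M}\cdot Q\rangle_{\rho_T(\overline{\mu^M})}$, and recalling that $f(\mu) = \mu\cdot q - T\ln Z_T(\mu)$ satisfies $f(\mu) = \mu\cdot q + \langle H - \mu\cdot Q\rangle_{\rho_T(\mu)} - TS(\rho_T(\mu))$ from \eqref{eq:f-mu-to-free-energy-exp}, I would split
\begin{align}
E(\mathcal{Q},q) - \mathbb{E}[\hat{E}]
&= \bigl[E(\mathcal{Q},q) - F_T(\mathcal{Q},q)\bigr] + \bigl[F_T(\mathcal{Q},q) - \mathbb{E}[f(\overline{\mu^M})]\bigr] \nonumber\\
&\quad + \bigl[\mathbb{E}[f(\overline{\mu^M})] - \mathbb{E}\langle H - \overline{\mu^M}\cdot Q\rangle_{\rho_T(\overline{\mu^M})} - \mathbb{E}[\overline{\mu^M}\cdot q]\bigr] \nonumber\\
&\quad + \bigl[\mathbb{E}\langle H - \overline{\mu^M}\cdot Q\rangle_{\rho_T(\overline{\mu^M})} - \mathbb{E}[\tilde{G}]\bigr].
\end{align}
With $T = \varepsilon/(4\ln d)$, the first bracket is at most $T\ln d = \varepsilon/4$ by \eqref{eq:energy-free-energy-bounds}, and the third bracket equals $-T\mathbb{E}[S(\rho_T(\overline{\mu^M}))]$, whose magnitude is bounded by $T\ln d = \varepsilon/4$ via \eqref{eq:von-neumann-entropy-ineqs}. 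The fourth bracket will be controlled by invoking the subroutine $\mathtt{estimate\_obs}$ at accuracy $\varepsilon/4$ with failure probability $\delta$; this is where the probability statement in \eqref{eq:SGA-guarantee-main-text} enters.

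For the second (and central) bracket, I would apply the standard convergence guarantee for projected stochastic gradient ascent on a concave, $L$-smooth objective with bounded-variance stochastic gradients, e.g.\ \cite[Theorem~6.3]{Bubeck2015}, which under the stepsize of the form
\begin{equation}
\eta = \left[L + \frac{\sigma}{r}\sqrt{\tfrac{M}{2}}\right]^{-1}
\end{equation}
yields
\begin{equation}
\mathbb{E}\bigl[f(\mu^{\ast}) - f(\overline{\mu^M})\bigr] \;\leq\; \frac{Lr^2}{M} + \frac{r\sigma}{\sqrt{M/2}}.
\end{equation}
Here $L$ is the smoothness constant from Section~\ref{sec:algorithm-analysis}, and since $\|Q_i\| \leq \|a_i\|_1$ by \eqref{eq:upper_bound_Q_norm} and $T=\varepsilon/(4\ln d)$, I take $L = (8\ln d/\varepsilon)\sum_i\|a_i\|_1^2$, matching the leading term in \eqref{eq:stepsize-SGA}. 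The variance bound $\sigma^2$ must be established by a separate step: since $\mathtt{estimate\_obs}$ returns each coordinate of the gradient estimate within accuracy $\varepsilon$ with probability $\geq 1-\delta$, and since each coordinate lies in $[-\|a_i\|_1, \|a_i\|_1]$ unconditionally, a case split on the success event gives
\begin{equation}
\mathbb{E}\bigl\|\overline{g}(\mu) - \nabla f(\mu)\bigr\|^2 \;\leq\; c\varepsilon^2 + \delta\sum_{i\in[c]}\|a_i\|_1^2 \;=\; \sigma^2,
\end{equation}
matching the definition in the statement. Substituting this $\sigma^2$ and the choice of $M$ in \eqref{eq:num_iter_SGA} into the convergence inequality and solving makes the second bracket at most $\varepsilon/4$. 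I expect the technical calculation here to be the main obstacle, because one must simultaneously track the bias (from finite-sample estimates of $\langle Q_i\rangle$), the variance across $c$ coordinates, and the interaction between $\eta$, $L$, and $\sigma$ so that all four pieces of the error decomposition combine into the stated $\varepsilon$ accuracy.

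Finally, the sample complexity is obtained by multiplication: by the analysis of $\mathtt{estimate\_obs}$ in Appendix~\ref{app:grad_est}, each coordinate of the gradient at iteration $m$ costs $O(\|a_i\|_1^2\varepsilon^{-2}\ln\delta^{-1})$ thermal-state samples, so one iteration uses $O\bigl((\sum_i\|a_i\|_1^2)\varepsilon^{-2}\ln\delta^{-1}\bigr)$ samples. Multiplying by $M$ in \eqref{eq:num_iter_SGA}, which is $O\bigl(r^2\varepsilon^{-2}\ln d\sum_i\|a_i\|_1^2\bigr)$ to leading order, and adding the final-step estimation cost $O\bigl((\|h\|_1 + r\max_i\|a_i\|_1)^2\varepsilon^{-2}\ln\delta^{-1}\bigr)$, the overall sample count becomes
\begin{equation}
O\!\left(\frac{r^2 \ln(\sfrac{1}{\delta})\ln d}{\varepsilon^4}\bigl[\max\bigl\{{\textstyle\sum_i}\|a_i\|_1^2,\;\|h\|_1\bigr\}\bigr]^2\right),
\end{equation}
after absorbing the subdominant final-estimation cost into the maximum, which gives the bound in \eqref{eq:main-sample-complexity}.
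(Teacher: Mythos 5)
Your proposal is correct and follows essentially the same route as the paper's proof: the same four-term error decomposition with each piece budgeted at $\varepsilon/4$, the same case-split derivation of the variance bound $\sigma^{2}=c\varepsilon^{2}+\delta\sum_{i}\|a_{i}\|_{1}^{2}$, the same invocation of~\cite[Theorem~6.3]{Bubeck2015} with the stated $\eta$ and $M$, and the same accounting $M\cdot\sum_{i}N_{i}$ plus the final estimation cost. The only (inconsequential) slip is in the final-step range bound, where Cauchy--Schwarz gives $|\langle H-\mu\cdot Q\rangle|\leq\|h\|_{1}+r\sqrt{\sum_{i}\|a_{i}\|_{1}^{2}}$ rather than your $\|h\|_{1}+r\max_{i}\|a_{i}\|_{1}$; this term is subdominant either way, so the stated complexity is unaffected.
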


\begin{proof}
    See Appendix~\ref{app:sample-comp-SGD}.
\end{proof}

\medskip
Note that $\overline{\mu^{M}} $ in Theorem~\ref{thm:main} is a random variable, as
specified in Algorithm~\ref{alg:energy-min}, because it is a function of the stochastic gradients $\overline{g}(\mu^{M})$, \ldots, $\overline{g}(\mu^{0})$. Thus, as mentioned above, the expectation in~\eqref{eq:SGA-guarantee-main-text} is
with respect to the randomness associated with generating~$\overline{\mu^{M}}$.

After generating $\overline{\mu^{M}}$, the final step of Algorithm~\ref{alg:energy-min} involves estimating the expectation $\langle H - \overline{\mu^{M}}\cdot
Q\rangle _{\rho_{T}(\overline{\mu^{M}})}$ by Algorithm~\ref{alg:est_exp_Q} in Appendix~\ref{app:grad_est}. There is further randomness involved in this step, and the probability $\Pr$ in \eqref{eq:SGA-guarantee-main-text} is with respect to this randomness. Note that the number of trials needed to arrive at a good estimate in this last step is independent of the particular value of $\overline{\mu^{M}}$.

In summary, Theorem~\ref{thm:main} states that the sample complexity of  Algorithm~\ref{alg:energy-min} is polynomial in $\varepsilon^{-1}$, $c$, $r$, $\left\|h\right\|_1$, and $\left\| a_i\right  \|_1$ for all $i\in [c]$. Additionally, for $c = \operatorname{poly}(n)$, $r=\operatorname{poly}(n)$, $\left\|h\right\|_1 = \operatorname{poly}(n)$, and $\left\|a_i\right\|_1 = \operatorname{poly}(n)$ for all $i\in[c]$, the sample complexity scales polynomially with number of qubits, $n$, and inverse accuracy~$\varepsilon^{-1}$.

\subsection{Extension to semi-definite optimization}

In light of the mapping of a general SDP in~\eqref{eq:SDP-standard} to an energy minimization problem presented in Section~\ref{sec:general-SDPs}, Algorithm~\ref{alg:energy-min} extends to solving SDPs. The following lemma formalizes this reduction, which is more amenable to implementation on a quantum computer than the reduction from Lemma~\ref{lem:reduction-SDP-to-energy-min}:

\begin{lemma}
\label{lem:reduction-SDP-to-energy-min_quantum} The following
equality holds:
\begin{equation}
\alpha_{R}=R\cdot\min_{\rho\in\mathcal{D}_{2d}}\left\{  \left\langle
C^{\prime}\right\rangle _{\rho}:\left\langle A_{i}^{\prime}\right\rangle
_{\rho}=b_{i}^{\prime}\ \forall i\in\left[  c\right]  \right\}  ,
\label{eq:reduction-to-states_quantum}
\end{equation}
where
\begin{equation}
C'\coloneqq C \otimes |0\rangle\!\langle 0|  ,\ A_i' \coloneqq A_i \otimes |0\rangle\!\langle 0|  ,\ b_{i}^{\prime} \coloneqq \frac
{b_{i}}{R} \quad\forall i\in\left[  c\right]  .
\end{equation}

\end{lemma}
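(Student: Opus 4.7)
The plan is to mirror the proof of Lemma~\ref{lem:reduction-SDP-to-energy-min} but with the direct sum $\oplus[0]$ replaced by tensoring with the ancilla projector $|0\rangle\!\langle 0|$. The key observation is that, for any $\rho \in \mathcal{D}_{2d}$, writing the ancilla block decomposition
\begin{equation}
\rho = \sum_{i,j \in \{0,1\}} \rho_{ij} \otimes |i\rangle\!\langle j|,
\end{equation}
with $d \times d$ blocks $\rho_{ij}$, the cyclicity of the partial trace over the ancilla gives $\operatorname{Tr}[(M \otimes |0\rangle\!\langle 0|)\rho] = \operatorname{Tr}[M \rho_{00}]$ for every $d \times d$ Hermitian $M$. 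Thus $\langle C' \rangle_\rho = \operatorname{Tr}[C\rho_{00}]$ and $\langle A_i' \rangle_\rho = \operatorname{Tr}[A_i \rho_{00}]$.

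Next I would establish the two directions of the equality. For the $\geq$ direction, take an optimal $\rho^\star \in \mathcal{D}_{2d}$ for the right-hand side of~\eqref{eq:reduction-to-states_quantum} and set $X \coloneqq R \rho^\star_{00}$. Since $\rho^\star \geq 0$ implies $\rho^\star_{00} \geq 0$, and since $\operatorname{Tr}[\rho^\star_{00}] \leq \operatorname{Tr}[\rho^\star] = 1$, we obtain $X \geq 0$ with $\operatorname{Tr}[X] \leq R$. Using the observation above, the constraints $\langle A_i' \rangle_{\rho^\star} = b_i/R$ translate to $\operatorname{Tr}[A_i X] = b_i$ for all $i\in[c]$, so $X$ is feasible for~\eqref{eq:R-SDP} and $\operatorname{Tr}[CX] = R \cdot \langle C' \rangle_{\rho^\star}$, yielding $\alpha_R \leq R \cdot \langle C' \rangle_{\rho^\star}$.

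For the $\leq$ direction, take an optimal $X^\star$ for~\eqref{eq:R-SDP} and define
\begin{equation}
\rho \coloneqq \frac{X^\star}{R}\otimes |0\rangle\!\langle 0| \;+\; \left(1 - \frac{\operatorname{Tr}[X^\star]}{R}\right)\tau \otimes |1\rangle\!\langle 1|,
\end{equation}
where $\tau$ is any fixed $d$-dimensional state. The coefficient of the second term is non-negative since $\operatorname{Tr}[X^\star] \leq R$, so $\rho \geq 0$, and a direct computation gives $\operatorname{Tr}[\rho] = 1$, hence $\rho \in \mathcal{D}_{2d}$. Again applying the observation, $\rho_{00} = X^\star/R$, so $\langle A_i' \rangle_\rho = b_i/R = b_i'$ and $R \cdot \langle C' \rangle_\rho = \operatorname{Tr}[C X^\star] = \alpha_R$. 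Combining the two inequalities gives~\eqref{eq:reduction-to-states_quantum}.

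I do not anticipate a genuine obstacle: the construction is essentially the same encoding as in Lemma~\ref{lem:reduction-SDP-to-energy-min}, and the only substantive thing to verify carefully is that the ancilla block $\rho_{00}$ of an arbitrary density operator ranges over exactly the set of $d \times d$ PSD matrices with trace at most one as $\rho$ ranges over $\mathcal{D}_{2d}$. The mild bookkeeping point worth flagging is to confirm that the feasibility of $X^\star$ with the inequality $\operatorname{Tr}[X^\star] \leq R$ (rather than equality) is exactly what guarantees the second summand in the construction of $\rho$ is a legitimate sub-normalized state, which is why the reduction works for the modified SDP $\alpha_R$ in~\eqref{eq:R-SDP}.
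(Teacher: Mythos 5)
Your proposal is correct and follows essentially the same route as the paper: both encode the rescaled variable $X/R$ as the $|0\rangle\!\langle 0|$ ancilla block of a density matrix on the $2d$-dimensional space, with the residual trace $1-\operatorname{Tr}[X]/R$ absorbed into the $|1\rangle\!\langle 1|$ block. The only difference is presentational — the paper proceeds by a slack variable and a chain of equalities, while you make the two inequalities (and the fact that the $00$-block of a state ranges exactly over sub-normalized PSD matrices) explicit, which is a slightly more careful rendering of the same argument.
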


\begin{proof}
See Appendix~\ref{app:reduction-SDP-to-energy-min_quantum}.
\end{proof}

\medskip

Given the reduction from Lemma~\ref{lem:reduction-SDP-to-energy-min_quantum}, it is thus possible to solve SDPs of the general form in~\eqref{eq:SDP-standard} by means of Algorithm~\ref{alg:energy-min}. Indeed, because this mapping preserves the concavity of the dual objective, the HQC approach and the algorithm's update rule remain unchanged. The only substantial modification lies in the error tolerance: to guarantee an overall error of $\varepsilon$ for the original SDP, the algorithm now accounts for the trace-bound scaling $R$ introduced in the mapping by using an error $\sfrac{\varepsilon}{R}$. Thus, the convergence and sample complexity guarantees in Theorem~\ref{thm:main} carry over directly for solving general SDPs with SGA, with the sample complexity in~\eqref{eq:main-sample-complexity} generalized by accounting for a scaled value of $\varepsilon \to \sfrac{\varepsilon}{R}$. Moreover, the mapping in Lemma~\ref{lem:reduction-SDP-to-energy-min_quantum} doubles the dimension of the system, i.e., $d \to 2d$, which in the case of $n$ qubits yields the change $2^n \to 2^{n+1}$, because $d=2^n$.  
Overall, the sample complexity in~\eqref{eq:main-sample-complexity}
incurs only polynomial overhead from the trace factor $R$ and the single-qubit extension of the system. Specifically, the sample complexity for solving SDPs of the general form in~\eqref{eq:SDP-standard} by means of Algorithm~\ref{alg:energy-min} is given by
\begin{equation}
    O\!\left (\frac{r^2R^4\, n\ln\!\left(\sfrac{1}{\delta}\right)}{\varepsilon^4}  \left[\max\left\{\sum\nolimits_{i\in[c]} \left \Vert A_i \right \Vert^2 , \left \Vert C \right \Vert \right\}\right]^2\right).
    \label{eq:main-sample-complexity_SDP}
\end{equation}

\section{Second-order stochastic gradient ascent}

\label{sec:second-order}

\begin{figure*}
    \centering
    \scalebox{1.79}{
    \Qcircuit @C=1.0em @R=1.0em {
            \lstick{|1\rangle} & \gate{\operatorname{Had}} & \ctrl{1} &  \gate{\operatorname{Had}} & \meter & \rstick{\hspace{-1.2em}Z} \\
            \lstick{\rho_T(\mu)} & \qw & \gate{\sigma_{\vec{k}}} & \gate{e^{i\left(  H-\mu\cdot Q\right)  t/T}} &   \meter & \rstick{\hspace{-1.2em}\sigma_{\vec{\ell}}}
        }
    }
    \caption{Quantum circuit that realizes an unbiased estimate of $-\frac{1}{2} \left\langle 
    \left\{  \Phi_{\mu}(Q_{i}),Q_{j}\right\}
    \right\rangle _{\rho_{T}(\mu)} $. For each run of the circuit, the time $t$ is sampled independently at random from the probability density $p(t)$ in~\eqref{eq:high-peak-tent-density-def}, while $\sigma_{\vec{\ell}}$ and $\sigma_{\vec{k}}$ are sampled with probabilities $\sfrac{a_{i,\vec{\ell}}}{\left \Vert a_i \right \Vert_1}$ and $\sfrac{a_{j,\vec{k}}}{\left \Vert a_j \right \Vert_1}$, respectively. For details of the algorithm, see Appendix~\ref{app:Hessian_est}.}
        \label{fig:KM}
\end{figure*}

In Section~\ref{sec:SGA_approach}, we presented a stochastic gradient-based approach for solving the chemical potential maximization problem in~\eqref{eq:dual-free-energy} and, with Algorithm~\ref{alg:est_exp_Q}, showed how to obtain an unbiased estimator of the stochastic gradient of the objective function $f(\mu)$. Furthermore, in Section~\ref{sec:algorithm-analysis} we observed that the matrix elements of the Hessian of $f(\mu)$ admit the closed form in~\eqref{eq:hessian_via_KM}. Here we provide an equivalent expression that is more convenient for estimation on a quantum computer. To do so, let us first recall the following quantum channel:
\begin{equation}
\Phi_{\mu}(\cdot)\coloneqq \int_{-\infty}^{\infty}dt\ p(t)\, e^{-i\left(  H-\mu\cdot
Q\right)  t/T}(\cdot)e^{i\left(  H-\mu\cdot Q\right)  t/T},\label{eq:Phi}
\end{equation}
where
\begin{equation}
p(t)\coloneqq\frac{2}{\pi}\ln\!\left\vert \coth\!\left(  \frac{\pi t}{2}\right)
\right\vert 
\label{eq:high-peak-tent-density-def}
\end{equation}
is a probability density function known as the \textit{high-peak-tent} density~\cite{patel2024quantumboltzmannmachinelearning}. A detailed account of the quantum channel in~\eqref{eq:Phi} and its use in different contexts can be found in~\cite[Section II]{minervini2025evolvedquantumboltzmannmachines} (see also \cite{Hastings2007,Kim2012,Anshu2021,Coopmans2024,patel2024quantumboltzmannmachinelearning,patel2024naturalgradientparameterestimation}).
Using this definition,  the matrix elements of the Hessian of $f(\mu)$ can be written as follows:
\begin{multline}\label{eq:hessian_2}
    \frac{\partial^{2}}{\partial\mu_{i}\partial\mu_{j}}f(\mu)
    = \frac{1}{T}\Big(  
    \left\langle Q_{i}\right\rangle _{\rho_{T}(\mu)}
    \left\langle Q_{j}\right\rangle _{\rho_{T}(\mu)} \\
     -\frac{1}{2}\left\langle 
    \left\{  \Phi_{\mu}(Q_{i}),Q_{j}\right\}
    \right\rangle _{\rho_{T}(\mu)}\Big).
\end{multline}
See Lemma~\ref{lem:hessian-kubo-mori} in Appendix~\ref{app:gradient-hessian} for a derivation.

As already stated in~\eqref{eq:hess_-KM}, the expression in~\eqref{eq:hessian_2} is equal to the negative of the Kubo--Mori (KM) information matrix of the parameterized family $\left(
\rho_{T}(\mu)\right)  _{\mu\in\mathbb{R}^{c}}$ of thermal states; that is,
\begin{equation}
\frac{\partial^{2}}{\partial\mu_{i}\partial\mu_{j}}f(\mu)=-I_{ij}
^{\operatorname{KM}}(\mu). \label{eq:hessian-kubo-mori_2}
\end{equation}
As such, the Hessian elements in~\eqref{eq:hessian_2} can be estimated using methods similar to those proposed recently for estimating the Kubo--Mori information matrix elements of quantum Boltzmann machines (see~\cite[Theorem 2]{patel2024naturalgradientparameterestimation}). In particular, a detailed algorithm for estimating the quantity in~\eqref{eq:hessian_2} is recalled in Appendix~\ref{app:Hessian_est}, and the quantum circuit that plays a role in estimating the second term on the right-hand side of~\eqref{eq:hessian_2} is shown in Figure~\ref{fig:KM}.

The ability to estimate Hessian elements enables the implementation of a second-order method, often referred to as Newton's method, for optimizing $f(\mu)$. Considering that the objective function $f(\mu)$ is smooth and concave, this approach can offer faster local convergence than first-order techniques~\cite{Boyd2004,nocedal2006numerical_opt}. Moreover, the equivalence of the Hessian with the Kubo--Mori information matrix as described in~\eqref{eq:hessian-kubo-mori_2} implies that the Euclidean curvature of $f(\mu)$ coincides with the information-geometric curvature induced by the (Umegaki) relative entropy~\cite{Umegaki1962,sbahi2022provablyefficientvariationalgenerative}. In this sense, a Newton step is identical to a natural-gradient step with respect to the Kubo-Mori metric~\cite{amari1998nat_grad}, where the update rule is given as follows:
\begin{equation}
    \mu^{m+1} = \mu^m  - \eta \Delta_m,
    \label{eq:update-rule-2nd-order}
\end{equation}
where $\eta>0$ is the learning rate, $\overline{g}(\mu^{m})$ is the stochastic gradient already introduced in the SGA in~\eqref{eq:SGA}, $\Delta_m$ is the solution to
\begin{equation}
    I^{\operatorname{KM}}(\mu^m) \Delta_m =   \overline{g}(\mu^m),
    \label{eq:linear-sys-KM}
\end{equation}
and $I^{\operatorname{KM}}(\mu^m)$ is the Kubo--Mori information matrix encoding the curvature of the state space. In \eqref{eq:update-rule-2nd-order}, the minus sign appears due to the equality in~\eqref{eq:hessian-kubo-mori_2}. By solving~\eqref{eq:linear-sys-KM}, which effectively amounts to incorporating the inverse of $I^{\operatorname{KM}}(\mu^m)$, the gradient is rescaled to align with the steepest descent direction in the geometry defined by the quantum relative entropy.

\section{Comparison to matrix multiplicative weights update and matrix
exponentiated gradient update methods}

\label{sec:MEG-MMW-compare}

As we have argued in Section~\ref{sec:energy-min-q-thermo}, some of the basic ideas that connect quantum thermodynamics to semi-definite optimization were already laid out by Jaynes~\cite[page~200]{Jaynes1962}. In particular, he showed that the related problem of entropy maximization can be solved in a manner outlined in steps~2 and 3 of Section~\ref{sec:summary-contribs}, by considering a dual optimization problem and solving that by gradient ascent. Since Jaynes' contribution, along a different line, computer scientists and operations researchers proposed related algorithms, one called the matrix exponentiated gradient (MEG) update method~\cite{Tsuda2005} and another called the matrix multiplicative weights (MMW) update method~\cite{Arora2007,Arora2012,Arora2016}.

Both of these approaches have some relations to the Jaynes-inspired approach put forward in our paper. In the MEG method, certain optimization problems are cast in the terminology of Bregman divergences and derivations are provided showing that, in the physics terminology used here, parameterized thermal states are optimal for solving such optimization problems. However, the MEG method is not motivated in~\cite{Tsuda2005} from a quantum thermodynamics perspective. The MMW method is motivated as a matrix generalization of the multiplicative weights update method for online learning and linear optimization, and derives from considerations in the field of zero-sum games. Furthermore, this approach involves updates of what amount to parameterized thermal states, but little motivation is given in~\cite{Arora2007} as to why one should choose these particular states as an ansatz, and there is again no connection made to quantum thermodynamics.

The different motivations coming from game theory and thermodynamics does affect the way that the algorithms for semi-definite programs are explicitly implemented. For example, since MMW is motivated from zero-sum games, it is more natural to consider the feasibility problem, as detailed in \cite{Arora2007} and all previous quantum algorithms for solving SDPs \cite{Brandao2017,vanApeldoorn2020quantumsdpsolvers,Brandao2019,Apeldoorn2019}, rather than the original formulation of the SDP problem, and the iterative updates in MMW correspond to binary search. On the other hand, the thermodynamic consideration implies that successive updates to the chemical potential are more natural (the direction depending on whether or not the total charge is smaller or larger than the given expected total), and this corresponds to gradient ascent. We plan to explore the deeper relationship between MMW, MEG, and quantum thermodynamics in future work. 

We now compare the sample complexities of Algorithm~\ref{alg:energy-min} for solving SDPs with the current state-of-the-art quantum algorithm based on the MMW approach~\cite{Apeldoorn2019}, hereafter referred to as AG19. However, before proceeding, we first align the problem settings considered in our work and in~\cite{Apeldoorn2019} for a fair comparison. In~\cite{Apeldoorn2019}, the authors assumed that that the norms of the input matrices $C, A_{1}, \ldots, A_{c}$ are bounded from above by one, that is, $\left \Vert C \right \Vert \leq 1$ and $\left \Vert A_{i} \right \Vert \leq 1$ for all $i\in [c]$, whereas we do not make any such assumptions. As a result, for the purpose of this section, we make this assumption in order to provide a fair comparison.

Our analysis reveals that our algorithm (Algorithm~\ref{alg:energy-min}), though conceptually simple, achieves better sample complexity than that of~\cite{Apeldoorn2019} in key regimes. Specifically, we see an improvement in the dependence on the parameter $r$: the sample complexity of Algorithm~\ref{alg:energy-min} for solving SDPs (see~\eqref{eq:main-sample-complexity_SDP}) scales as
\begin{equation}
O\!\left(\frac{r^2 c^2 R^4  \ln\! \left (\sfrac{1}{\delta}\right) \ln d}{\varepsilon^4}\right)
\end{equation}
while the sample complexity of the algorithm of~\cite{Apeldoorn2019} scales as 
\begin{equation}
    O\!\left(\frac{r^4 R^4  \ln\! \left (\sfrac{1}{\delta}\right)\ln d}{\varepsilon^4}\right)\label{eq:sample-comp-prior-method}.
\end{equation}
See \cite[Theorem~8]{vanApeldoorn2019arXiv}  and Appendix~\ref{app:samp-comp-of-prior-work} for more details on~\eqref{eq:sample-comp-prior-method}.

The comparison reveals an interesting trade-off: while our algorithm (Algorithm~\ref{alg:energy-min}) demonstrates a quadratic improvement in the dependence on $r$, the sample complexity of the quantum MMW algorithm of~\cite{Apeldoorn2019} is independent of the parameter $c$. This suggests that for combinatorial problems like MAXCUT, where $r$ scales linearly with $c$, both algorithms achieve comparable performance. However, in cases where $r \gg c $, our algorithm provides a strictly better sample complexity, making it particularly advantageous for problems with large dual radius.

\section{Conclusion}

\label{sec:conclusion}

\subsection{Summary}

In this paper, we showed how physical intuition from quantum thermodynamics can be used for understanding semi-definite optimization. In particular, we highlighted that energy minimization in the presence of non-commuting conserved charges and those of semi-definite optimization share a common mathematical core. In light of Jaynes' statistical-mechanical perspective, the path we took to show this first involved stepping from the energy minimization problem in~\eqref{eq:energy-minimization} to the low-temperature free-energy minimization in~\eqref{eq:min-free-energy}, whose error is controlled by $T \ln d$, where $T$ is the temperature and $d$ is the dimension of the quantum system. An analysis using Lagrange duality and quantum relative entropy then yielded the chemical-potential maximization problem in~\eqref{eq:dual-free-energy} as being equal to the free energy minimization problem in~\eqref{eq:min-free-energy}. The chemical-potential maximization problem in~\eqref{eq:dual-free-energy} involves a smooth concave objective function, which can be tackled efficiently with first-order gradient ascent and the use of parameterized thermal states of the form in~\eqref{eq:grand-canon-thermal-state}. 

Leveraging this structure, we developed classical and hybrid quantum-classical (HQC) algorithms, Algorithms~\ref{alg:basic-grad-asc-free-energy} and~\ref{alg:energy-min} respectively, for performing energy minimization. These algorithms are augmented with a convergence analysis proving that their outputs are $\varepsilon$-approximate global optima. Our approach clarifies that thermal states of temperature $T = O\!\left(\frac{\varepsilon}{\ln d}\right)$ are required for achieving an $\varepsilon$ error in the approximation. Furthermore, by employing an analytical form for the Hessian of the objective function in~\eqref{eq:obj_fun}, we established a second-order method that incorporates second-derivative information. Interestingly, the Hessian matrix is equal to the negative of the Kubo--Mori information matrix, indicating that the Euclidean geometry on the chemical potential parameter space is aligned with the geometry induced by the quantum relative entropy on the state space realized by parameterized thermal states. 

These classical and HQC algorithms can be generalized straightforwardly to solve general SDPs of the standard form in~\eqref{eq:SDP-standard}, because a general SDP can be mapped to a scaled energy minimization problem, as shown in Lemmas~\ref{lem:reduction-SDP-to-energy-min} and~\ref{lem:reduction-SDP-to-energy-min_quantum}. As a consequence, the convergence properties also hold for solving SDPs up to the scaling factor overhead.

\subsection{Discussion}

The applicability of our framework extends beyond equality-constrained SDPs, as defined in~\eqref{eq:SDP-standard}, to mixed equality and inequality constraints. In the dual formulation in~\eqref{eq:obj_fun}, this extension emerges naturally: chemical potentials corresponding to equality constraints remain unconstrained ($\mu \in \mathbb{R}$), while those for inequality constraints are restricted to non-negative values ($\mu \geq 0$).  This distinction preserves both the dual problem's concavity and its alignment with convex optimization principles, while expanding the method's practical utility.

The convergence guarantees of the classical and HQC algorithms (Algorithm~\ref{alg:basic-grad-asc-free-energy} and~\ref{alg:energy-min}, respectively) are supported by the concavity of the dual objective function in~\eqref{eq:obj_fun}.
These convergence properties become improved when the dual problem exhibits strong concavity, which occurs when the log partition function is strongly convex. Existing results establishing this property assume quasi-local operators~\cite{Anshu2021}, but this assumption does not hold for the general setup we have considered in \eqref{eq:H_pauli}--\eqref{eq:Q_pauli}. The assumption does hold, however, for special cases of \eqref{eq:H_pauli}--\eqref{eq:Q_pauli} in which the Pauli operators with non-zero coefficients act locally on only a few qubits, and for these cases we expect a speedup from strong concavity.

Finally, when executing the HQC algorithm (Algorithm~\ref{alg:energy-min}), while our analysis assumes perfect thermal state preparation, practical implementations should account for imperfect state preparation in quantum devices. This introduces bias in the gradient estimators, as the measured observables no longer exactly correspond to the target thermal state expectations. However, our stochastic gradient framework can naturally accommodate this by modifying the variance bound in~\eqref{eq:bounded-variance-condition} to $\sigma^2 \to \sigma^2 + \delta$, where $\delta$ is a function of the error in thermal-state preparation. This modification preserves the convergence guarantees while realistically accounting for thermal state preparation imperfections.

\subsection{Future directions}

Several promising directions emerge from this work. First, a rigorous analysis of the sample complexity for second-order methods presents a natural next step. The concavity of the cost function, combined with the closed-form expression for the Hessian matrix elements, suggests potential convergence improvements over first-order gradient ascent. However, a detailed analysis is still needed to quantify the trade-offs between computational overhead---stemming from Hessian estimation and solving~\eqref{eq:linear-sys-KM}---and the improved sample efficiency enabled by curvature-aware updates. Such an analysis would delineate the regimes where second-order methods offer a practical advantage. At the least, it seems that there could be a strong benefit from using the second-order method when the number of constraints, $c$, is small because the extra overhead from the second-order method is small in such a case. 

Another promising direction is the generalization of our framework to broader classes of SDPs. While we focused on energy minimization with non-commuting conserved charges, the underlying approach---consisting of duality, parameterized thermal states, and gradient-based optimization---could extend to problems with different constraints or cost functions. For instance, SDPs with inequality constraints or other optimization problems with nonlinear objectives may admit similar reformulations using free energy minimization. Exploring such extensions could further bridge quantum thermodynamics with optimization theory and expand the applicability of the approaches put forward here.

Our analysis showed how the Umegaki relative entropy was instrumental in connecting free energy minimization to the dual chemical potential problem. This connection naturally suggests investigating whether generalized divergences, such as  Rényi relative entropies, could play an analogous role. Success in this direction may offer new insights, and leveraging the broader family of Rényi entropies and their properties could lead to alternative optimization algorithms.

Another important extension is to encompass also continuous-variable constraints, and this could be applied to systems that obey known dynamical equations in terms of ordinary and partial differential equations. 

The training process itself can also be generalised to an analog system running in continuous time instead of discrete steps, and this is deeply connected to dynamical systems like replicator dynamics in the quantum regime, where the latter appears in the area of quantum zero-sum games \cite{jain2022matrix,Lin2025learninginquantum}. Exploring these connections has the potential to tie concepts in quantum thermodynamics, dynamical systems, and game theory via semi-definite programs. Due to the deep ties with matrix multiplicative weight updates, the methods presented can also extend beyond semi-definite programs to the areas of online learning and reinforcement learning. 

Finally, the potential for quantum advantage in the HQC algorithm merits further investigation. A speedup would require a problem class where classical simulation of thermal states is intractable---for example, due to non-local correlations or high entanglement---while quantum computers can efficiently prepare these states at low temperatures. However, the temperature scaling inversely with the number of qubits (as required for accurate energy minimization) poses a challenge, as existing thermal state preparation methods struggle to be efficient at such low temperatures. Identifying specific problem classes where our approach outperforms classical methods, or developing more efficient quantum subroutines for low-temperature thermalization, could help establish the conditions for a practical quantum advantage.

\begin{acknowledgments}
We are grateful to Harry Buhrman and Zo\"e Holmes for helpful discussions.

NL acknowledges funding from the Science and Technology Commission of Shanghai Municipality (STCSM) grant no.~24LZ1401200 (21JC1402900), NSFC grants No.~12471411 and No.~12341104, the Shanghai Jiao Tong University 2030 Initiative, and the Fundamental Research Funds for the Central Universities.  MM, DP, and MMW acknowledge support from the National Science Foundation under Grant No.~2329662.
\end{acknowledgments}

\bibliography{Ref}

\begin{thebibliography}{94}%
\makeatletter
\providecommand \@ifxundefined [1]{%
 \@ifx{#1\undefined}
}%
\providecommand \@ifnum [1]{%
 \ifnum #1\expandafter \@firstoftwo
 \else \expandafter \@secondoftwo
 \fi
}%
\providecommand \@ifx [1]{%
 \ifx #1\expandafter \@firstoftwo
 \else \expandafter \@secondoftwo
 \fi
}%
\providecommand \natexlab [1]{#1}%
\providecommand \enquote  [1]{``#1''}%
\providecommand \bibnamefont  [1]{#1}%
\providecommand \bibfnamefont [1]{#1}%
\providecommand \citenamefont [1]{#1}%
\providecommand \href@noop [0]{\@secondoftwo}%
\providecommand \href [0]{\begingroup \@sanitize@url \@href}%
\providecommand \@href[1]{\@@startlink{#1}\@@href}%
\providecommand \@@href[1]{\endgroup#1\@@endlink}%
\providecommand \@sanitize@url [0]{\catcode `\\12\catcode `\$12\catcode `\&12\catcode `\#12\catcode `\^12\catcode `\_12\catcode `\%12\relax}%
\providecommand \@@startlink[1]{}%
\providecommand \@@endlink[0]{}%
\providecommand \url  [0]{\begingroup\@sanitize@url \@url }%
\providecommand \@url [1]{\endgroup\@href {#1}{\urlprefix }}%
\providecommand \urlprefix  [0]{URL }%
\providecommand \Eprint [0]{\href }%
\providecommand \doibase [0]{https://doi.org/}%
\providecommand \selectlanguage [0]{\@gobble}%
\providecommand \bibinfo  [0]{\@secondoftwo}%
\providecommand \bibfield  [0]{\@secondoftwo}%
\providecommand \translation [1]{[#1]}%
\providecommand \BibitemOpen [0]{}%
\providecommand \bibitemStop [0]{}%
\providecommand \bibitemNoStop [0]{.\EOS\space}%
\providecommand \EOS [0]{\spacefactor3000\relax}%
\providecommand \BibitemShut  [1]{\csname bibitem#1\endcsname}%
\let\auto@bib@innerbib\@empty
\bibitem [{\citenamefont {Kraus}(1983)}]{Kraus1983}%
  \BibitemOpen
  \bibfield  {author} {\bibinfo {author} {\bibfnamefont {K.}~\bibnamefont {Kraus}},\ }\href {https://doi.org/10.1007/3-540-12732-1} {\emph {\bibinfo {title} {States, Effects and Operations: Fundamental Notions of Quantum Theory,}}}\ (\bibinfo  {publisher} {Springer Verlag},\ \bibinfo {year} {1983})\BibitemShut {NoStop}%
\bibitem [{\citenamefont {Choi}(1975)}]{Choi1975}%
  \BibitemOpen
  \bibfield  {author} {\bibinfo {author} {\bibfnamefont {M.-D.}\ \bibnamefont {Choi}},\ }\bibfield  {title} {\bibinfo {title} {Completely positive linear maps on complex matrices},\ }\href {https://www.sciencedirect.com/science/article/pii/0024379575900750} {\bibfield  {journal} {\bibinfo  {journal} {Linear Algebra and Its Applications}\ }\textbf {\bibinfo {volume} {10}},\ \bibinfo {pages} {285} (\bibinfo {year} {1975})}\BibitemShut {NoStop}%
\bibitem [{\citenamefont {Yuen}\ \emph {et~al.}(1970)\citenamefont {Yuen}, \citenamefont {Kennedy},\ and\ \citenamefont {Lax}}]{Yuen1970}%
  \BibitemOpen
  \bibfield  {author} {\bibinfo {author} {\bibfnamefont {H.~P.}\ \bibnamefont {Yuen}}, \bibinfo {author} {\bibfnamefont {R.~S.}\ \bibnamefont {Kennedy}},\ and\ \bibinfo {author} {\bibfnamefont {M.}~\bibnamefont {Lax}},\ }\bibfield  {title} {\bibinfo {title} {On optimal quantum receivers for digital signal detection},\ }\href {https://doi.org/10.1109/PROC.1970.8004} {\bibfield  {journal} {\bibinfo  {journal} {Proceedings of the {IEEE}}\ }\textbf {\bibinfo {volume} {58}},\ \bibinfo {pages} {1770} (\bibinfo {year} {1970})}\BibitemShut {NoStop}%
\bibitem [{\citenamefont {Yuen}\ \emph {et~al.}(1975)\citenamefont {Yuen}, \citenamefont {Kennedy},\ and\ \citenamefont {Lax}}]{Yuen1975}%
  \BibitemOpen
  \bibfield  {author} {\bibinfo {author} {\bibfnamefont {H.}~\bibnamefont {Yuen}}, \bibinfo {author} {\bibfnamefont {R.}~\bibnamefont {Kennedy}},\ and\ \bibinfo {author} {\bibfnamefont {M.}~\bibnamefont {Lax}},\ }\bibfield  {title} {\bibinfo {title} {Optimum testing of multiple hypotheses in quantum detection theory},\ }\href {https://doi.org/10.1109/TIT.1975.1055351} {\bibfield  {journal} {\bibinfo  {journal} {IEEE Transactions on Information Theory}\ }\textbf {\bibinfo {volume} {21}},\ \bibinfo {pages} {125} (\bibinfo {year} {1975})}\BibitemShut {NoStop}%
\bibitem [{\citenamefont {Mazziotti}(2004)}]{Mazziotti2004}%
  \BibitemOpen
  \bibfield  {author} {\bibinfo {author} {\bibfnamefont {D.~A.}\ \bibnamefont {Mazziotti}},\ }\bibfield  {title} {\bibinfo {title} {First-order semidefinite programming for the direct determination of two-electron reduced density matrices with application to many-electron atoms and molecules},\ }\href {https://doi.org/10.1063/1.1810134} {\bibfield  {journal} {\bibinfo  {journal} {The Journal of Chemical Physics}\ }\textbf {\bibinfo {volume} {121}},\ \bibinfo {pages} {10957} (\bibinfo {year} {2004})}\BibitemShut {NoStop}%
\bibitem [{\citenamefont {Barthel}\ and\ \citenamefont {H\"ubener}(2012)}]{Barthel2012}%
  \BibitemOpen
  \bibfield  {author} {\bibinfo {author} {\bibfnamefont {T.}~\bibnamefont {Barthel}}\ and\ \bibinfo {author} {\bibfnamefont {R.}~\bibnamefont {H\"ubener}},\ }\bibfield  {title} {\bibinfo {title} {Solving condensed-matter ground-state problems by semidefinite relaxations},\ }\href {https://doi.org/10.1103/PhysRevLett.108.200404} {\bibfield  {journal} {\bibinfo  {journal} {Physical Review Letters}\ }\textbf {\bibinfo {volume} {108}},\ \bibinfo {pages} {200404} (\bibinfo {year} {2012})}\BibitemShut {NoStop}%
\bibitem [{\citenamefont {Simmons-Duffin}(2015)}]{SimmonsDuffin2015}%
  \BibitemOpen
  \bibfield  {author} {\bibinfo {author} {\bibfnamefont {D.}~\bibnamefont {Simmons-Duffin}},\ }\bibfield  {title} {\bibinfo {title} {A semidefinite program solver for the conformal bootstrap},\ }\href {https://doi.org/10.1007/JHEP06(2015)174} {\bibfield  {journal} {\bibinfo  {journal} {Journal of High Energy Physics}\ }\textbf {\bibinfo {volume} {2015}},\ \bibinfo {pages} {174} (\bibinfo {year} {2015})}\BibitemShut {NoStop}%
\bibitem [{\citenamefont {Berenstein}\ and\ \citenamefont {Hulsey}(2023)}]{Berenstein2023}%
  \BibitemOpen
  \bibfield  {author} {\bibinfo {author} {\bibfnamefont {D.}~\bibnamefont {Berenstein}}\ and\ \bibinfo {author} {\bibfnamefont {G.}~\bibnamefont {Hulsey}},\ }\bibfield  {title} {\bibinfo {title} {Semidefinite programming algorithm for the quantum mechanical bootstrap},\ }\href {https://doi.org/10.1103/PhysRevE.107.L053301} {\bibfield  {journal} {\bibinfo  {journal} {Physical Review E}\ }\textbf {\bibinfo {volume} {107}},\ \bibinfo {pages} {L053301} (\bibinfo {year} {2023})}\BibitemShut {NoStop}%
\bibitem [{\citenamefont {Fawzi}\ \emph {et~al.}(2024)\citenamefont {Fawzi}, \citenamefont {Fawzi},\ and\ \citenamefont {Scalet}}]{Fawzi2024}%
  \BibitemOpen
  \bibfield  {author} {\bibinfo {author} {\bibfnamefont {H.}~\bibnamefont {Fawzi}}, \bibinfo {author} {\bibfnamefont {O.}~\bibnamefont {Fawzi}},\ and\ \bibinfo {author} {\bibfnamefont {S.~O.}\ \bibnamefont {Scalet}},\ }\bibfield  {title} {\bibinfo {title} {Entropy constraints for ground energy optimization},\ }\href {https://doi.org/10.1063/5.0159108} {\bibfield  {journal} {\bibinfo  {journal} {Journal of Mathematical Physics}\ }\textbf {\bibinfo {volume} {65}},\ \bibinfo {pages} {032201} (\bibinfo {year} {2024})}\BibitemShut {NoStop}%
\bibitem [{\citenamefont {Wang}(2018)}]{W18thesis}%
  \BibitemOpen
  \bibfield  {author} {\bibinfo {author} {\bibfnamefont {X.}~\bibnamefont {Wang}},\ }\emph {\bibinfo {title} {Semidefinite optimization for quantum information}},\ \href@noop {} {Ph.D. thesis},\ \bibinfo  {school} {University of Technology Sydney}, \bibinfo {address} {Centre for Quantum Software and Information, Faculty of Engineering and Information Technology} (\bibinfo {year} {2018}),\ \bibinfo {note} {\url{http://hdl.handle.net/10453/127996}}\BibitemShut {NoStop}%
\bibitem [{\citenamefont {Khatri}\ and\ \citenamefont {Wilde}(2020)}]{khatri2020principles}%
  \BibitemOpen
  \bibfield  {author} {\bibinfo {author} {\bibfnamefont {S.}~\bibnamefont {Khatri}}\ and\ \bibinfo {author} {\bibfnamefont {M.~M.}\ \bibnamefont {Wilde}},\ }\href@noop {} {\bibinfo {title} {Principles of quantum communication theory: A modern approach}} (\bibinfo {year} {2020}),\ \Eprint {https://arxiv.org/abs/2011.04672v2} {2011.04672v2} \BibitemShut {NoStop}%
\bibitem [{\citenamefont {Siddhu}\ and\ \citenamefont {Tayur}(2022)}]{Siddhu2022}%
  \BibitemOpen
  \bibfield  {author} {\bibinfo {author} {\bibfnamefont {V.}~\bibnamefont {Siddhu}}\ and\ \bibinfo {author} {\bibfnamefont {S.}~\bibnamefont {Tayur}},\ }\bibinfo {title} {Five starter pieces: Quantum information science via semidefinite programs},\ in\ \href {https://doi.org/10.1287/educ.2022.0243} {\emph {\bibinfo {booktitle} {Tutorials in Operations Research: Emerging and Impactful Topics in Operations}}}\ (\bibinfo  {publisher} {INFORMS TutORials in Operations Research},\ \bibinfo {year} {2022})\ Chap.~\bibinfo {chapter} {3}, pp.\ \bibinfo {pages} {59--92}\BibitemShut {NoStop}%
\bibitem [{\citenamefont {Skrzypczyk}\ and\ \citenamefont {Cavalcanti}(2023)}]{Skrzypczyk2023}%
  \BibitemOpen
  \bibfield  {author} {\bibinfo {author} {\bibfnamefont {P.}~\bibnamefont {Skrzypczyk}}\ and\ \bibinfo {author} {\bibfnamefont {D.}~\bibnamefont {Cavalcanti}},\ }\href {https://doi.org/10.1088/978-0-7503-3343-6} {\emph {\bibinfo {title} {Semidefinite Programming in Quantum Information Science}}},\ 2053-2563\ (\bibinfo  {publisher} {IOP Publishing},\ \bibinfo {year} {2023})\BibitemShut {NoStop}%
\bibitem [{\citenamefont {Boyd}\ and\ \citenamefont {Vandenberghe}(2004)}]{Boyd2004}%
  \BibitemOpen
  \bibfield  {author} {\bibinfo {author} {\bibfnamefont {S.}~\bibnamefont {Boyd}}\ and\ \bibinfo {author} {\bibfnamefont {L.}~\bibnamefont {Vandenberghe}},\ }\href {https://doi.org/10.1017/CBO9780511804441} {\emph {\bibinfo {title} {Convex Optimization}}}\ (\bibinfo  {publisher} {Cambridge University Press},\ \bibinfo {year} {2004})\BibitemShut {NoStop}%
\bibitem [{\citenamefont {Goemans}(1997)}]{Goemans1997}%
  \BibitemOpen
  \bibfield  {author} {\bibinfo {author} {\bibfnamefont {M.~X.}\ \bibnamefont {Goemans}},\ }\bibfield  {title} {\bibinfo {title} {Semidefinite programming in combinatorial optimization},\ }\href {https://doi.org/10.1007/BF02614315} {\bibfield  {journal} {\bibinfo  {journal} {Mathematical Programming}\ }\textbf {\bibinfo {volume} {79}},\ \bibinfo {pages} {143} (\bibinfo {year} {1997})}\BibitemShut {NoStop}%
\bibitem [{\citenamefont {Rendl}(1999)}]{Rendl1999}%
  \BibitemOpen
  \bibfield  {author} {\bibinfo {author} {\bibfnamefont {F.}~\bibnamefont {Rendl}},\ }\bibfield  {title} {\bibinfo {title} {Semidefinite programming and combinatorial optimization},\ }\href {https://doi.org/https://doi.org/10.1016/S0168-9274(98)00097-X} {\bibfield  {journal} {\bibinfo  {journal} {Applied Numerical Mathematics}\ }\textbf {\bibinfo {volume} {29}},\ \bibinfo {pages} {255} (\bibinfo {year} {1999})},\ \bibinfo {note} {proceedings of the Stieltjes Workshop on High Performance Optimization Techniques}\BibitemShut {NoStop}%
\bibitem [{\citenamefont {Tun{\c{c}}el}(2016)}]{tunccel2016polyhedral}%
  \BibitemOpen
  \bibfield  {author} {\bibinfo {author} {\bibfnamefont {L.}~\bibnamefont {Tun{\c{c}}el}},\ }\href@noop {} {\emph {\bibinfo {title} {Polyhedral and semidefinite programming methods in combinatorial optimization}}},\ Vol.~\bibinfo {volume} {27}\ (\bibinfo  {publisher} {American Mathematical Society},\ \bibinfo {year} {2016})\BibitemShut {NoStop}%
\bibitem [{\citenamefont {Gepp}\ \emph {et~al.}(2020)\citenamefont {Gepp}, \citenamefont {Harris},\ and\ \citenamefont {Vanstone}}]{Gepp2020}%
  \BibitemOpen
  \bibfield  {author} {\bibinfo {author} {\bibfnamefont {A.}~\bibnamefont {Gepp}}, \bibinfo {author} {\bibfnamefont {G.}~\bibnamefont {Harris}},\ and\ \bibinfo {author} {\bibfnamefont {B.}~\bibnamefont {Vanstone}},\ }\bibfield  {title} {\bibinfo {title} {Financial applications of semidefinite programming: A review and call for interdisciplinary research},\ }\href {https://doi.org/https://doi.org/10.1111/acfi.12543} {\bibfield  {journal} {\bibinfo  {journal} {Accounting \& Finance}\ }\textbf {\bibinfo {volume} {60}},\ \bibinfo {pages} {3527} (\bibinfo {year} {2020})}\BibitemShut {NoStop}%
\bibitem [{\citenamefont {Skutella}(2001)}]{Skutella2001}%
  \BibitemOpen
  \bibfield  {author} {\bibinfo {author} {\bibfnamefont {M.}~\bibnamefont {Skutella}},\ }\bibfield  {title} {\bibinfo {title} {Convex quadratic and semidefinite programming relaxations in scheduling},\ }\href {https://doi.org/10.1145/375827.375840} {\bibfield  {journal} {\bibinfo  {journal} {Journal of the ACM}\ }\textbf {\bibinfo {volume} {48}},\ \bibinfo {pages} {206–242} (\bibinfo {year} {2001})}\BibitemShut {NoStop}%
\bibitem [{\citenamefont {d'Aspremont}\ \emph {et~al.}(2007)\citenamefont {d'Aspremont}, \citenamefont {Ghaoui}, \citenamefont {Jordan},\ and\ \citenamefont {Lanckriet}}]{dAspremont2007}%
  \BibitemOpen
  \bibfield  {author} {\bibinfo {author} {\bibfnamefont {A.}~\bibnamefont {d'Aspremont}}, \bibinfo {author} {\bibfnamefont {L.~E.}\ \bibnamefont {Ghaoui}}, \bibinfo {author} {\bibfnamefont {M.~I.}\ \bibnamefont {Jordan}},\ and\ \bibinfo {author} {\bibfnamefont {G.~R.~G.}\ \bibnamefont {Lanckriet}},\ }\bibfield  {title} {\bibinfo {title} {A direct formulation for sparse {PCA} using semidefinite programming},\ }\href {http://www.jstor.org/stable/20453990} {\bibfield  {journal} {\bibinfo  {journal} {SIAM Review}\ }\textbf {\bibinfo {volume} {49}},\ \bibinfo {pages} {434} (\bibinfo {year} {2007})}\BibitemShut {NoStop}%
\bibitem [{\citenamefont {Hall}(2018)}]{hall2018optimization}%
  \BibitemOpen
  \bibfield  {author} {\bibinfo {author} {\bibfnamefont {G.}~\bibnamefont {Hall}},\ }\emph {\bibinfo {title} {Optimization over nonnegative and convex polynomials with and without semidefinite programming}},\ \href@noop {} {Ph.D. thesis},\ \bibinfo  {school} {Princeton University} (\bibinfo {year} {2018}),\ \Eprint {https://arxiv.org/abs/1806.06996} {arXiv:1806.06996} \BibitemShut {NoStop}%
\bibitem [{\citenamefont {Majumdar}\ \emph {et~al.}(2020)\citenamefont {Majumdar}, \citenamefont {Hall},\ and\ \citenamefont {Ahmadi}}]{majumdar2020recent}%
  \BibitemOpen
  \bibfield  {author} {\bibinfo {author} {\bibfnamefont {A.}~\bibnamefont {Majumdar}}, \bibinfo {author} {\bibfnamefont {G.}~\bibnamefont {Hall}},\ and\ \bibinfo {author} {\bibfnamefont {A.~A.}\ \bibnamefont {Ahmadi}},\ }\bibfield  {title} {\bibinfo {title} {Recent scalability improvements for semidefinite programming with applications in machine learning, control, and robotics},\ }\href {https://doi.org/10.1146/annurev-control-091819-074326} {\bibfield  {journal} {\bibinfo  {journal} {Annual Review of Control, Robotics, and Autonomous Systems}\ }\textbf {\bibinfo {volume} {3}},\ \bibinfo {pages} {331} (\bibinfo {year} {2020})}\BibitemShut {NoStop}%
\bibitem [{\citenamefont {Alizadeh}\ \emph {et~al.}(1998)\citenamefont {Alizadeh}, \citenamefont {Haeberly},\ and\ \citenamefont {Overton}}]{Alizadeh1998}%
  \BibitemOpen
  \bibfield  {author} {\bibinfo {author} {\bibfnamefont {F.}~\bibnamefont {Alizadeh}}, \bibinfo {author} {\bibfnamefont {J.-P.~A.}\ \bibnamefont {Haeberly}},\ and\ \bibinfo {author} {\bibfnamefont {M.~L.}\ \bibnamefont {Overton}},\ }\bibfield  {title} {\bibinfo {title} {Primal-dual interior-point methods for semidefinite programming: Convergence rates, stability and numerical results},\ }\href {https://doi.org/10.1137/S1052623496304700} {\bibfield  {journal} {\bibinfo  {journal} {SIAM Journal on Optimization}\ }\textbf {\bibinfo {volume} {8}},\ \bibinfo {pages} {746} (\bibinfo {year} {1998})}\BibitemShut {NoStop}%
\bibitem [{\citenamefont {Todd}\ \emph {et~al.}(1998)\citenamefont {Todd}, \citenamefont {Toh},\ and\ \citenamefont {T\"{u}t\"{u}nc\"{u}}}]{Todd1998}%
  \BibitemOpen
  \bibfield  {author} {\bibinfo {author} {\bibfnamefont {M.~J.}\ \bibnamefont {Todd}}, \bibinfo {author} {\bibfnamefont {K.~C.}\ \bibnamefont {Toh}},\ and\ \bibinfo {author} {\bibfnamefont {R.~H.}\ \bibnamefont {T\"{u}t\"{u}nc\"{u}}},\ }\bibfield  {title} {\bibinfo {title} {On the {Nesterov--Todd} direction in semidefinite programming},\ }\href {https://doi.org/10.1137/S105262349630060X} {\bibfield  {journal} {\bibinfo  {journal} {SIAM Journal on Optimization}\ }\textbf {\bibinfo {volume} {8}},\ \bibinfo {pages} {769} (\bibinfo {year} {1998})}\BibitemShut {NoStop}%
\bibitem [{\citenamefont {Jiang}\ \emph {et~al.}(2020)\citenamefont {Jiang}, \citenamefont {Kathuria}, \citenamefont {Lee}, \citenamefont {Padmanabhan},\ and\ \citenamefont {Song}}]{Jiang2020}%
  \BibitemOpen
  \bibfield  {author} {\bibinfo {author} {\bibfnamefont {H.}~\bibnamefont {Jiang}}, \bibinfo {author} {\bibfnamefont {T.}~\bibnamefont {Kathuria}}, \bibinfo {author} {\bibfnamefont {Y.~T.}\ \bibnamefont {Lee}}, \bibinfo {author} {\bibfnamefont {S.}~\bibnamefont {Padmanabhan}},\ and\ \bibinfo {author} {\bibfnamefont {Z.}~\bibnamefont {Song}},\ }\bibfield  {title} {\bibinfo {title} {{ A Faster Interior Point Method for Semidefinite Programming }},\ }in\ \href {https://doi.org/10.1109/FOCS46700.2020.00089} {\emph {\bibinfo {booktitle} {2020 IEEE 61st Annual Symposium on Foundations of Computer Science (FOCS)}}}\ (\bibinfo  {publisher} {IEEE Computer Society},\ \bibinfo {address} {Los Alamitos, CA, USA},\ \bibinfo {year} {2020})\ pp.\ \bibinfo {pages} {910--918}\BibitemShut {NoStop}%
\bibitem [{\citenamefont {Tsuda}\ \emph {et~al.}(2005)\citenamefont {Tsuda}, \citenamefont {R{{\"a}}tsch},\ and\ \citenamefont {Warmuth}}]{Tsuda2005}%
  \BibitemOpen
  \bibfield  {author} {\bibinfo {author} {\bibfnamefont {K.}~\bibnamefont {Tsuda}}, \bibinfo {author} {\bibfnamefont {G.}~\bibnamefont {R{{\"a}}tsch}},\ and\ \bibinfo {author} {\bibfnamefont {M.~K.}\ \bibnamefont {Warmuth}},\ }\bibfield  {title} {\bibinfo {title} {Matrix exponentiated gradient updates for on-line learning and {B}regman projection},\ }\href {http://jmlr.org/papers/v6/tsuda05a.html} {\bibfield  {journal} {\bibinfo  {journal} {Journal of Machine Learning Research}\ }\textbf {\bibinfo {volume} {6}},\ \bibinfo {pages} {995} (\bibinfo {year} {2005})}\BibitemShut {NoStop}%
\bibitem [{\citenamefont {Arora}\ and\ \citenamefont {Kale}(2007)}]{Arora2007}%
  \BibitemOpen
  \bibfield  {author} {\bibinfo {author} {\bibfnamefont {S.}~\bibnamefont {Arora}}\ and\ \bibinfo {author} {\bibfnamefont {S.}~\bibnamefont {Kale}},\ }\bibfield  {title} {\bibinfo {title} {A combinatorial, primal-dual approach to semidefinite programs},\ }in\ \href {https://doi.org/10.1145/1250790.1250823} {\emph {\bibinfo {booktitle} {Proceedings of the Thirty-Ninth Annual ACM Symposium on Theory of Computing}}},\ \bibinfo {series and number} {STOC '07}\ (\bibinfo  {publisher} {Association for Computing Machinery},\ \bibinfo {address} {New York, NY, USA},\ \bibinfo {year} {2007})\ p.\ \bibinfo {pages} {227–236}\BibitemShut {NoStop}%
\bibitem [{\citenamefont {Arora}\ \emph {et~al.}(2012)\citenamefont {Arora}, \citenamefont {Hazan},\ and\ \citenamefont {Kale}}]{Arora2012}%
  \BibitemOpen
  \bibfield  {author} {\bibinfo {author} {\bibfnamefont {S.}~\bibnamefont {Arora}}, \bibinfo {author} {\bibfnamefont {E.}~\bibnamefont {Hazan}},\ and\ \bibinfo {author} {\bibfnamefont {S.}~\bibnamefont {Kale}},\ }\bibfield  {title} {\bibinfo {title} {The multiplicative weights update method: a meta-algorithm and applications},\ }\href {https://doi.org/10.4086/toc.2012.v008a006} {\bibfield  {journal} {\bibinfo  {journal} {Theory of Computing}\ }\textbf {\bibinfo {volume} {8}},\ \bibinfo {pages} {121} (\bibinfo {year} {2012})}\BibitemShut {NoStop}%
\bibitem [{\citenamefont {Arora}\ and\ \citenamefont {Kale}(2016)}]{Arora2016}%
  \BibitemOpen
  \bibfield  {author} {\bibinfo {author} {\bibfnamefont {S.}~\bibnamefont {Arora}}\ and\ \bibinfo {author} {\bibfnamefont {S.}~\bibnamefont {Kale}},\ }\bibfield  {title} {\bibinfo {title} {A combinatorial, primal-dual approach to semidefinite programs},\ }\href {https://doi.org/10.1145/2837020} {\bibfield  {journal} {\bibinfo  {journal} {Journal of the ACM}\ }\textbf {\bibinfo {volume} {63}},\ \bibinfo {pages} {1} (\bibinfo {year} {2016})}\BibitemShut {NoStop}%
\bibitem [{\citenamefont {Brandao}\ and\ \citenamefont {Svore}(2017)}]{Brandao2017}%
  \BibitemOpen
  \bibfield  {author} {\bibinfo {author} {\bibfnamefont {F.~G. S.~L.}\ \bibnamefont {Brandao}}\ and\ \bibinfo {author} {\bibfnamefont {K.~M.}\ \bibnamefont {Svore}},\ }\bibfield  {title} {\bibinfo {title} {Quantum speed-ups for solving semidefinite programs},\ }in\ \href {https://doi.org/10.1109/FOCS.2017.45} {\emph {\bibinfo {booktitle} {2017 IEEE 58th Annual Symposium on Foundations of Computer Science (FOCS)}}}\ (\bibinfo {year} {2017})\ pp.\ \bibinfo {pages} {415--426}\BibitemShut {NoStop}%
\bibitem [{\citenamefont {van Apeldoorn}\ and\ \citenamefont {Gily{\'e}n}(2019)}]{Apeldoorn2019}%
  \BibitemOpen
  \bibfield  {author} {\bibinfo {author} {\bibfnamefont {J.}~\bibnamefont {van Apeldoorn}}\ and\ \bibinfo {author} {\bibfnamefont {A.}~\bibnamefont {Gily{\'e}n}},\ }\bibfield  {title} {\bibinfo {title} {{Improvements in Quantum SDP-Solving with Applications}},\ }in\ \href {https://doi.org/10.4230/LIPIcs.ICALP.2019.99} {\emph {\bibinfo {booktitle} {46th International Colloquium on Automata, Languages, and Programming (ICALP 2019)}}},\ \bibinfo {series} {Leibniz International Proceedings in Informatics (LIPIcs)}, Vol.\ \bibinfo {volume} {132},\ \bibinfo {editor} {edited by\ \bibinfo {editor} {\bibfnamefont {C.}~\bibnamefont {Baier}}, \bibinfo {editor} {\bibfnamefont {I.}~\bibnamefont {Chatzigiannakis}}, \bibinfo {editor} {\bibfnamefont {P.}~\bibnamefont {Flocchini}},\ and\ \bibinfo {editor} {\bibfnamefont {S.}~\bibnamefont {Leonardi}}}\ (\bibinfo  {publisher} {Schloss Dagstuhl--Leibniz-Zentrum fuer Informatik},\ \bibinfo {address} {Dagstuhl, Germany},\ \bibinfo {year} {2019})\ pp.\ \bibinfo {pages}
  {99:1--99:15}\BibitemShut {NoStop}%
\bibitem [{\citenamefont {Brand{\~a}o}\ \emph {et~al.}(2019)\citenamefont {Brand{\~a}o}, \citenamefont {Kalev}, \citenamefont {Li}, \citenamefont {Lin}, \citenamefont {Svore},\ and\ \citenamefont {Wu}}]{Brandao2019}%
  \BibitemOpen
  \bibfield  {author} {\bibinfo {author} {\bibfnamefont {F.~G. S.~L.}\ \bibnamefont {Brand{\~a}o}}, \bibinfo {author} {\bibfnamefont {A.}~\bibnamefont {Kalev}}, \bibinfo {author} {\bibfnamefont {T.}~\bibnamefont {Li}}, \bibinfo {author} {\bibfnamefont {C.~Y.-Y.}\ \bibnamefont {Lin}}, \bibinfo {author} {\bibfnamefont {K.~M.}\ \bibnamefont {Svore}},\ and\ \bibinfo {author} {\bibfnamefont {X.}~\bibnamefont {Wu}},\ }\bibfield  {title} {\bibinfo {title} {{Quantum SDP Solvers: Large Speed-Ups, Optimality, and Applications to Quantum Learning}},\ }in\ \href {https://doi.org/10.4230/LIPIcs.ICALP.2019.27} {\emph {\bibinfo {booktitle} {46th International Colloquium on Automata, Languages, and Programming (ICALP 2019)}}},\ \bibinfo {series} {Leibniz International Proceedings in Informatics (LIPIcs)}, Vol.\ \bibinfo {volume} {132},\ \bibinfo {editor} {edited by\ \bibinfo {editor} {\bibfnamefont {C.}~\bibnamefont {Baier}}, \bibinfo {editor} {\bibfnamefont {I.}~\bibnamefont {Chatzigiannakis}}, \bibinfo {editor}
  {\bibfnamefont {P.}~\bibnamefont {Flocchini}},\ and\ \bibinfo {editor} {\bibfnamefont {S.}~\bibnamefont {Leonardi}}}\ (\bibinfo  {publisher} {Schloss Dagstuhl--Leibniz-Zentrum fuer Informatik},\ \bibinfo {address} {Dagstuhl, Germany},\ \bibinfo {year} {2019})\ pp.\ \bibinfo {pages} {27:1--27:14}\BibitemShut {NoStop}%
\bibitem [{\citenamefont {van Apeldoorn}\ \emph {et~al.}(2020)\citenamefont {van Apeldoorn}, \citenamefont {Gily{\'{e}}n}, \citenamefont {Gribling},\ and\ \citenamefont {de~Wolf}}]{vanApeldoorn2020quantumsdpsolvers}%
  \BibitemOpen
  \bibfield  {author} {\bibinfo {author} {\bibfnamefont {J.}~\bibnamefont {van Apeldoorn}}, \bibinfo {author} {\bibfnamefont {A.}~\bibnamefont {Gily{\'{e}}n}}, \bibinfo {author} {\bibfnamefont {S.}~\bibnamefont {Gribling}},\ and\ \bibinfo {author} {\bibfnamefont {R.}~\bibnamefont {de~Wolf}},\ }\bibfield  {title} {\bibinfo {title} {Quantum {SDP}-{S}olvers: {B}etter upper and lower bounds},\ }\href {https://doi.org/10.22331/q-2020-02-14-230} {\bibfield  {journal} {\bibinfo  {journal} {{Quantum}}\ }\textbf {\bibinfo {volume} {4}},\ \bibinfo {pages} {230} (\bibinfo {year} {2020})}\BibitemShut {NoStop}%
\bibitem [{\citenamefont {Kerenidis}\ and\ \citenamefont {Prakash}(2020)}]{Kerenidis2020}%
  \BibitemOpen
  \bibfield  {author} {\bibinfo {author} {\bibfnamefont {I.}~\bibnamefont {Kerenidis}}\ and\ \bibinfo {author} {\bibfnamefont {A.}~\bibnamefont {Prakash}},\ }\bibfield  {title} {\bibinfo {title} {A quantum interior point method for {LP}s and {SDP}s},\ }\href {https://doi.org/10.1145/3406306} {\bibfield  {journal} {\bibinfo  {journal} {ACM Transactions on Quantum Computing}\ }\textbf {\bibinfo {volume} {1}},\ \bibinfo {pages} {5} (\bibinfo {year} {2020})}\BibitemShut {NoStop}%
\bibitem [{\citenamefont {Augustino}\ \emph {et~al.}(2023)\citenamefont {Augustino}, \citenamefont {Nannicini}, \citenamefont {Terlaky},\ and\ \citenamefont {Zuluaga}}]{Augustino2023quantuminterior}%
  \BibitemOpen
  \bibfield  {author} {\bibinfo {author} {\bibfnamefont {B.}~\bibnamefont {Augustino}}, \bibinfo {author} {\bibfnamefont {G.}~\bibnamefont {Nannicini}}, \bibinfo {author} {\bibfnamefont {T.}~\bibnamefont {Terlaky}},\ and\ \bibinfo {author} {\bibfnamefont {L.~F.}\ \bibnamefont {Zuluaga}},\ }\bibfield  {title} {\bibinfo {title} {Quantum interior point methods for semidefinite optimization},\ }\href {https://doi.org/10.22331/q-2023-09-11-1110} {\bibfield  {journal} {\bibinfo  {journal} {Quantum}\ }\textbf {\bibinfo {volume} {7}},\ \bibinfo {pages} {1110} (\bibinfo {year} {2023})}\BibitemShut {NoStop}%
\bibitem [{\citenamefont {Watts}\ \emph {et~al.}(2023)\citenamefont {Watts}, \citenamefont {Kikuchi},\ and\ \citenamefont {Coopmans}}]{watts2023quantum}%
  \BibitemOpen
  \bibfield  {author} {\bibinfo {author} {\bibfnamefont {O.}~\bibnamefont {Watts}}, \bibinfo {author} {\bibfnamefont {Y.}~\bibnamefont {Kikuchi}},\ and\ \bibinfo {author} {\bibfnamefont {L.}~\bibnamefont {Coopmans}},\ }\href@noop {} {\bibinfo {title} {Quantum semidefinite programming with thermal pure quantum states}} (\bibinfo {year} {2023}),\ \Eprint {https://arxiv.org/abs/2310.07774} {arXiv:2310.07774 [quant-ph]} \BibitemShut {NoStop}%
\bibitem [{\citenamefont {Bharti}\ \emph {et~al.}(2022{\natexlab{a}})\citenamefont {Bharti}, \citenamefont {Haug}, \citenamefont {Vedral},\ and\ \citenamefont {Kwek}}]{Bharti2022}%
  \BibitemOpen
  \bibfield  {author} {\bibinfo {author} {\bibfnamefont {K.}~\bibnamefont {Bharti}}, \bibinfo {author} {\bibfnamefont {T.}~\bibnamefont {Haug}}, \bibinfo {author} {\bibfnamefont {V.}~\bibnamefont {Vedral}},\ and\ \bibinfo {author} {\bibfnamefont {L.-C.}\ \bibnamefont {Kwek}},\ }\bibfield  {title} {\bibinfo {title} {Noisy intermediate-scale quantum algorithm for semidefinite programming},\ }\href {https://doi.org/10.1103/PhysRevA.105.052445} {\bibfield  {journal} {\bibinfo  {journal} {Physical Review A}\ }\textbf {\bibinfo {volume} {105}},\ \bibinfo {pages} {052445} (\bibinfo {year} {2022}{\natexlab{a}})}\BibitemShut {NoStop}%
\bibitem [{\citenamefont {Patel}\ \emph {et~al.}(2024{\natexlab{a}})\citenamefont {Patel}, \citenamefont {Coles},\ and\ \citenamefont {Wilde}}]{Patel2024variationalquantum}%
  \BibitemOpen
  \bibfield  {author} {\bibinfo {author} {\bibfnamefont {D.}~\bibnamefont {Patel}}, \bibinfo {author} {\bibfnamefont {P.~J.}\ \bibnamefont {Coles}},\ and\ \bibinfo {author} {\bibfnamefont {M.~M.}\ \bibnamefont {Wilde}},\ }\bibfield  {title} {\bibinfo {title} {Variational quantum algorithms for semidefinite programming},\ }\href {https://doi.org/10.22331/q-2024-06-17-1374} {\bibfield  {journal} {\bibinfo  {journal} {Quantum}\ }\textbf {\bibinfo {volume} {8}},\ \bibinfo {pages} {1374} (\bibinfo {year} {2024}{\natexlab{a}})}\BibitemShut {NoStop}%
\bibitem [{\citenamefont {Patti}\ \emph {et~al.}(2023)\citenamefont {Patti}, \citenamefont {Kossaifi}, \citenamefont {Anandkumar},\ and\ \citenamefont {Yelin}}]{Patti2023quantumgoemans}%
  \BibitemOpen
  \bibfield  {author} {\bibinfo {author} {\bibfnamefont {T.~L.}\ \bibnamefont {Patti}}, \bibinfo {author} {\bibfnamefont {J.}~\bibnamefont {Kossaifi}}, \bibinfo {author} {\bibfnamefont {A.}~\bibnamefont {Anandkumar}},\ and\ \bibinfo {author} {\bibfnamefont {S.~F.}\ \bibnamefont {Yelin}},\ }\bibfield  {title} {\bibinfo {title} {Quantum {G}oemans-{W}illiamson algorithm with the {H}adamard test and approximate amplitude constraints},\ }\href {https://doi.org/10.22331/q-2023-07-12-1057} {\bibfield  {journal} {\bibinfo  {journal} {Quantum}\ }\textbf {\bibinfo {volume} {7}},\ \bibinfo {pages} {1057} (\bibinfo {year} {2023})}\BibitemShut {NoStop}%
\bibitem [{\citenamefont {Westerheim}\ \emph {et~al.}(2023)\citenamefont {Westerheim}, \citenamefont {Chen}, \citenamefont {Holmes}, \citenamefont {Luo}, \citenamefont {Nuradha}, \citenamefont {Patel}, \citenamefont {Rethinasamy}, \citenamefont {Wang},\ and\ \citenamefont {Wilde}}]{westerheim2023dualvqequantumalgorithmlower}%
  \BibitemOpen
  \bibfield  {author} {\bibinfo {author} {\bibfnamefont {H.}~\bibnamefont {Westerheim}}, \bibinfo {author} {\bibfnamefont {J.}~\bibnamefont {Chen}}, \bibinfo {author} {\bibfnamefont {Z.}~\bibnamefont {Holmes}}, \bibinfo {author} {\bibfnamefont {I.}~\bibnamefont {Luo}}, \bibinfo {author} {\bibfnamefont {T.}~\bibnamefont {Nuradha}}, \bibinfo {author} {\bibfnamefont {D.}~\bibnamefont {Patel}}, \bibinfo {author} {\bibfnamefont {S.}~\bibnamefont {Rethinasamy}}, \bibinfo {author} {\bibfnamefont {K.}~\bibnamefont {Wang}},\ and\ \bibinfo {author} {\bibfnamefont {M.~M.}\ \bibnamefont {Wilde}},\ }\href {https://arxiv.org/abs/2312.03083} {\bibinfo {title} {{Dual-VQE}: A quantum algorithm to lower bound the ground-state energy}} (\bibinfo {year} {2023}),\ \Eprint {https://arxiv.org/abs/2312.03083} {arXiv:2312.03083 [quant-ph]} \BibitemShut {NoStop}%
\bibitem [{\citenamefont {Chen}\ \emph {et~al.}(2023{\natexlab{a}})\citenamefont {Chen}, \citenamefont {Westerheim}, \citenamefont {Holmes}, \citenamefont {Luo}, \citenamefont {Nuradha}, \citenamefont {Patel}, \citenamefont {Rethinasamy}, \citenamefont {Wang},\ and\ \citenamefont {Wilde}}]{chen2023qslackslackvariableapproachvariational}%
  \BibitemOpen
  \bibfield  {author} {\bibinfo {author} {\bibfnamefont {J.}~\bibnamefont {Chen}}, \bibinfo {author} {\bibfnamefont {H.}~\bibnamefont {Westerheim}}, \bibinfo {author} {\bibfnamefont {Z.}~\bibnamefont {Holmes}}, \bibinfo {author} {\bibfnamefont {I.}~\bibnamefont {Luo}}, \bibinfo {author} {\bibfnamefont {T.}~\bibnamefont {Nuradha}}, \bibinfo {author} {\bibfnamefont {D.}~\bibnamefont {Patel}}, \bibinfo {author} {\bibfnamefont {S.}~\bibnamefont {Rethinasamy}}, \bibinfo {author} {\bibfnamefont {K.}~\bibnamefont {Wang}},\ and\ \bibinfo {author} {\bibfnamefont {M.~M.}\ \bibnamefont {Wilde}},\ }\href {https://arxiv.org/abs/2312.03830} {\bibinfo {title} {{QSlack}: A slack-variable approach for variational quantum semi-definite programming}} (\bibinfo {year} {2023}{\natexlab{a}}),\ \Eprint {https://arxiv.org/abs/2312.03830} {arXiv:2312.03830 [quant-ph]} \BibitemShut {NoStop}%
\bibitem [{\citenamefont {Jaynes}(1957{\natexlab{a}})}]{Jaynes1957}%
  \BibitemOpen
  \bibfield  {author} {\bibinfo {author} {\bibfnamefont {E.~T.}\ \bibnamefont {Jaynes}},\ }\bibfield  {title} {\bibinfo {title} {Information theory and statistical mechanics},\ }\href {https://doi.org/10.1103/PhysRev.106.620} {\bibfield  {journal} {\bibinfo  {journal} {Physical Review}\ }\textbf {\bibinfo {volume} {106}},\ \bibinfo {pages} {620} (\bibinfo {year} {1957}{\natexlab{a}})}\BibitemShut {NoStop}%
\bibitem [{\citenamefont {Jaynes}(1957{\natexlab{b}})}]{Jaynes1957a}%
  \BibitemOpen
  \bibfield  {author} {\bibinfo {author} {\bibfnamefont {E.~T.}\ \bibnamefont {Jaynes}},\ }\bibfield  {title} {\bibinfo {title} {Information theory and statistical mechanics. {II}},\ }\href {https://doi.org/10.1103/PhysRev.108.171} {\bibfield  {journal} {\bibinfo  {journal} {Physical Review}\ }\textbf {\bibinfo {volume} {108}},\ \bibinfo {pages} {171} (\bibinfo {year} {1957}{\natexlab{b}})}\BibitemShut {NoStop}%
\bibitem [{\citenamefont {Jaynes}(1962)}]{Jaynes1962}%
  \BibitemOpen
  \bibfield  {author} {\bibinfo {author} {\bibfnamefont {E.~T.}\ \bibnamefont {Jaynes}},\ }\bibinfo {title} {Information theory and statistical mechanics}\ (\bibinfo {year} {1962})\ \bibinfo {note} {reprinted in E. T. Jaynes: Papers on Probability, Statistics and Statistical Physics, R.D. Rosenkrantz (ed.), D. Reidel Publishing Company, 1983}\BibitemShut {NoStop}%
\bibitem [{\citenamefont {Yunger~Halpern}\ \emph {et~al.}(2016)\citenamefont {Yunger~Halpern}, \citenamefont {Faist}, \citenamefont {Oppenheim},\ and\ \citenamefont {Winter}}]{YungerHalpern2016}%
  \BibitemOpen
  \bibfield  {author} {\bibinfo {author} {\bibfnamefont {N.}~\bibnamefont {Yunger~Halpern}}, \bibinfo {author} {\bibfnamefont {P.}~\bibnamefont {Faist}}, \bibinfo {author} {\bibfnamefont {J.}~\bibnamefont {Oppenheim}},\ and\ \bibinfo {author} {\bibfnamefont {A.}~\bibnamefont {Winter}},\ }\bibfield  {title} {\bibinfo {title} {Microcanonical and resource-theoretic derivations of the thermal state of a quantum system with noncommuting charges},\ }\href {https://doi.org/10.1038/ncomms12051} {\bibfield  {journal} {\bibinfo  {journal} {Nature Communications}\ }\textbf {\bibinfo {volume} {7}},\ \bibinfo {pages} {12051} (\bibinfo {year} {2016})}\BibitemShut {NoStop}%
\bibitem [{\citenamefont {Guryanova}\ \emph {et~al.}(2016)\citenamefont {Guryanova}, \citenamefont {Popescu}, \citenamefont {Short}, \citenamefont {Silva},\ and\ \citenamefont {Skrzypczyk}}]{Guryanova2016}%
  \BibitemOpen
  \bibfield  {author} {\bibinfo {author} {\bibfnamefont {Y.}~\bibnamefont {Guryanova}}, \bibinfo {author} {\bibfnamefont {S.}~\bibnamefont {Popescu}}, \bibinfo {author} {\bibfnamefont {A.~J.}\ \bibnamefont {Short}}, \bibinfo {author} {\bibfnamefont {R.}~\bibnamefont {Silva}},\ and\ \bibinfo {author} {\bibfnamefont {P.}~\bibnamefont {Skrzypczyk}},\ }\bibfield  {title} {\bibinfo {title} {Thermodynamics of quantum systems with multiple conserved quantities},\ }\href {https://doi.org/10.1038/ncomms12049} {\bibfield  {journal} {\bibinfo  {journal} {Nature Communications}\ }\textbf {\bibinfo {volume} {7}},\ \bibinfo {pages} {12049} (\bibinfo {year} {2016})}\BibitemShut {NoStop}%
\bibitem [{\citenamefont {Lostaglio}\ \emph {et~al.}(2017)\citenamefont {Lostaglio}, \citenamefont {Jennings},\ and\ \citenamefont {Rudolph}}]{Lostaglio2017}%
  \BibitemOpen
  \bibfield  {author} {\bibinfo {author} {\bibfnamefont {M.}~\bibnamefont {Lostaglio}}, \bibinfo {author} {\bibfnamefont {D.}~\bibnamefont {Jennings}},\ and\ \bibinfo {author} {\bibfnamefont {T.}~\bibnamefont {Rudolph}},\ }\bibfield  {title} {\bibinfo {title} {Thermodynamic resource theories, non-commutativity and maximum entropy principles},\ }\href {https://doi.org/10.1088/1367-2630/aa617f} {\bibfield  {journal} {\bibinfo  {journal} {New Journal of Physics}\ }\textbf {\bibinfo {volume} {19}},\ \bibinfo {pages} {043008} (\bibinfo {year} {2017})}\BibitemShut {NoStop}%
\bibitem [{\citenamefont {Yunger~Halpern}\ \emph {et~al.}(2020)\citenamefont {Yunger~Halpern}, \citenamefont {Beverland},\ and\ \citenamefont {Kalev}}]{YungerHalpern2020}%
  \BibitemOpen
  \bibfield  {author} {\bibinfo {author} {\bibfnamefont {N.}~\bibnamefont {Yunger~Halpern}}, \bibinfo {author} {\bibfnamefont {M.~E.}\ \bibnamefont {Beverland}},\ and\ \bibinfo {author} {\bibfnamefont {A.}~\bibnamefont {Kalev}},\ }\bibfield  {title} {\bibinfo {title} {Noncommuting conserved charges in quantum many-body thermalization},\ }\href {https://doi.org/10.1103/PhysRevE.101.042117} {\bibfield  {journal} {\bibinfo  {journal} {Physical Review E}\ }\textbf {\bibinfo {volume} {101}},\ \bibinfo {pages} {042117} (\bibinfo {year} {2020})}\BibitemShut {NoStop}%
\bibitem [{\citenamefont {Anshu}\ \emph {et~al.}(2021)\citenamefont {Anshu}, \citenamefont {Arunachalam}, \citenamefont {Kuwahara},\ and\ \citenamefont {Soleimanifar}}]{Anshu2021}%
  \BibitemOpen
  \bibfield  {author} {\bibinfo {author} {\bibfnamefont {A.}~\bibnamefont {Anshu}}, \bibinfo {author} {\bibfnamefont {S.}~\bibnamefont {Arunachalam}}, \bibinfo {author} {\bibfnamefont {T.}~\bibnamefont {Kuwahara}},\ and\ \bibinfo {author} {\bibfnamefont {M.}~\bibnamefont {Soleimanifar}},\ }\bibfield  {title} {\bibinfo {title} {Sample-efficient learning of interacting quantum systems},\ }\href {https://doi.org/10.1038/s41567-021-01232-0} {\bibfield  {journal} {\bibinfo  {journal} {Nature Physics}\ }\textbf {\bibinfo {volume} {17}},\ \bibinfo {pages} {931} (\bibinfo {year} {2021})}\BibitemShut {NoStop}%
\bibitem [{\citenamefont {Kranzl}\ \emph {et~al.}(2023)\citenamefont {Kranzl}, \citenamefont {Lasek}, \citenamefont {Joshi}, \citenamefont {Kalev}, \citenamefont {Blatt}, \citenamefont {Roos},\ and\ \citenamefont {Yunger~Halpern}}]{Kranzl2023}%
  \BibitemOpen
  \bibfield  {author} {\bibinfo {author} {\bibfnamefont {F.}~\bibnamefont {Kranzl}}, \bibinfo {author} {\bibfnamefont {A.}~\bibnamefont {Lasek}}, \bibinfo {author} {\bibfnamefont {M.~K.}\ \bibnamefont {Joshi}}, \bibinfo {author} {\bibfnamefont {A.}~\bibnamefont {Kalev}}, \bibinfo {author} {\bibfnamefont {R.}~\bibnamefont {Blatt}}, \bibinfo {author} {\bibfnamefont {C.~F.}\ \bibnamefont {Roos}},\ and\ \bibinfo {author} {\bibfnamefont {N.}~\bibnamefont {Yunger~Halpern}},\ }\bibfield  {title} {\bibinfo {title} {Experimental observation of thermalization with noncommuting charges},\ }\href {https://doi.org/10.1103/PRXQuantum.4.020318} {\bibfield  {journal} {\bibinfo  {journal} {PRX Quantum}\ }\textbf {\bibinfo {volume} {4}},\ \bibinfo {pages} {020318} (\bibinfo {year} {2023})}\BibitemShut {NoStop}%
\bibitem [{\citenamefont {Majidy}\ \emph {et~al.}(2023)\citenamefont {Majidy}, \citenamefont {Braasch}, \citenamefont {Lasek}, \citenamefont {Upadhyaya}, \citenamefont {Kalev},\ and\ \citenamefont {Yunger~Halpern}}]{Majidy2023}%
  \BibitemOpen
  \bibfield  {author} {\bibinfo {author} {\bibfnamefont {S.}~\bibnamefont {Majidy}}, \bibinfo {author} {\bibfnamefont {W.~F.}\ \bibnamefont {Braasch}}, \bibinfo {author} {\bibfnamefont {A.}~\bibnamefont {Lasek}}, \bibinfo {author} {\bibfnamefont {T.}~\bibnamefont {Upadhyaya}}, \bibinfo {author} {\bibfnamefont {A.}~\bibnamefont {Kalev}},\ and\ \bibinfo {author} {\bibfnamefont {N.}~\bibnamefont {Yunger~Halpern}},\ }\bibfield  {title} {\bibinfo {title} {Noncommuting conserved charges in quantum thermodynamics and beyond},\ }\href {https://doi.org/10.1038/s42254-023-00641-9} {\bibfield  {journal} {\bibinfo  {journal} {Nature Reviews Physics}\ }\textbf {\bibinfo {volume} {5}},\ \bibinfo {pages} {689} (\bibinfo {year} {2023})}\BibitemShut {NoStop}%
\bibitem [{\citenamefont {Liu}(2006)}]{liu2006gibbsstatesconsistencylocal}%
  \BibitemOpen
  \bibfield  {author} {\bibinfo {author} {\bibfnamefont {Y.-K.}\ \bibnamefont {Liu}},\ }\href {https://arxiv.org/abs/quant-ph/0603012} {\bibinfo {title} {Gibbs states and the consistency of local density matrices}} (\bibinfo {year} {2006}),\ \Eprint {https://arxiv.org/abs/quant-ph/0603012} {arXiv:quant-ph/0603012 [quant-ph]} \BibitemShut {NoStop}%
\bibitem [{\citenamefont {Amin}\ \emph {et~al.}(2018)\citenamefont {Amin}, \citenamefont {Andriyash}, \citenamefont {Rolfe}, \citenamefont {Kulchytskyy},\ and\ \citenamefont {Melko}}]{Amin2018}%
  \BibitemOpen
  \bibfield  {author} {\bibinfo {author} {\bibfnamefont {M.~H.}\ \bibnamefont {Amin}}, \bibinfo {author} {\bibfnamefont {E.}~\bibnamefont {Andriyash}}, \bibinfo {author} {\bibfnamefont {J.}~\bibnamefont {Rolfe}}, \bibinfo {author} {\bibfnamefont {B.}~\bibnamefont {Kulchytskyy}},\ and\ \bibinfo {author} {\bibfnamefont {R.}~\bibnamefont {Melko}},\ }\bibfield  {title} {\bibinfo {title} {Quantum {B}oltzmann machine},\ }\href {https://doi.org/10.1103/PhysRevX.8.021050} {\bibfield  {journal} {\bibinfo  {journal} {Physical Review X}\ }\textbf {\bibinfo {volume} {8}},\ \bibinfo {pages} {021050} (\bibinfo {year} {2018})}\BibitemShut {NoStop}%
\bibitem [{\citenamefont {Benedetti}\ \emph {et~al.}(2017)\citenamefont {Benedetti}, \citenamefont {Realpe-G\'omez}, \citenamefont {Biswas},\ and\ \citenamefont {Perdomo-Ortiz}}]{Benedetti2017}%
  \BibitemOpen
  \bibfield  {author} {\bibinfo {author} {\bibfnamefont {M.}~\bibnamefont {Benedetti}}, \bibinfo {author} {\bibfnamefont {J.}~\bibnamefont {Realpe-G\'omez}}, \bibinfo {author} {\bibfnamefont {R.}~\bibnamefont {Biswas}},\ and\ \bibinfo {author} {\bibfnamefont {A.}~\bibnamefont {Perdomo-Ortiz}},\ }\bibfield  {title} {\bibinfo {title} {Quantum-assisted learning of hardware-embedded probabilistic graphical models},\ }\href {https://doi.org/10.1103/PhysRevX.7.041052} {\bibfield  {journal} {\bibinfo  {journal} {Physical Review X}\ }\textbf {\bibinfo {volume} {7}},\ \bibinfo {pages} {041052} (\bibinfo {year} {2017})}\BibitemShut {NoStop}%
\bibitem [{\citenamefont {Kieferov\'a}\ and\ \citenamefont {Wiebe}(2017)}]{Kieferova2017}%
  \BibitemOpen
  \bibfield  {author} {\bibinfo {author} {\bibfnamefont {M.}~\bibnamefont {Kieferov\'a}}\ and\ \bibinfo {author} {\bibfnamefont {N.}~\bibnamefont {Wiebe}},\ }\bibfield  {title} {\bibinfo {title} {Tomography and generative training with quantum {B}oltzmann machines},\ }\href {https://doi.org/10.1103/PhysRevA.96.062327} {\bibfield  {journal} {\bibinfo  {journal} {Physical Review A}\ }\textbf {\bibinfo {volume} {96}},\ \bibinfo {pages} {062327} (\bibinfo {year} {2017})}\BibitemShut {NoStop}%
\bibitem [{\citenamefont {Coopmans}\ and\ \citenamefont {Benedetti}(2024)}]{Coopmans2024}%
  \BibitemOpen
  \bibfield  {author} {\bibinfo {author} {\bibfnamefont {L.}~\bibnamefont {Coopmans}}\ and\ \bibinfo {author} {\bibfnamefont {M.}~\bibnamefont {Benedetti}},\ }\bibfield  {title} {\bibinfo {title} {On the sample complexity of quantum {B}oltzmann machine learning},\ }\href {https://doi.org/10.1038/s42005-024-01763-x} {\bibfield  {journal} {\bibinfo  {journal} {Communications Physics}\ }\textbf {\bibinfo {volume} {7}},\ \bibinfo {pages} {274} (\bibinfo {year} {2024})}\BibitemShut {NoStop}%
\bibitem [{\citenamefont {Chen}\ \emph {et~al.}(2023{\natexlab{b}})\citenamefont {Chen}, \citenamefont {Kastoryano},\ and\ \citenamefont {Gilyén}}]{chen2023q_Gibbs_sampl}%
  \BibitemOpen
  \bibfield  {author} {\bibinfo {author} {\bibfnamefont {C.-F.}\ \bibnamefont {Chen}}, \bibinfo {author} {\bibfnamefont {M.~J.}\ \bibnamefont {Kastoryano}},\ and\ \bibinfo {author} {\bibfnamefont {A.}~\bibnamefont {Gilyén}},\ }\href {https://arxiv.org/abs/2311.09207} {\bibinfo {title} {An efficient and exact noncommutative quantum {G}ibbs sampler}} (\bibinfo {year} {2023}{\natexlab{b}}),\ \Eprint {https://arxiv.org/abs/2311.09207} {arXiv:2311.09207 [quant-ph]} \BibitemShut {NoStop}%
\bibitem [{\citenamefont {Chen}\ \emph {et~al.}(2023{\natexlab{c}})\citenamefont {Chen}, \citenamefont {Kastoryano}, \citenamefont {Brandão},\ and\ \citenamefont {Gilyén}}]{chen2023thermalstatepreparation}%
  \BibitemOpen
  \bibfield  {author} {\bibinfo {author} {\bibfnamefont {C.-F.}\ \bibnamefont {Chen}}, \bibinfo {author} {\bibfnamefont {M.~J.}\ \bibnamefont {Kastoryano}}, \bibinfo {author} {\bibfnamefont {F.~G. S.~L.}\ \bibnamefont {Brandão}},\ and\ \bibinfo {author} {\bibfnamefont {A.}~\bibnamefont {Gilyén}},\ }\href {https://arxiv.org/abs/2303.18224} {\bibinfo {title} {Quantum thermal state preparation}} (\bibinfo {year} {2023}{\natexlab{c}}),\ \Eprint {https://arxiv.org/abs/2303.18224} {arXiv:2303.18224 [quant-ph]} \BibitemShut {NoStop}%
\bibitem [{\citenamefont {Rajakumar}\ and\ \citenamefont {Watson}(2024)}]{rajakumar2024gibbssampling}%
  \BibitemOpen
  \bibfield  {author} {\bibinfo {author} {\bibfnamefont {J.}~\bibnamefont {Rajakumar}}\ and\ \bibinfo {author} {\bibfnamefont {J.~D.}\ \bibnamefont {Watson}},\ }\href {https://arxiv.org/abs/2408.01516} {\bibinfo {title} {Gibbs sampling gives quantum advantage at constant temperatures with {O}(1)-local {H}amiltonians}} (\bibinfo {year} {2024}),\ \Eprint {https://arxiv.org/abs/2408.01516} {arXiv:2408.01516 [quant-ph]} \BibitemShut {NoStop}%
\bibitem [{\citenamefont {Bergamaschi}\ \emph {et~al.}(2024)\citenamefont {Bergamaschi}, \citenamefont {Chen},\ and\ \citenamefont {Liu}}]{bergamaschi2024gibbs_sampling}%
  \BibitemOpen
  \bibfield  {author} {\bibinfo {author} {\bibfnamefont {T.}~\bibnamefont {Bergamaschi}}, \bibinfo {author} {\bibfnamefont {C.-F.}\ \bibnamefont {Chen}},\ and\ \bibinfo {author} {\bibfnamefont {Y.}~\bibnamefont {Liu}},\ }\href {https://arxiv.org/abs/2404.14639} {\bibinfo {title} {Quantum computational advantage with constant-temperature {G}ibbs sampling}} (\bibinfo {year} {2024}),\ \Eprint {https://arxiv.org/abs/2404.14639} {arXiv:2404.14639 [quant-ph]} \BibitemShut {NoStop}%
\bibitem [{\citenamefont {Chen}\ \emph {et~al.}(2024)\citenamefont {Chen}, \citenamefont {Li}, \citenamefont {Lu},\ and\ \citenamefont {Ying}}]{chen2024sim_Lindblad}%
  \BibitemOpen
  \bibfield  {author} {\bibinfo {author} {\bibfnamefont {H.}~\bibnamefont {Chen}}, \bibinfo {author} {\bibfnamefont {B.}~\bibnamefont {Li}}, \bibinfo {author} {\bibfnamefont {J.}~\bibnamefont {Lu}},\ and\ \bibinfo {author} {\bibfnamefont {L.}~\bibnamefont {Ying}},\ }\href {https://doi.org/10.48550/arXiv.2407.06594} {\bibinfo {title} {A randomized method for simulating {L}indblad equations and thermal state preparation}} (\bibinfo {year} {2024})\BibitemShut {NoStop}%
\bibitem [{\citenamefont {Rouz{\'e}}\ \emph {et~al.}(2024)\citenamefont {Rouz{\'e}}, \citenamefont {Franca},\ and\ \citenamefont {\'{A}lvaro M.~Alhambra}}]{rouze2024efficientthermalization}%
  \BibitemOpen
  \bibfield  {author} {\bibinfo {author} {\bibfnamefont {C.}~\bibnamefont {Rouz{\'e}}}, \bibinfo {author} {\bibfnamefont {D.~S.}\ \bibnamefont {Franca}},\ and\ \bibinfo {author} {\bibnamefont {\'{A}lvaro M.~Alhambra}},\ }\href {https://arxiv.org/abs/2403.12691} {\bibinfo {title} {Efficient thermalization and universal quantum computing with quantum {G}ibbs samplers}} (\bibinfo {year} {2024}),\ \Eprint {https://arxiv.org/abs/2403.12691} {arXiv:2403.12691 [quant-ph]} \BibitemShut {NoStop}%
\bibitem [{\citenamefont {Bakshi}\ \emph {et~al.}(2024)\citenamefont {Bakshi}, \citenamefont {Liu}, \citenamefont {Moitra},\ and\ \citenamefont {Tang}}]{bakshi2024hightemperaturegibbsstates}%
  \BibitemOpen
  \bibfield  {author} {\bibinfo {author} {\bibfnamefont {A.}~\bibnamefont {Bakshi}}, \bibinfo {author} {\bibfnamefont {A.}~\bibnamefont {Liu}}, \bibinfo {author} {\bibfnamefont {A.}~\bibnamefont {Moitra}},\ and\ \bibinfo {author} {\bibfnamefont {E.}~\bibnamefont {Tang}},\ }\href {https://arxiv.org/abs/2403.16850} {\bibinfo {title} {High-temperature {G}ibbs states are unentangled and efficiently preparable}} (\bibinfo {year} {2024}),\ \Eprint {https://arxiv.org/abs/2403.16850} {arXiv:2403.16850 [quant-ph]} \BibitemShut {NoStop}%
\bibitem [{\citenamefont {Ding}\ \emph {et~al.}(2024)\citenamefont {Ding}, \citenamefont {Li}, \citenamefont {Lin},\ and\ \citenamefont {Zhang}}]{ding2024preparationlowtemperaturegibbs}%
  \BibitemOpen
  \bibfield  {author} {\bibinfo {author} {\bibfnamefont {Z.}~\bibnamefont {Ding}}, \bibinfo {author} {\bibfnamefont {B.}~\bibnamefont {Li}}, \bibinfo {author} {\bibfnamefont {L.}~\bibnamefont {Lin}},\ and\ \bibinfo {author} {\bibfnamefont {R.}~\bibnamefont {Zhang}},\ }\href {https://arxiv.org/abs/2410.01206} {\bibinfo {title} {Polynomial-time preparation of low-temperature {G}ibbs states for {2D} toric code}} (\bibinfo {year} {2024}),\ \Eprint {https://arxiv.org/abs/2410.01206} {arXiv:2410.01206 [quant-ph]} \BibitemShut {NoStop}%
\bibitem [{\citenamefont {Garrigos}\ and\ \citenamefont {Gower}(2024)}]{garrigos2024handbookconvergencetheoremsstochastic}%
  \BibitemOpen
  \bibfield  {author} {\bibinfo {author} {\bibfnamefont {G.}~\bibnamefont {Garrigos}}\ and\ \bibinfo {author} {\bibfnamefont {R.~M.}\ \bibnamefont {Gower}},\ }\href {https://arxiv.org/abs/2301.11235} {\bibinfo {title} {Handbook of convergence theorems for (stochastic) gradient methods}} (\bibinfo {year} {2024}),\ \Eprint {https://arxiv.org/abs/2301.11235} {arXiv:2301.11235 [math.OC]} \BibitemShut {NoStop}%
\bibitem [{\citenamefont {Bubeck}(2015)}]{Bubeck2015}%
  \BibitemOpen
  \bibfield  {author} {\bibinfo {author} {\bibfnamefont {S.}~\bibnamefont {Bubeck}},\ }\bibfield  {title} {\bibinfo {title} {Convex optimization: Algorithms and complexity},\ }\href {https://doi.org/10.1561/2200000050} {\bibfield  {journal} {\bibinfo  {journal} {Foundations and Trends in Machine Learning}\ }\textbf {\bibinfo {volume} {8}},\ \bibinfo {pages} {231} (\bibinfo {year} {2015})}\BibitemShut {NoStop}%
\bibitem [{\citenamefont {Patel}\ \emph {et~al.}(2024{\natexlab{b}})\citenamefont {Patel}, \citenamefont {Koch}, \citenamefont {Patel},\ and\ \citenamefont {Wilde}}]{patel2024quantumboltzmannmachinelearning}%
  \BibitemOpen
  \bibfield  {author} {\bibinfo {author} {\bibfnamefont {D.}~\bibnamefont {Patel}}, \bibinfo {author} {\bibfnamefont {D.}~\bibnamefont {Koch}}, \bibinfo {author} {\bibfnamefont {S.}~\bibnamefont {Patel}},\ and\ \bibinfo {author} {\bibfnamefont {M.~M.}\ \bibnamefont {Wilde}},\ }\href {https://arxiv.org/abs/2410.12935} {\bibinfo {title} {Quantum {B}oltzmann machine learning of ground-state energies}} (\bibinfo {year} {2024}{\natexlab{b}}),\ \Eprint {https://arxiv.org/abs/2410.12935} {arXiv:2410.12935 [quant-ph]} \BibitemShut {NoStop}%
\bibitem [{\citenamefont {Patel}\ and\ \citenamefont {Wilde}(2024)}]{patel2024naturalgradientparameterestimation}%
  \BibitemOpen
  \bibfield  {author} {\bibinfo {author} {\bibfnamefont {D.}~\bibnamefont {Patel}}\ and\ \bibinfo {author} {\bibfnamefont {M.~M.}\ \bibnamefont {Wilde}},\ }\href {https://arxiv.org/abs/2410.24058} {\bibinfo {title} {Natural gradient and parameter estimation for quantum {B}oltzmann machines}} (\bibinfo {year} {2024}),\ \Eprint {https://arxiv.org/abs/2410.24058} {arXiv:2410.24058 [quant-ph]} \BibitemShut {NoStop}%
\bibitem [{\citenamefont {Minervini}\ \emph {et~al.}(2025{\natexlab{a}})\citenamefont {Minervini}, \citenamefont {Patel},\ and\ \citenamefont {Wilde}}]{minervini2025evolvedquantumboltzmannmachines}%
  \BibitemOpen
  \bibfield  {author} {\bibinfo {author} {\bibfnamefont {M.}~\bibnamefont {Minervini}}, \bibinfo {author} {\bibfnamefont {D.}~\bibnamefont {Patel}},\ and\ \bibinfo {author} {\bibfnamefont {M.~M.}\ \bibnamefont {Wilde}},\ }\href {https://arxiv.org/abs/2501.03367} {\bibinfo {title} {Evolved quantum {B}oltzmann machines}} (\bibinfo {year} {2025}{\natexlab{a}}),\ \Eprint {https://arxiv.org/abs/2501.03367} {arXiv:2501.03367 [quant-ph]} \BibitemShut {NoStop}%
\bibitem [{\citenamefont {Minervini}\ \emph {et~al.}(2025{\natexlab{b}})\citenamefont {Minervini}, \citenamefont {Patel},\ and\ \citenamefont {Wilde}}]{minervini2025quantumnaturalgradientthermalstate}%
  \BibitemOpen
  \bibfield  {author} {\bibinfo {author} {\bibfnamefont {M.}~\bibnamefont {Minervini}}, \bibinfo {author} {\bibfnamefont {D.}~\bibnamefont {Patel}},\ and\ \bibinfo {author} {\bibfnamefont {M.~M.}\ \bibnamefont {Wilde}},\ }\href {https://arxiv.org/abs/2502.19487} {\bibinfo {title} {Quantum natural gradient with thermal-state initialization}} (\bibinfo {year} {2025}{\natexlab{b}}),\ \Eprint {https://arxiv.org/abs/2502.19487} {arXiv:2502.19487 [quant-ph]} \BibitemShut {NoStop}%
\bibitem [{\citenamefont {Lloyd}(1996)}]{lloyd1996universal}%
  \BibitemOpen
  \bibfield  {author} {\bibinfo {author} {\bibfnamefont {S.}~\bibnamefont {Lloyd}},\ }\bibfield  {title} {\bibinfo {title} {Universal quantum simulators},\ }\href {https://doi.org/10.1126/science.273.5278.1073} {\bibfield  {journal} {\bibinfo  {journal} {Science}\ }\textbf {\bibinfo {volume} {273}},\ \bibinfo {pages} {1073} (\bibinfo {year} {1996})}\BibitemShut {NoStop}%
\bibitem [{\citenamefont {Childs}\ \emph {et~al.}(2018)\citenamefont {Childs}, \citenamefont {Maslov}, \citenamefont {Nam}, \citenamefont {Ross},\ and\ \citenamefont {Su}}]{childs2018toward}%
  \BibitemOpen
  \bibfield  {author} {\bibinfo {author} {\bibfnamefont {A.~M.}\ \bibnamefont {Childs}}, \bibinfo {author} {\bibfnamefont {D.}~\bibnamefont {Maslov}}, \bibinfo {author} {\bibfnamefont {Y.}~\bibnamefont {Nam}}, \bibinfo {author} {\bibfnamefont {N.~J.}\ \bibnamefont {Ross}},\ and\ \bibinfo {author} {\bibfnamefont {Y.}~\bibnamefont {Su}},\ }\bibfield  {title} {\bibinfo {title} {Toward the first quantum simulation with quantum speedup},\ }\href {https://doi.org/10.1073/pnas.1801723115} {\bibfield  {journal} {\bibinfo  {journal} {Proceedings of the National Academy of Sciences}\ }\textbf {\bibinfo {volume} {115}},\ \bibinfo {pages} {9456} (\bibinfo {year} {2018})}\BibitemShut {NoStop}%
\bibitem [{\citenamefont {Cleve}\ \emph {et~al.}(1998)\citenamefont {Cleve}, \citenamefont {Ekert}, \citenamefont {Macchiavello},\ and\ \citenamefont {Mosca}}]{Cleve1998}%
  \BibitemOpen
  \bibfield  {author} {\bibinfo {author} {\bibfnamefont {R.}~\bibnamefont {Cleve}}, \bibinfo {author} {\bibfnamefont {A.}~\bibnamefont {Ekert}}, \bibinfo {author} {\bibfnamefont {C.}~\bibnamefont {Macchiavello}},\ and\ \bibinfo {author} {\bibfnamefont {M.}~\bibnamefont {Mosca}},\ }\bibfield  {title} {\bibinfo {title} {Quantum algorithms revisited},\ }\href {https://doi.org/10.1098/rspa.1998.0164} {\bibfield  {journal} {\bibinfo  {journal} {Proceedings of the Royal Society A}\ }\textbf {\bibinfo {volume} {454}},\ \bibinfo {pages} {339} (\bibinfo {year} {1998})}\BibitemShut {NoStop}%
\bibitem [{\citenamefont {Bengtsson}\ and\ \citenamefont {Zyczkowski}(2006)}]{Bengtsson2006}%
  \BibitemOpen
  \bibfield  {author} {\bibinfo {author} {\bibfnamefont {I.}~\bibnamefont {Bengtsson}}\ and\ \bibinfo {author} {\bibfnamefont {K.}~\bibnamefont {Zyczkowski}},\ }\href {https://doi.org/10.1017/CBO9780511535048} {\emph {\bibinfo {title} {Geometry of Quantum States: An Introduction to Quantum Entanglement}}}\ (\bibinfo  {publisher} {Cambridge University Press},\ \bibinfo {year} {2006})\BibitemShut {NoStop}%
\bibitem [{\citenamefont {Liu}\ \emph {et~al.}(2019)\citenamefont {Liu}, \citenamefont {Yuan}, \citenamefont {Lu},\ and\ \citenamefont {Wang}}]{Liu2019}%
  \BibitemOpen
  \bibfield  {author} {\bibinfo {author} {\bibfnamefont {J.}~\bibnamefont {Liu}}, \bibinfo {author} {\bibfnamefont {H.}~\bibnamefont {Yuan}}, \bibinfo {author} {\bibfnamefont {X.-M.}\ \bibnamefont {Lu}},\ and\ \bibinfo {author} {\bibfnamefont {X.}~\bibnamefont {Wang}},\ }\bibfield  {title} {\bibinfo {title} {Quantum {F}isher information matrix and multiparameter estimation},\ }\href {https://doi.org/10.1088/1751-8121/ab5d4d} {\bibfield  {journal} {\bibinfo  {journal} {Journal of Physics A: Mathematical and Theoretical}\ }\textbf {\bibinfo {volume} {53}},\ \bibinfo {pages} {023001} (\bibinfo {year} {2019})}\BibitemShut {NoStop}%
\bibitem [{\citenamefont {Sidhu}\ and\ \citenamefont {Kok}(2020)}]{Sidhu2020}%
  \BibitemOpen
  \bibfield  {author} {\bibinfo {author} {\bibfnamefont {J.~S.}\ \bibnamefont {Sidhu}}\ and\ \bibinfo {author} {\bibfnamefont {P.}~\bibnamefont {Kok}},\ }\bibfield  {title} {\bibinfo {title} {Geometric perspective on quantum parameter estimation},\ }\href {https://doi.org/10.1116/1.5119961} {\bibfield  {journal} {\bibinfo  {journal} {AVS Quantum Science}\ }\textbf {\bibinfo {volume} {2}},\ \bibinfo {pages} {014701} (\bibinfo {year} {2020})}\BibitemShut {NoStop}%
\bibitem [{\citenamefont {Jarzyna}\ and\ \citenamefont {Kolodynski}(2020)}]{Jarzyna2020}%
  \BibitemOpen
  \bibfield  {author} {\bibinfo {author} {\bibfnamefont {M.}~\bibnamefont {Jarzyna}}\ and\ \bibinfo {author} {\bibfnamefont {J.}~\bibnamefont {Kolodynski}},\ }\bibfield  {title} {\bibinfo {title} {Geometric approach to quantum statistical inference},\ }\href {https://doi.org/10.1109/JSAIT.2020.3017469} {\bibfield  {journal} {\bibinfo  {journal} {IEEE Journal on Selected Areas in Information Theory}\ }\textbf {\bibinfo {volume} {1}},\ \bibinfo {pages} {367} (\bibinfo {year} {2020})}\BibitemShut {NoStop}%
\bibitem [{\citenamefont {Meyer}(2021)}]{Meyer2021fisherinformationin}%
  \BibitemOpen
  \bibfield  {author} {\bibinfo {author} {\bibfnamefont {J.~J.}\ \bibnamefont {Meyer}},\ }\bibfield  {title} {\bibinfo {title} {Fisher information in noisy intermediate-scale quantum applications},\ }\href {https://doi.org/10.22331/q-2021-09-09-539} {\bibfield  {journal} {\bibinfo  {journal} {{Quantum}}\ }\textbf {\bibinfo {volume} {5}},\ \bibinfo {pages} {539} (\bibinfo {year} {2021})}\BibitemShut {NoStop}%
\bibitem [{\citenamefont {Sbahi}\ \emph {et~al.}(2022)\citenamefont {Sbahi}, \citenamefont {Martinez}, \citenamefont {Patel}, \citenamefont {Saberi}, \citenamefont {Yoo}, \citenamefont {Roeder},\ and\ \citenamefont {Verdon}}]{sbahi2022provablyefficientvariationalgenerative}%
  \BibitemOpen
  \bibfield  {author} {\bibinfo {author} {\bibfnamefont {F.~M.}\ \bibnamefont {Sbahi}}, \bibinfo {author} {\bibfnamefont {A.~J.}\ \bibnamefont {Martinez}}, \bibinfo {author} {\bibfnamefont {S.}~\bibnamefont {Patel}}, \bibinfo {author} {\bibfnamefont {D.}~\bibnamefont {Saberi}}, \bibinfo {author} {\bibfnamefont {J.~H.}\ \bibnamefont {Yoo}}, \bibinfo {author} {\bibfnamefont {G.}~\bibnamefont {Roeder}},\ and\ \bibinfo {author} {\bibfnamefont {G.}~\bibnamefont {Verdon}},\ }\href {https://arxiv.org/abs/2206.04663v1} {\bibinfo {title} {Provably efficient variational generative modeling of quantum many-body systems via quantum-probabilistic information geometry}} (\bibinfo {year} {2022}),\ \Eprint {https://arxiv.org/abs/2206.04663v1} {arXiv:2206.04663v1 [quant-ph]} \BibitemShut {NoStop}%
\bibitem [{\citenamefont {Umegaki}(1962)}]{Umegaki1962}%
  \BibitemOpen
  \bibfield  {author} {\bibinfo {author} {\bibfnamefont {H.}~\bibnamefont {Umegaki}},\ }\bibfield  {title} {\bibinfo {title} {Conditional expectation in an operator algebra, {IV} (entropy and information)},\ }in\ \href {https://doi.org/10.2996/kmj/1138844604} {\emph {\bibinfo {booktitle} {Kodai Mathematical Seminar Reports}}},\ Vol.~\bibinfo {volume} {14}\ (\bibinfo {organization} {Department of Mathematics, Tokyo Institute of Technology},\ \bibinfo {year} {1962})\ pp.\ \bibinfo {pages} {59--85}\BibitemShut {NoStop}%
\bibitem [{Note1()}]{Note1}%
  \BibitemOpen
  \bibinfo {note} {The proof of convexity is attributed to~\cite {Jaynes1962} just before \cite [Proposition~3]{liu2006gibbsstatesconsistencylocal}.}\BibitemShut {Stop}%
\bibitem [{\citenamefont {Wolkowicz}\ \emph {et~al.}(2012)\citenamefont {Wolkowicz}, \citenamefont {Saigal},\ and\ \citenamefont {Vandenberghe}}]{wolkowicz2012handbook}%
  \BibitemOpen
  \bibfield  {author} {\bibinfo {author} {\bibfnamefont {H.}~\bibnamefont {Wolkowicz}}, \bibinfo {author} {\bibfnamefont {R.}~\bibnamefont {Saigal}},\ and\ \bibinfo {author} {\bibfnamefont {L.}~\bibnamefont {Vandenberghe}},\ }\href@noop {} {\emph {\bibinfo {title} {Handbook of Semidefinite Programming: Theory, Algorithms, and Applications}}},\ Vol.~\bibinfo {volume} {27}\ (\bibinfo  {publisher} {Springer Science \& Business Media},\ \bibinfo {year} {2012})\BibitemShut {NoStop}%
\bibitem [{\citenamefont {Cerezo}\ \emph {et~al.}(2021)\citenamefont {Cerezo}, \citenamefont {Arrasmith}, \citenamefont {Babbush}, \citenamefont {Benjamin}, \citenamefont {Endo}, \citenamefont {Fujii}, \citenamefont {McClean}, \citenamefont {Mitarai}, \citenamefont {Yuan}, \citenamefont {Cincio},\ and\ \citenamefont {Coles}}]{Cerezo2021vqa}%
  \BibitemOpen
  \bibfield  {author} {\bibinfo {author} {\bibfnamefont {M.}~\bibnamefont {Cerezo}}, \bibinfo {author} {\bibfnamefont {A.}~\bibnamefont {Arrasmith}}, \bibinfo {author} {\bibfnamefont {R.}~\bibnamefont {Babbush}}, \bibinfo {author} {\bibfnamefont {S.~C.}\ \bibnamefont {Benjamin}}, \bibinfo {author} {\bibfnamefont {S.}~\bibnamefont {Endo}}, \bibinfo {author} {\bibfnamefont {K.}~\bibnamefont {Fujii}}, \bibinfo {author} {\bibfnamefont {J.~R.}\ \bibnamefont {McClean}}, \bibinfo {author} {\bibfnamefont {K.}~\bibnamefont {Mitarai}}, \bibinfo {author} {\bibfnamefont {X.}~\bibnamefont {Yuan}}, \bibinfo {author} {\bibfnamefont {L.}~\bibnamefont {Cincio}},\ and\ \bibinfo {author} {\bibfnamefont {P.~J.}\ \bibnamefont {Coles}},\ }\bibfield  {title} {\bibinfo {title} {Variational quantum algorithms},\ }\href {https://doi.org/10.1038/s42254-021-00348-9} {\bibfield  {journal} {\bibinfo  {journal} {Nature Reviews Physics}\ }\textbf {\bibinfo {volume} {3}},\ \bibinfo {pages} {625–644} (\bibinfo {year} {2021})}\BibitemShut
  {NoStop}%
\bibitem [{\citenamefont {Bharti}\ \emph {et~al.}(2022{\natexlab{b}})\citenamefont {Bharti}, \citenamefont {Cervera-Lierta}, \citenamefont {Kyaw}, \citenamefont {Haug}, \citenamefont {Alperin-Lea}, \citenamefont {Anand}, \citenamefont {Degroote}, \citenamefont {Heimonen}, \citenamefont {Kottmann}, \citenamefont {Menke}, \citenamefont {Mok}, \citenamefont {Sim}, \citenamefont {Kwek},\ and\ \citenamefont {Aspuru-Guzik}}]{bharti2021noisy}%
  \BibitemOpen
  \bibfield  {author} {\bibinfo {author} {\bibfnamefont {K.}~\bibnamefont {Bharti}}, \bibinfo {author} {\bibfnamefont {A.}~\bibnamefont {Cervera-Lierta}}, \bibinfo {author} {\bibfnamefont {T.~H.}\ \bibnamefont {Kyaw}}, \bibinfo {author} {\bibfnamefont {T.}~\bibnamefont {Haug}}, \bibinfo {author} {\bibfnamefont {S.}~\bibnamefont {Alperin-Lea}}, \bibinfo {author} {\bibfnamefont {A.}~\bibnamefont {Anand}}, \bibinfo {author} {\bibfnamefont {M.}~\bibnamefont {Degroote}}, \bibinfo {author} {\bibfnamefont {H.}~\bibnamefont {Heimonen}}, \bibinfo {author} {\bibfnamefont {J.~S.}\ \bibnamefont {Kottmann}}, \bibinfo {author} {\bibfnamefont {T.}~\bibnamefont {Menke}}, \bibinfo {author} {\bibfnamefont {W.-K.}\ \bibnamefont {Mok}}, \bibinfo {author} {\bibfnamefont {S.}~\bibnamefont {Sim}}, \bibinfo {author} {\bibfnamefont {L.-C.}\ \bibnamefont {Kwek}},\ and\ \bibinfo {author} {\bibfnamefont {A.}~\bibnamefont {Aspuru-Guzik}},\ }\bibfield  {title} {\bibinfo {title} {Noisy intermediate-scale quantum ({NISQ}) algorithms},\
  }\href {https://doi.org/10.1103/RevModPhys.94.015004} {\bibfield  {journal} {\bibinfo  {journal} {Reviews of Modern Physics}\ }\textbf {\bibinfo {volume} {94}},\ \bibinfo {pages} {015004} (\bibinfo {year} {2022}{\natexlab{b}})},\ \Eprint {https://arxiv.org/abs/2101.08448} {2101.08448} \BibitemShut {NoStop}%
\bibitem [{\citenamefont {Hastings}(2007)}]{Hastings2007}%
  \BibitemOpen
  \bibfield  {author} {\bibinfo {author} {\bibfnamefont {M.~B.}\ \bibnamefont {Hastings}},\ }\bibfield  {title} {\bibinfo {title} {Quantum belief propagation: An algorithm for thermal quantum systems},\ }\href {https://doi.org/10.1103/PhysRevB.76.201102} {\bibfield  {journal} {\bibinfo  {journal} {Physical Review B}\ }\textbf {\bibinfo {volume} {76}},\ \bibinfo {pages} {201102} (\bibinfo {year} {2007})}\BibitemShut {NoStop}%
\bibitem [{\citenamefont {Kim}(2012)}]{Kim2012}%
  \BibitemOpen
  \bibfield  {author} {\bibinfo {author} {\bibfnamefont {I.~H.}\ \bibnamefont {Kim}},\ }\bibfield  {title} {\bibinfo {title} {Perturbative analysis of topological entanglement entropy from conditional independence},\ }\href {https://doi.org/10.1103/PhysRevB.86.245116} {\bibfield  {journal} {\bibinfo  {journal} {Physical Review B}\ }\textbf {\bibinfo {volume} {86}},\ \bibinfo {pages} {245116} (\bibinfo {year} {2012})}\BibitemShut {NoStop}%
\bibitem [{\citenamefont {Nocedal}\ and\ \citenamefont {Wright}(2006)}]{nocedal2006numerical_opt}%
  \BibitemOpen
  \bibfield  {author} {\bibinfo {author} {\bibfnamefont {J.}~\bibnamefont {Nocedal}}\ and\ \bibinfo {author} {\bibfnamefont {S.~J.}\ \bibnamefont {Wright}},\ }\href {https://doi.org/10.1007/978-0-387-40065-5} {\emph {\bibinfo {title} {Numerical optimization}}},\ \bibinfo {edition} {2nd}\ ed.\ (\bibinfo  {publisher} {Springer},\ \bibinfo {year} {2006})\BibitemShut {NoStop}%
\bibitem [{\citenamefont {Amari}(1998)}]{amari1998nat_grad}%
  \BibitemOpen
  \bibfield  {author} {\bibinfo {author} {\bibfnamefont {S.-I.}\ \bibnamefont {Amari}},\ }\bibfield  {title} {\bibinfo {title} {Natural gradient works efficiently in learning},\ }\href {https://doi.org/10.1162/089976698300017746} {\bibfield  {journal} {\bibinfo  {journal} {Neural Computation}\ }\textbf {\bibinfo {volume} {10}},\ \bibinfo {pages} {251} (\bibinfo {year} {1998})}\BibitemShut {NoStop}%
\bibitem [{\citenamefont {van Apeldoorn}\ and\ \citenamefont {Gily\'en}(2018)}]{vanApeldoorn2019arXiv}%
  \BibitemOpen
  \bibfield  {author} {\bibinfo {author} {\bibfnamefont {J.}~\bibnamefont {van Apeldoorn}}\ and\ \bibinfo {author} {\bibfnamefont {A.}~\bibnamefont {Gily\'en}},\ }\href {https://arxiv.org/abs/1804.05058} {\bibinfo {title} {Improvements in quantum sdp-solving with applications}} (\bibinfo {year} {2018}),\ \Eprint {https://arxiv.org/abs/1804.05058} {arXiv:1804.05058 [quant-ph]} \BibitemShut {NoStop}%
\bibitem [{\citenamefont {Jain}\ \emph {et~al.}(2022)\citenamefont {Jain}, \citenamefont {Piliouras},\ and\ \citenamefont {Sim}}]{jain2022matrix}%
  \BibitemOpen
  \bibfield  {author} {\bibinfo {author} {\bibfnamefont {R.}~\bibnamefont {Jain}}, \bibinfo {author} {\bibfnamefont {G.}~\bibnamefont {Piliouras}},\ and\ \bibinfo {author} {\bibfnamefont {R.}~\bibnamefont {Sim}},\ }\bibfield  {title} {\bibinfo {title} {Matrix multiplicative weights updates in quantum zero-sum games: Conservation laws \& recurrence},\ }\href {https://proceedings.neurips.cc/paper_files/paper/2022/file/1a78459dbbcdc90783d183999e72176c-Paper-Conference.pdf} {\bibfield  {journal} {\bibinfo  {journal} {Advances in Neural Information Processing Systems}\ }\textbf {\bibinfo {volume} {35}},\ \bibinfo {pages} {4123} (\bibinfo {year} {2022})}\BibitemShut {NoStop}%
\bibitem [{\citenamefont {Lin}\ \emph {et~al.}(2025)\citenamefont {Lin}, \citenamefont {Piliouras}, \citenamefont {Sim},\ and\ \citenamefont {Varvitsiotis}}]{Lin2025learninginquantum}%
  \BibitemOpen
  \bibfield  {author} {\bibinfo {author} {\bibfnamefont {W.}~\bibnamefont {Lin}}, \bibinfo {author} {\bibfnamefont {G.}~\bibnamefont {Piliouras}}, \bibinfo {author} {\bibfnamefont {R.}~\bibnamefont {Sim}},\ and\ \bibinfo {author} {\bibfnamefont {A.}~\bibnamefont {Varvitsiotis}},\ }\bibfield  {title} {\bibinfo {title} {Learning in quantum common-interest games and the separability problem},\ }\href {https://doi.org/10.22331/q-2025-04-03-1689} {\bibfield  {journal} {\bibinfo  {journal} {{Quantum}}\ }\textbf {\bibinfo {volume} {9}},\ \bibinfo {pages} {1689} (\bibinfo {year} {2025})}\BibitemShut {NoStop}%
\bibitem [{\citenamefont {Ben-David}\ and\ \citenamefont {Blais}(2023)}]{BenDavid2023}%
  \BibitemOpen
  \bibfield  {author} {\bibinfo {author} {\bibfnamefont {S.}~\bibnamefont {Ben-David}}\ and\ \bibinfo {author} {\bibfnamefont {E.}~\bibnamefont {Blais}},\ }\bibfield  {title} {\bibinfo {title} {A new minimax theorem for randomized algorithms},\ }\href {https://doi.org/10.1145/3626514} {\bibfield  {journal} {\bibinfo  {journal} {J.ournal of the ACM}\ }\textbf {\bibinfo {volume} {70}},\ \bibinfo {pages} {1} (\bibinfo {year} {2023})}\BibitemShut {NoStop}%
\bibitem [{\citenamefont {Beigi}(2013)}]{Beigi2013}%
  \BibitemOpen
  \bibfield  {author} {\bibinfo {author} {\bibfnamefont {S.}~\bibnamefont {Beigi}},\ }\bibfield  {title} {\bibinfo {title} {Sandwiched {R}\'enyi divergence satisfies data processing inequality},\ }\href {https://doi.org/10.1063/1.4838855} {\bibfield  {journal} {\bibinfo  {journal} {Journal of Mathematical Physics}\ }\textbf {\bibinfo {volume} {54}},\ \bibinfo {pages} {122202} (\bibinfo {year} {2013})}\BibitemShut {NoStop}%
\bibitem [{\citenamefont {Hoeffding}(1963)}]{Hoeffding1963}%
  \BibitemOpen
  \bibfield  {author} {\bibinfo {author} {\bibfnamefont {W.}~\bibnamefont {Hoeffding}},\ }\bibfield  {title} {\bibinfo {title} {Probability inequalities for sums of bounded random variables},\ }\href {https://doi.org/10.2307/2282952} {\bibfield  {journal} {\bibinfo  {journal} {Journal of the American Statistical Association}\ }\textbf {\bibinfo {volume} {58}},\ \bibinfo {pages} {13} (\bibinfo {year} {1963})}\BibitemShut {NoStop}%
\end{thebibliography}%

\appendix
\onecolumngrid

\section{Proof of Equation~\eqref{eq:energy-free-energy-bounds}}

\label{app:energy-free-energy-bnd}

\begin{lemma}
\label{lem:error-approx-SDP}The following bounds hold for every temperature
$T>0$:
\begin{equation}
E(\mathcal{Q},q)\geq F_{T}(\mathcal{Q},q)\geq E(\mathcal{Q},q)-T\ln d.
\end{equation}

\end{lemma}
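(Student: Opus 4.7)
The plan is to prove the two inequalities separately, in each case by taking the minimizer of one problem, observing it is feasible for the other, and applying the entropy bounds in~\eqref{eq:von-neumann-entropy-ineqs}. Since both $E(\mathcal{Q},q)$ and $F_T(\mathcal{Q},q)$ are minimized over the same feasible set (density matrices satisfying $\langle Q_i\rangle_\rho = q_i$ for all $i\in[c]$), no feasibility argument across different constraint sets is needed; the inequalities follow purely from comparing objective functions on a common feasible set.

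For the upper bound $E(\mathcal{Q},q)\geq F_T(\mathcal{Q},q)$, I would let $\rho^{\star}$ denote an optimizer of~\eqref{eq:energy-minimization}, so that $\langle H\rangle_{\rho^{\star}} = E(\mathcal{Q},q)$ and $\rho^{\star}$ is feasible for~\eqref{eq:min-free-energy}. Then
\begin{equation}
F_T(\mathcal{Q},q) \leq \langle H\rangle_{\rho^{\star}} - TS(\rho^{\star}) = E(\mathcal{Q},q) - TS(\rho^{\star}) \leq E(\mathcal{Q},q),
\end{equation}
where the final inequality uses $TS(\rho^{\star})\geq 0$ from the left inequality in~\eqref{eq:von-neumann-entropy-ineqs} together with $T>0$.

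For the lower bound $F_T(\mathcal{Q},q)\geq E(\mathcal{Q},q)-T\ln d$, I would let $\sigma^{\star}$ denote an optimizer of~\eqref{eq:min-free-energy}, so that $F_T(\mathcal{Q},q) = \langle H\rangle_{\sigma^{\star}} - TS(\sigma^{\star})$ and $\sigma^{\star}$ is feasible for~\eqml{eq:energy-minimization}. Applying the right inequality in~\eqref{eq:von-neumann-entropy-ineqs}, namely $S(\sigma^{\star})\leq \ln d$, and then using feasibility of $\sigma^{\star}$ for the energy minimization problem,
\begin{equation}
F_T(\mathcal{Q},q) = \langle H\rangle_{\sigma^{\star}} - TS(\sigma^{\star}) \geq \langle H\rangle_{\sigma^{\star}} - T\ln d \geq E(\mathcal{Q},q) - T\ln d.
\end{equation}

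There is no real obstacle here; the proof is essentially a one-liner in each direction. The only subtlety worth noting is that the argument implicitly uses that both optimization problems are feasible (so that the minimizers $\rho^{\star}$ and $\sigma^{\star}$ exist), which is a standing assumption whenever $E(\mathcal{Q},q)$ and $F_T(\mathcal{Q},q)$ are referenced as finite quantities; compactness of $\mathcal{D}_d$ and continuity of the objectives then guarantees attainment when the feasible set is nonempty.
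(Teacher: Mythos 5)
Your proof is correct and follows essentially the same route as the paper's: both inequalities are obtained by comparing the two objective functions over the common feasible set and invoking the entropy bounds $0\leq S(\rho)\leq\ln d$. The only cosmetic difference is that you evaluate at the optimizers while the paper argues via arbitrary feasible points, which amounts to the same thing.
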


\begin{proof}
We prove the left inequality first. Let $\rho$ be feasible for $E(\mathcal{Q}
,q)$; i.e., it is a density matrix that satisfies the conservation of the charges specified in the definition of $E(\mathcal{Q}
,q)$.  Then $\rho$ is also feasible for $F_{T}(\mathcal{Q},q)$. Since the
objective function value decreases by an additive factor of $TS(\rho)$ when
considering $E(\mathcal{Q},q)$ and $F_{T}(\mathcal{Q},q)$, the inequality
$E(\mathcal{Q},q)\geq F_{T}(\mathcal{Q},q)$ follows.

Now we prove the right inequality. Let $\rho$ be feasible for $F_{T}
(\mathcal{Q},q)$. Then it is also feasible for $E(\mathcal{Q},q)$.
Furthermore, the following bound holds:
\begin{equation}
\operatorname{Tr}[H\rho]-TS(\rho)\geq\operatorname{Tr}[H\rho]-T\ln d,
\end{equation}
as a consequence of the well known dimension bound on the von Neumann entropy,
and the dimension of $\rho$ being $d\times d$. As a result, we find that
\begin{align}
F_{T}(\mathcal{Q},q)  &  \geq\min_{\rho\in\mathcal{D}}\left\{
\operatorname{Tr}[H\rho]-T\ln d:\operatorname{Tr}[Q_{i}\rho]=q_{i}\ \forall
i\in\left[  c\right]  \right\} \\
&  =\min_{\rho\in\mathcal{D}}\left\{  \operatorname{Tr}[H\rho
]:\operatorname{Tr}[Q_{i}\rho]=q_{i}\ \forall i\in\left[  c\right]  \right\}
-T\ln d\\
&  =E(\mathcal{Q},q)-T\ln d,
\end{align}
thus concluding the proof.
\end{proof}

\section{Proof of Equation~\eqref{eq:dual-free-energy}}

\label{app:proof-dual-expression}

Recall that
\begin{align}
F_{T}(\mathcal{Q},q)  &  \coloneqq\min_{\rho\in\mathcal{D}_{d}}\left\{
\left\langle H\right\rangle _{\rho}-TS(\rho):\left\langle Q_{i}\right\rangle
_{\rho}=q_{i}\ \forall i\in\left[  c\right]  \right\} , \\
Z_{T}(\mu)  &  \coloneqq\operatorname{Tr}\!\left[  \exp\!\left(  -\frac{1}
{T}\left(  H-\sum_{i\in\left[  c\right]  }\mu_{i}Q_{i}\right)  \right)  \right]  ,\\
\rho_{T}(\mu)  &  \coloneqq\frac{1}{Z_{T}(\mu)}\exp\!\left(  -\frac{1}
{T}\left(  H-\sum_{i\in\left[  c\right]  }\mu_{i}Q_{i}\right)  \right)  . 
\end{align}
Consider the following chain of equalities:
\begin{align}
F_{T}(\mathcal{Q},q)  &  =\min_{\rho\in\mathcal{D}_{d}}\left\{
\operatorname{Tr}[H\rho]-TS(\rho):\operatorname{Tr}[Q_{i}\rho]=q_{i}\ \forall
i\in\left[  c\right]  \right\} \\
&  =\min_{\rho\in\mathcal{D}_{d}}\left\{  \operatorname{Tr}[H\rho
]-TS(\rho)+\sup_{\mu\in\mathbb{R}^{c}}\sum_{i\in\left[  c\right]  }\mu_{i}\left(
q_{i}-\operatorname{Tr}[Q_{i}\rho]\right)  \right\} \\
&  =\min_{\rho\in\mathcal{D}_{d}}\sup_{\mu\in\mathbb{R}^{c}}\left\{  \mu\cdot
q+\operatorname{Tr}[H\rho]-TS(\rho)-\sum_{i\in\left[  c\right]  }\mu_{i}\operatorname{Tr}
[Q_{i}\rho]\right\} \\
&  =\sup_{\mu\in\mathbb{R}^{c}}\min_{\rho\in\mathcal{D}_{d}}\left\{  \mu\cdot
q+\operatorname{Tr}[H\rho]-TS(\rho)-\sum_{i\in\left[  c\right]  }\mu_{i}\operatorname{Tr}
[Q_{i}\rho]\right\}   .\label{eq:FE-proof-steps-first}
\end{align}
The second equality follows by introducing Lagrange multipliers for each
constraint. The fourth equality follows by the Sion minimax theorem for
extended reals~\cite[Theorem~2.11]{BenDavid2023}. Indeed, the set
$\mathcal{D}_{d}$ of $d\times d$ density matrices is convex and compact, the
set $\mathbb{R}^{c}$ is convex, and the objective function is concave in
$\rho$, due to concavity of the von Neumann entropy, and linear in $\mu$.
Now consider that
\begin{align}
& \min_{\rho\in\mathcal{D}_{d}}\left\{  \mu\cdot
q+\operatorname{Tr}[H\rho]-TS(\rho)-\sum_{i\in\left[  c\right]  }\mu_{i}\operatorname{Tr}
[Q_{i}\rho]\right\} \notag \\
&  =T\min_{\rho\in\mathcal{D}_{d}}\left\{
\frac{\mu\cdot q}{T}+\frac{1}{T}\operatorname{Tr}[H\rho]-S(\rho)-\sum
_{i=1}^{c}\frac{\mu_{i}}{T}\operatorname{Tr}[Q_{i}\rho]\right\} \label{eq:FE-proof-steps-1}\\
&  =T\min_{\rho\in\mathcal{D}_{d}}\left\{
\frac{\mu\cdot q}{T}-S(\rho)-\operatorname{Tr}\!\left[  \rho\left( -\frac
{1}{T}\left(  H-\sum_{i\in\left[  c\right]  }\mu_{i}Q_{i}\right)  \right)  \right]  \right\}
\\
&  =T\min_{\rho\in\mathcal{D}_{d}}\left\{
\frac{\mu\cdot q}{T}-S(\rho)-\operatorname{Tr}\!\left[  \rho\ln\exp\!\left(
-\frac{1}{T}\left(  H-\sum_{i\in\left[  c\right]  }\mu_{i}Q_{i}\right)  \right)  \right]
\right\} \\
&  =T\min_{\rho\in\mathcal{D}_{d}}\left\{
\frac{\mu\cdot q}{T}-S(\rho)-\operatorname{Tr}\!\left[  \rho\ln\rho_{T}
(\mu)\right]  -\ln Z_{T}(\mu)\right\} \\
&  =T\min_{\rho\in\mathcal{D}_{d}}\left\{
\frac{\mu\cdot q}{T}+D(\rho\Vert\rho_{T}(\mu))-\ln Z_{T}(\mu)\right\} \\
&  =  \mu\cdot q-T\ln Z_{T}(\mu). \label{eq:FE-proof-steps-last}
\end{align}
 The
penultimate equality follows by recognizing the (Umegaki)\ relative entropy
\cite{Umegaki1962}, defined for states $\omega$ and $\tau$ as $D(\omega
\Vert\tau)\coloneqq\operatorname{Tr}[\omega\left(  \ln\omega-\ln\tau\right)
]$. Recall that
\begin{equation}
D(\omega\Vert\tau)\geq0
\end{equation}
for all states $\omega$ and $\tau$, and the inequality is saturated if and
only if $\omega=\tau$. Applying this observation leads to the last equality,
indicating that $\rho=\rho_{T}(\mu)$ is the optimal choice for $\rho$. Additionally, we have shown that
\begin{align}
    \mu\cdot q-T\ln Z_{T}(\mu) & = \mu\cdot
q+\operatorname{Tr}[H\rho_{T}(\mu)]-TS(\rho_{T}(\mu))-\sum_{i\in\left[  c\right]  }\mu_{i}\operatorname{Tr}
[Q_{i}\rho_{T}(\mu)] \label{eq:log-part-back-to-free-energy-1}\\
& = \mu\cdot
q+\langle H\rangle_{\rho_{T}(\mu)}-TS(\rho_{T}(\mu))-\sum_{i\in\left[  c\right]  }\mu_{i}\langle Q_{i}\rangle _{\rho_{T}(\mu)}\label{eq:log-part-back-to-free-energy-2}\\
& = \mu\cdot
q+ \langle H - \mu \cdot Q \rangle_{\rho_{T}(\mu)} -TS(\rho_{T}(\mu)).
\end{align}
Plugging~\eqref{eq:FE-proof-steps-1}--\eqref{eq:FE-proof-steps-last} back into~\eqref{eq:FE-proof-steps-first}, we conclude that
\begin{equation}
    F_{T}(\mathcal{Q},q) = \sup_{\mu\in\mathbb{R}^{c}}\left\{  \mu\cdot q-T\ln Z_{T}(\mu)\right\}.
\end{equation}

\section{Gradient and Hessian of Equation~\eqref{eq:dual-free-energy}}

\label{app:gradient-hessian}

Recall the expression in~\eqref{eq:dual-free-energy}:
\begin{equation}
F_{T}(\mathcal{Q},q)=\sup_{\mu\in\mathbb{R}^{c}}f(\mu),
\end{equation}
where
\begin{equation}
f(\mu)\coloneqq\mu\cdot q-T\ln Z_{T}(\mu).
\label{eq:legendre-dual-log-part-app-def}
\end{equation}
We are interested in determining the gradient and Hessian of the objective
function $f(\mu)$. To this end, let us adopt the shorthand $\mu\cdot
Q\equiv\sum_{i\in\left[  c\right]  }\mu_{i}Q_{i}$ and recall from Duhamel's formula and \cite[Eq.~(10) and Lemma~10]{patel2024quantumboltzmannmachinelearning}
that
\begin{align}
\frac{\partial}{\partial\mu_{i}}e^{-\frac{1}{T}\left(  H-\mu\cdot Q\right)  }
&  =\frac{1}{T}\int_{0}^{1}ds\ e^{-\frac{1}{T}\left(  H-\mu\cdot Q\right)
\left(  1-s\right)  }Q_{i}e^{-\frac{1}{T}\left(  H-\mu\cdot Q\right)
s}\label{eq:duhamel-deriv}\\
&  =\frac{1}{2T}\left\{  \Phi_{\mu}(Q_{i}),e^{-\frac{1}{T}\left(  H-\mu\cdot
Q\right)  }\right\}  ,
\end{align}
where $\Phi_{\mu}$ is the following quantum channel:
\begin{equation}
\Phi_{\mu}(\cdot)\coloneqq \int_{-\infty}^{\infty}dt\ p(t)\, e^{-i\left(  H-\mu\cdot
Q\right)  t/T}(\cdot)e^{i\left(  H-\mu\cdot Q\right)  t/T},
\end{equation}
and $p(t)$ is the following high-peak-tent probability density
\cite{patel2024quantumboltzmannmachinelearning}:
\begin{equation}
\label{eq:high-peak-tent-density}
p(t)\coloneqq \frac{\pi}{2}\ln\!\left\vert \cosh\!\left(  \frac{\pi t}{2}\right)
\right\vert .
\end{equation}

\begin{lemma}
For $i\in\left[  c\right]  $, the gradient of the function $f(\mu)$, as
defined in~\eqref{eq:legendre-dual-log-part-app-def}, has elements given by
\begin{equation}
\frac{\partial}{\partial\mu_{i}}f(\mu)=q_{i}-\left\langle Q_{i}\right\rangle
_{\rho_{T}(\mu)}.
\end{equation}

\end{lemma}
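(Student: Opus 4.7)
The plan is a direct computation exploiting linearity of differentiation, Duhamel's formula for the derivative of the matrix exponential (already stated in the excerpt), and cyclicity of the trace. I would begin by writing $f(\mu) = \mu \cdot q - T \ln Z_T(\mu)$ and differentiating term by term. The first piece yields $\frac{\partial}{\partial \mu_i}(\mu \cdot q) = q_i$ immediately, so everything reduces to showing that $\frac{\partial}{\partial \mu_i}\!\left(-T \ln Z_T(\mu)\right) = -\langle Q_i \rangle_{\rho_T(\mu)}$.

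Next, I would apply the chain rule to obtain
\[
\frac{\partial}{\partial \mu_i}\!\left(-T \ln Z_T(\mu)\right) = -\frac{T}{Z_T(\mu)} \frac{\partial Z_T(\mu)}{\partial \mu_i},
\]
and then exchange the derivative with the trace defining $Z_T(\mu)$. The nontrivial ingredient is differentiating $e^{-(H - \mu \cdot Q)/T}$ with respect to $\mu_i$, since $H - \mu \cdot Q$ and $Q_i$ need not commute. Here I would invoke Duhamel's formula, already recorded in~\eqref{eq:duhamel-deriv}, which expresses $\frac{\partial}{\partial \mu_i} e^{-(H - \mu \cdot Q)/T}$ as an integral over $s \in [0,1]$ of a product of $Q_i$ sandwiched between two exponentials of $-(H-\mu\cdot Q)/T$ with complementary weights.

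The key simplification comes from taking the trace: cyclicity collapses the $s$-integral, because $\operatorname{Tr}\!\left[e^{-(H-\mu\cdot Q)(1-s)/T} Q_i e^{-(H-\mu\cdot Q)s/T}\right] = \operatorname{Tr}\!\left[Q_i e^{-(H-\mu\cdot Q)/T}\right]$ for every $s$. Hence $\int_0^1 ds$ contributes a factor of $1$, giving
\[
\frac{\partial Z_T(\mu)}{\partial \mu_i} = \frac{1}{T} \operatorname{Tr}\!\left[Q_i e^{-(H-\mu\cdot Q)/T}\right].
\]
Dividing by $Z_T(\mu)$ produces $\langle Q_i \rangle_{\rho_T(\mu)}$, and putting the pieces together yields $\frac{\partial f}{\partial \mu_i}(\mu) = q_i - \langle Q_i \rangle_{\rho_T(\mu)}$.

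The only potential obstacle is the interchange of differentiation and trace, plus the non-commutativity issue in differentiating the matrix exponential; both are resolved once Duhamel's formula and the cyclicity of the trace are invoked, so this is a short and essentially routine calculation rather than a delicate analytic argument.
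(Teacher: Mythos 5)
Your proposal is correct and follows essentially the same route as the paper: differentiate term by term, apply Duhamel's formula from~\eqref{eq:duhamel-deriv} to handle the non-commutativity, and simplify under the trace. The only cosmetic difference is that you use the integral form of~\eqref{eq:duhamel-deriv} and collapse the $s$-integral by cyclicity of the trace, whereas the paper uses the equivalent anticommutator form $\frac{1}{2T}\{\Phi_{\mu}(Q_{i}),e^{-(H-\mu\cdot Q)/T}\}$ together with the self-adjointness of $\Phi_{\mu}$ under the trace and the fact that $\Phi_{\mu}$ fixes $e^{-(H-\mu\cdot Q)/T}$; both manipulations yield $\frac{1}{T}\operatorname{Tr}[Q_{i}e^{-(H-\mu\cdot Q)/T}]$ and hence the same conclusion.
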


\begin{proof}
Consider that
\begin{align}
\frac{\partial}{\partial\mu_{i}}f(\mu)  &  =\frac{\partial}{\partial\mu_{i}
}\left(  \mu\cdot q-T\ln Z_{T}(\mu)\right) \\
&  =q_{i}-T\frac{\partial}{\partial\mu_{i}}\ln Z_{T}(\mu)\\
&  =q_{i}-T\frac{1}{Z_{T}(\mu)}\frac{\partial}{\partial\mu_{i}}Z_{T}(\mu).
\label{eq:proof-deriv-log-part}
\end{align}
Now consider that
\begin{align}
\frac{\partial}{\partial\mu_{i}}Z_{T}(\mu)  &  =\frac{\partial}{\partial
\mu_{i}}\operatorname{Tr}\!\left[  e^{-\frac{1}{T}\left(  H-\mu\cdot Q\right)
}\right] \\
&  =\frac{1}{2T}\operatorname{Tr}\!\left[  \left\{  \Phi_{\mu}(Q_{i}
),e^{-\frac{1}{T}\left(  H-\mu\cdot Q\right)  }\right\}  \right] \\
&  =\frac{1}{2T}\operatorname{Tr}\!\left[  \left\{  Q_{i},\Phi_{\mu}\!\left(
e^{-\frac{1}{T}\left(  H-\mu\cdot Q\right)  }\right)  \right\}  \right] \\
&  =\frac{1}{T}\operatorname{Tr}\!\left[  Q_{i}e^{-\frac{1}{T}\left(
H-\mu\cdot Q\right)  }\right]  .
\end{align}
Substituting into~\eqref{eq:proof-deriv-log-part}, then we find that
\begin{align}
q_{i}-T\frac{1}{Z_{T}(\mu)}\frac{\partial}{\partial\mu_{i}}Z_{T}(\mu)  &
=q_{i}-T\frac{1}{Z_{T}(\mu)}\frac{1}{T}\operatorname{Tr}\!\left[
Q_{i}e^{-\frac{1}{T}\left(  H-\mu\cdot Q\right)  }\right] \\
&  =q_{i}-\operatorname{Tr}[Q_{i}\rho_{T}(\mu)]\\
&  =q_{i}-\left\langle Q_{i}\right\rangle _{\rho_{T}(\mu)},
\end{align}
thus concluding the proof.
\end{proof}

\bigskip

The Kubo--Mori information matrix of the parameterized family $\left(
\rho_{T}(\mu)\right)  _{\mu\in\mathbb{R}^{c}}$ of thermal states has the
following form:
\begin{align}
I_{ij}^{\operatorname{KM}}(\mu)  &  =\frac{1}{T}\left(  \frac{1}
{2}\left\langle \left\{  \Phi_{\mu}(Q_{i}),Q_{j}\right\}  \right\rangle
_{\rho_{T}(\mu)}-\left\langle Q_{i}\right\rangle _{\rho_{T}(\mu)}\left\langle
Q_{j}\right\rangle _{\rho_{T}(\mu)}\right) \label{eq:kubo-mori-1}\\
&  =\frac{1}{T}\left(  \int_{0}^{1}ds\ \operatorname{Tr}[\rho_{T}(\mu
)^{1-s}Q_{i}\rho_{T}(\mu)^{s}Q_{j}]-\left\langle Q_{i}\right\rangle _{\rho
_{T}(\mu)}\left\langle Q_{j}\right\rangle _{\rho_{T}(\mu)}\right)  .
\label{eq:kubo-mori-2}
\end{align}

\begin{lemma}
\label{lem:hessian-kubo-mori}For $i,j\in\left[  c\right]  $, the Hessian of
the function $f(\mu)$, as defined in
\eqref{eq:legendre-dual-log-part-app-def}, has elements given by
\begin{equation}
\frac{\partial^{2}}{\partial\mu_{i}\partial\mu_{j}}f(\mu)=-I_{ij}
^{\operatorname{KM}}(\mu). \label{eq:hessian-neg-kubo-mori}
\end{equation}

\end{lemma}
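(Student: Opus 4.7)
The plan is to differentiate the gradient formula of the preceding lemma one more time and recognize the resulting expression as $-I_{ij}^{\operatorname{KM}}(\mu)$ in the form~\eqref{eq:kubo-mori-1}. Concretely, starting from
\begin{equation}
\frac{\partial}{\partial\mu_{i}}f(\mu)=q_{i}-\operatorname{Tr}[Q_{i}\rho_{T}(\mu)],
\end{equation}
the task reduces to computing $\frac{\partial}{\partial\mu_{j}}\operatorname{Tr}[Q_{i}\rho_{T}(\mu)]$, which in turn requires $\frac{\partial}{\partial\mu_{j}}\rho_{T}(\mu)$. I would apply the quotient rule to $\rho_{T}(\mu)=e^{-(H-\mu\cdot Q)/T}/Z_{T}(\mu)$ and use the two ingredients already assembled in Appendix~\ref{app:gradient-hessian}: the Duhamel identity in the form
\begin{equation}
\frac{\partial}{\partial\mu_{j}}e^{-\frac{1}{T}(H-\mu\cdot Q)}=\frac{1}{2T}\bigl\{\Phi_{\mu}(Q_{j}),e^{-\frac{1}{T}(H-\mu\cdot Q)}\bigr\},
\end{equation}
together with $\frac{\partial}{\partial\mu_{j}}Z_{T}(\mu)=\frac{1}{T}\operatorname{Tr}[Q_{j}e^{-(H-\mu\cdot Q)/T}]$ already derived in the proof of the gradient formula.

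Next, I would multiply by $Q_{i}$ and take the trace. Using cyclicity of the trace in the form $\operatorname{Tr}[A\{B,C\}]=\operatorname{Tr}[\{A,B\}C]$, the numerator term collapses to $\frac{1}{2T}\operatorname{Tr}[\{Q_{i},\Phi_{\mu}(Q_{j})\}\rho_{T}(\mu)]=\frac{1}{2T}\langle\{Q_{i},\Phi_{\mu}(Q_{j})\}\rangle_{\rho_{T}(\mu)}$, while the subtracted term produces $\frac{1}{T}\langle Q_{i}\rangle_{\rho_{T}(\mu)}\langle Q_{j}\rangle_{\rho_{T}(\mu)}$. Combining, one obtains
\begin{equation}
\frac{\partial^{2}}{\partial\mu_{i}\partial\mu_{j}}f(\mu)=-\frac{1}{T}\!\left(\frac{1}{2}\langle\{\Phi_{\mu}(Q_{j}),Q_{i}\}\rangle_{\rho_{T}(\mu)}-\langle Q_{i}\rangle_{\rho_{T}(\mu)}\langle Q_{j}\rangle_{\rho_{T}(\mu)}\right),
\end{equation}
which matches $-I_{ij}^{\operatorname{KM}}(\mu)$ of~\eqref{eq:kubo-mori-1} after using the symmetry $\{A,B\}=\{B,A\}$ and the equivalence between the channel form and the integral form displayed in~\eqref{eq:kubo-mori-2} (via $\Phi_{\mu}(X)=\int_{0}^{1}ds\,\rho_{T}(\mu)^{-s}X\rho_{T}(\mu)^{s}$ up to the standard identification explained in~\cite{patel2024quantumboltzmannmachinelearning}).

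I expect the computation to be essentially routine: the only mild subtlety is the bookkeeping between the two equivalent representations of $I^{\operatorname{KM}}$ (the anticommutator-with-$\Phi_{\mu}$ form in~\eqref{eq:kubo-mori-1} versus the integral form in~\eqref{eq:kubo-mori-2}), and ensuring the symmetry under $i\leftrightarrow j$ which is expected from Clairaut's theorem since $f\in C^{2}$. The symmetry can be verified directly on the integral form via the change of variables $s\mapsto 1-s$ and cyclicity of the trace, providing a consistency check on the derivation.
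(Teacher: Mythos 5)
Your proposal is correct and follows essentially the same route as the paper's proof: differentiate the gradient formula once more, apply the quotient rule to $\rho_{T}(\mu)=e^{-(H-\mu\cdot Q)/T}/Z_{T}(\mu)$ using the Duhamel/anticommutator identity and $\partial_{j}Z_{T}(\mu)=\frac{1}{T}\operatorname{Tr}[Q_{j}e^{-(H-\mu\cdot Q)/T}]$, and use trace cyclicity to land on the anticommutator form~\eqref{eq:kubo-mori-1}. The only caveat is your parenthetical claim $\Phi_{\mu}(X)=\int_{0}^{1}ds\,\rho_{T}(\mu)^{-s}X\rho_{T}(\mu)^{s}$, which is not a literal operator identity (the correct statement is the equality of the two forms of the derivative in~\eqref{eq:duhamel-deriv}, i.e., $\frac{1}{2}\{\Phi_{\mu}(Q_{i}),e^{-(H-\mu\cdot Q)/T}\}=\int_{0}^{1}ds\,e^{-(H-\mu\cdot Q)(1-s)/T}Q_{i}e^{-(H-\mu\cdot Q)s/T}$); this aside is inessential since your main computation already yields~\eqref{eq:kubo-mori-1} directly.
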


\begin{proof}
Consider that
\begin{align}
\frac{\partial^{2}}{\partial\mu_{i}\partial\mu_{j}}f(\mu)  &  =\frac
{\partial^{2}}{\partial\mu_{i}\partial\mu_{j}}\left(  \mu\cdot q-T\ln
Z_{T}(\mu)\right) \\
&  =\frac{\partial}{\partial\mu_{i}}\left(  \frac{\partial}{\partial\mu_{j}
}\left(  \mu\cdot q-T\ln Z_{T}(\mu)\right)  \right) \\
&  =\frac{\partial}{\partial\mu_{i}}\left(  q_{j}-\operatorname{Tr}[Q_{j}
\rho_{T}(\mu)]\right) \\
&  =-\frac{\partial}{\partial\mu_{i}}\left(\frac{1}{Z_{T}(\mu)}\operatorname{Tr}
\!\left[  Q_{j}e^{-\frac{1}{T}\left(  H-\mu\cdot Q\right)  }\right]\right) \\
&  =\frac{1}{\left[  Z_{T}(\mu)\right]  ^{2}}\left(  \frac{\partial}
{\partial\mu_{i}}Z_{T}(\mu)\right)  \operatorname{Tr}\!\left[  Q_{j}
e^{-\frac{1}{T}\left(  H-\mu\cdot Q\right)  }\right] -\frac{1}{Z_{T}(\mu)}\operatorname{Tr}\!\left[  Q_{j}\frac{\partial
}{\partial\mu_{i}}e^{-\frac{1}{T}\left(  H-\mu\cdot Q\right)  }\right] \\
&  =\frac{1}{\left[  Z_{T}(\mu)\right]  ^{2}}\frac{1}{T}\operatorname{Tr}\!
\left[  Q_{i}e^{-\frac{1}{T}\left(  H-\mu\cdot Q\right)  }\right]
\operatorname{Tr}\!\left[  Q_{j}e^{-\frac{1}{T}\left(  H-\mu\cdot Q\right)
}\right] \nonumber\\
&  \qquad-\frac{1}{Z_{T}(\mu)}\operatorname{Tr}\!\left[  Q_{j}\frac{1}
{2T}\left\{  \Phi_{\mu}(Q_{i}),e^{-\frac{1}{T}\left(  H-\mu\cdot Q\right)
}\right\}  \right] \\
&  =\frac{1}{T}\left(  \left\langle Q_{i}\right\rangle _{\rho_{T}(\mu
)}\left\langle Q_{j}\right\rangle _{\rho_{T}(\mu)}-\frac{1}{2}
\operatorname{Tr}\!\left[  Q_{j}\left\{  \Phi_{\mu}(Q_{i}),\rho_{T}
(\mu)\right\}  \right]  \right) \\
&  =\frac{1}{T}\left(  \left\langle Q_{i}\right\rangle _{\rho_{T}(\mu
)}\left\langle Q_{j}\right\rangle _{\rho_{T}(\mu)}-\frac{1}{2}
\operatorname{Tr}\!\left[  \left\{  \Phi_{\mu}(Q_{i}),Q_{j}\right\}  \rho
_{T}(\mu)\right]  \right) \\
&  =\frac{1}{T}\left(  \left\langle Q_{i}\right\rangle _{\rho_{T}(\mu
)}\left\langle Q_{j}\right\rangle _{\rho_{T}(\mu)}-\frac{1}{2}\left\langle
\left\{  \Phi_{\mu}(Q_{i}),Q_{j}\right\}  \right\rangle _{\rho_{T}(\mu
)}\right)  ,
\end{align}
thus concluding the proof of~\eqref{eq:hessian-neg-kubo-mori} using the
expression in~\eqref{eq:kubo-mori-1}. To arrive at the other expression in
\eqref{eq:kubo-mori-2}, consider instead using the derivative expression in~\eqref{eq:duhamel-deriv}.
\end{proof}

\begin{lemma}
\label{lem:upper-bnd-matrix-elements-hessian}
The following upper bound holds for all $\mu\in\mathbb{R}^{c}$:
\begin{equation}
\left\vert \frac{\partial^{2}}{\partial\mu_{i}\partial\mu_{j}}f(\mu
)\right\vert \leq\frac{2}{T}\left\Vert Q_{i}\right\Vert \left\Vert
Q_{j}\right\Vert ,
\end{equation}
where $f(\mu)$ is defined in~\eqref{eq:legendre-dual-log-part-app-def}.
\end{lemma}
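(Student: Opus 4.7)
The plan is to start from the closed form for the Hessian matrix elements given in Lemma~\ref{lem:hessian-kubo-mori}, specifically the expression arising from combining~\eqref{eq:hessian-neg-kubo-mori} with the integral form~\eqref{eq:kubo-mori-2} of the Kubo--Mori information matrix. This yields
\begin{equation}
\frac{\partial^{2}}{\partial\mu_{i}\partial\mu_{j}}f(\mu)
= \frac{1}{T}\left\langle Q_{i}\right\rangle _{\rho_{T}(\mu)}\left\langle Q_{j}\right\rangle _{\rho_{T}(\mu)}
- \frac{1}{T}\int_{0}^{1} ds\ \operatorname{Tr}\!\left[\rho_{T}(\mu)^{1-s} Q_{i}\, \rho_{T}(\mu)^{s} Q_{j}\right].
\end{equation}
My first step is to apply the triangle inequality (including passing absolute values inside the integral in~$s$) to separate the two contributions and bound each independently.

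For the product term $\left\langle Q_{i}\right\rangle_{\rho_{T}(\mu)} \left\langle Q_{j}\right\rangle_{\rho_{T}(\mu)}$, I would use the elementary operator-norm inequality $\left|\operatorname{Tr}[\rho X]\right| \leq \|X\|$ for any density matrix $\rho$ and Hermitian $X$, applied twice, to obtain a bound of $\|Q_{i}\|\,\|Q_{j}\|$.

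The main technical step is bounding the integrand $\left|\operatorname{Tr}[\rho_{T}(\mu)^{1-s} Q_{i}\, \rho_{T}(\mu)^{s} Q_{j}]\right|$ uniformly in $s \in [0,1]$. Here I invoke the multivariate H\"{o}lder inequality for Schatten norms: for a product of four factors, I choose exponents $p_{1}=\tfrac{1}{1-s}$, $p_{2}=\infty$, $p_{3}=\tfrac{1}{s}$, $p_{4}=\infty$, which satisfy $\sum_{k} \tfrac{1}{p_{k}} = 1$. This yields
\begin{equation}
\left|\operatorname{Tr}[\rho_{T}(\mu)^{1-s} Q_{i}\, \rho_{T}(\mu)^{s} Q_{j}]\right|
\leq \left\|\rho_{T}(\mu)^{1-s}\right\|_{1/(1-s)} \|Q_{i}\| \left\|\rho_{T}(\mu)^{s}\right\|_{1/s} \|Q_{j}\|.
\end{equation}
Since $\left\|\rho^{\alpha}\right\|_{1/\alpha} = (\operatorname{Tr}[\rho])^{\alpha} = 1$ for any $\alpha \in (0,1)$ and any density matrix $\rho$, each Schatten-norm factor equals one, giving the bound $\|Q_{i}\|\,\|Q_{j}\|$ uniformly in~$s$. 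Integrating over $s \in [0,1]$ preserves this bound. Combining the two contributions with the prefactor $\tfrac{1}{T}$ yields the claimed upper bound $\tfrac{2}{T}\|Q_{i}\|\|Q_{j}\|$.

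The most delicate step is the choice of H\"{o}lder exponents and the verification that $s=0$ and $s=1$ (where $p_{3}$ or $p_{1}$ blow up) are handled consistently; this is routine since the corresponding factor reduces to either $\|I\|_{\infty} = 1$ or $\|\rho\|_{1} = 1$ in the limit, and the integrand is continuous in~$s$. The rest is bookkeeping.
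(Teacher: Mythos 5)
Your proposal is correct and coincides with the paper's own argument — specifically, the second ("alternative") proof given in Appendix~\ref{app:gradient-hessian}, which uses the integral form~\eqref{eq:kubo-mori-2}, the triangle inequality, and the multivariate H\"{o}lder inequality with exponents $\tfrac{1}{1-s},\infty,\tfrac{1}{s},\infty$ so that $\|\rho_T(\mu)^{1-s}\|_{1/(1-s)}=\|\rho_T(\mu)^{s}\|_{1/s}=1$. (The paper also records a first proof via the anticommutator form~\eqref{eq:kubo-mori-1}, but your route is exactly its alternative proof.)
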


\begin{proof}
We give two different proofs. Consider from Lemma~\ref{lem:hessian-kubo-mori} that
\begin{align}
\left\vert \frac{\partial^{2}}{\partial\mu_{i}\partial\mu_{j}}f(\mu
)\right\vert  &  =\frac{1}{T}\left\vert \left\langle Q_{i}\right\rangle
_{\rho_{T}(\mu)}\left\langle Q_{j}\right\rangle _{\rho_{T}(\mu)}-\frac{1}
{2}\left\langle \left\{  \Phi_{\mu}(Q_{i}),Q_{j}\right\}  \right\rangle
_{\rho_{T}(\mu)}\right\vert \nonumber\\
&  \leq\frac{1}{T}\left\vert \left\langle Q_{i}\right\rangle _{\rho_{T}(\mu
)}\left\langle Q_{j}\right\rangle _{\rho_{T}(\mu)}\right\vert +\frac{1}
{2T}\left\vert \left\langle \left\{  \Phi_{\mu}(Q_{i}),Q_{j}\right\}
\right\rangle _{\rho_{T}(\mu)}\right\vert \\
&  =\frac{1}{T}\left\vert \left\langle Q_{i}\right\rangle _{\rho_{T}(\mu
)}\right\vert \left\vert \left\langle Q_{j}\right\rangle _{\rho_{T}(\mu
)}\right\vert +\frac{1}{2T}\left\vert \left\langle \left\{  \Phi_{\mu}
(Q_{i}),Q_{j}\right\}  \right\rangle _{\rho_{T}(\mu)}\right\vert \\
&  \leq\frac{1}{T}\left\Vert Q_{i}\right\Vert \left\Vert Q_{j}\right\Vert
+\frac{1}{2T}\left\Vert \left\{  \Phi_{\mu}(Q_{i}),Q_{j}\right\}  \right\Vert
\\
&  \leq\frac{1}{T}\left\Vert Q_{i}\right\Vert \left\Vert Q_{j}\right\Vert
+\frac{1}{2T}2\left\Vert Q_{i}\right\Vert \left\Vert Q_{j}\right\Vert \\
&  =\frac{2}{T}\left\Vert Q_{i}\right\Vert \left\Vert Q_{j}\right\Vert ,
\end{align}
where we used the fact that
\begin{align}
\left\Vert \left\{  \Phi_{\mu}(Q_{i}),Q_{j}\right\}  \right\Vert  &
\leq\left\Vert \Phi_{\mu}(Q_{i})Q_{j}\right\Vert +\left\Vert Q_{j}\Phi_{\mu
}(Q_{i})\right\Vert \\
&  \leq2\left\Vert \Phi_{\mu}(Q_{i})\right\Vert \left\Vert Q_{j}\right\Vert \\
&  \leq2\left\Vert Q_{i}\right\Vert \left\Vert Q_{j}\right\Vert ,
\end{align}
with the last inequality following from convexity and unitary invariance of
the spectral norm.

Alternatively, we could use Lemma~\ref{lem:hessian-kubo-mori} and the
expression in~\eqref{eq:kubo-mori-2} to arrive at this same bound. Consider
that
\begin{align}
\left\vert \frac{\partial^{2}}{\partial\mu_{i}\partial\mu_{j}}f(\mu
)\right\vert  &  =\frac{1}{T}\left\vert \left\langle Q_{i}\right\rangle
_{\rho_{T}(\mu)}\left\langle Q_{j}\right\rangle _{\rho_{T}(\mu)}-\int_{0}
^{1}ds\ \operatorname{Tr}[\rho_{T}(\mu)^{1-s}Q_{i}\rho_{T}(\mu)^{s}
Q_{j}]\right\vert \\
&  \leq\frac{1}{T}\left\vert \left\langle Q_{i}\right\rangle _{\rho_{T}(\mu
)}\left\langle Q_{j}\right\rangle _{\rho_{T}(\mu)}\right\vert +\frac{1}{T}
\int_{0}^{1}ds\ \left\vert \operatorname{Tr}[\rho_{T}(\mu)^{1-s}Q_{i}\rho
_{T}(\mu)^{s}Q_{j}]\right\vert \\
&  \leq\frac{1}{T}\left\vert \left\langle Q_{i}\right\rangle _{\rho_{T}(\mu
)}\right\vert \left\vert \left\langle Q_{j}\right\rangle _{\rho_{T}(\mu
)}\right\vert +\frac{1}{T}\int_{0}^{1}ds\ \left\Vert \rho_{T}(\mu)^{1-s}
Q_{i}\rho_{T}(\mu)^{s}Q_{j}\right\Vert _{1}\\
&  \leq\frac{1}{T}\left\Vert Q_{i}\right\Vert \left\Vert Q_{j}\right\Vert
+\frac{1}{T}\left\Vert \rho_{T}(\mu)^{1-s}\right\Vert _{1/\left(  1-s\right)
}\left\Vert Q_{i}\right\Vert \left\Vert \rho_{T}(\mu)^{s}\right\Vert
_{1/s}\left\Vert Q_{j}\right\Vert \\
&  =\frac{1}{T}\left\Vert Q_{i}\right\Vert \left\Vert Q_{j}\right\Vert
+\frac{1}{T}\left\Vert Q_{i}\right\Vert \left\Vert Q_{j}\right\Vert \\
&  =\frac{2}{T}\left\Vert Q_{i}\right\Vert \left\Vert Q_{j}\right\Vert .
\end{align}
The third inequality follows from a multivariate generalization of the
H\"{o}lder inequality (see, e.g.,~\cite[Eq.~(8)]{Beigi2013}).
\end{proof}

\begin{lemma}\label{lem:concavity}
The function $f(\mu)$, defined in~\eqref{eq:legendre-dual-log-part-app-def},
is concave in $\mu$.
\end{lemma}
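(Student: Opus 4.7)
My plan is to reduce the concavity claim to a Hessian positivity statement, then exploit the closed form already established in Lemma~\ref{lem:hessian-kubo-mori}. Since the term $\mu\cdot q$ in~\eqref{eq:legendre-dual-log-part-app-def} is affine, concavity of $f$ is equivalent to convexity of $\mu\mapsto T\ln Z_T(\mu)$. By the twice-differentiable characterization of convexity, it suffices to prove that the Hessian of $f$ is negative semi-definite on all of $\mathbb{R}^c$. But Lemma~\ref{lem:hessian-kubo-mori} already gives
\begin{equation}
\nabla^2 f(\mu)=-I^{\operatorname{KM}}(\mu),
\end{equation}
so the entire task reduces to showing that the Kubo--Mori information matrix $I^{\operatorname{KM}}(\mu)$ is positive semi-definite for every $\mu\in\mathbb{R}^c$.

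To establish PSD-ness, I would fix an arbitrary $v\in\mathbb{R}^c$ and introduce the Hermitian operator $A\coloneqq\sum_{i\in[c]}v_i Q_i$. Using the integral expression~\eqref{eq:kubo-mori-2}, the quadratic form becomes
\begin{equation}
v^\top I^{\operatorname{KM}}(\mu)\,v \;=\; \frac{1}{T}\!\left(\int_0^1 ds\,\operatorname{Tr}\!\left[\rho_T(\mu)^{1-s}A\,\rho_T(\mu)^{s}A\right] \;-\;\langle A\rangle_{\rho_T(\mu)}^2\right).
\end{equation}
Substituting $A'\coloneqq A-\langle A\rangle_{\rho_T(\mu)}I$ (which is still Hermitian) and using $\operatorname{Tr}[\rho_T(\mu)]=1$, a short calculation shows this quantity equals $\frac{1}{T}\int_0^1 ds\,\operatorname{Tr}[\rho_T(\mu)^{1-s}A'\rho_T(\mu)^{s}A']$, i.e., the Kubo--Mori ``variance'' of $A$.

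The final step is to argue this integral is non-negative. I would do so by diagonalizing $\rho_T(\mu)=\sum_k p_k|k\rangle\!\langle k|$, noting $p_k>0$ since $\rho_T(\mu)$ has full support by~\eqref{eq:grand-canon-thermal-state}. Plugging the spectral decomposition into the integral yields
\begin{equation}
\int_0^1 ds\,\operatorname{Tr}[\rho_T(\mu)^{1-s}A'\rho_T(\mu)^{s}A']\;=\;\sum_{k,\ell}|\langle k|A'|\ell\rangle|^2\,\Lambda(p_k,p_\ell),
\end{equation}
where $\Lambda(p,p')=\int_0^1 p^{1-s}(p')^s\,ds$ equals $p$ when $p=p'$ and the logarithmic mean $(p-p')/(\ln p-\ln p')$ otherwise, both manifestly non-negative. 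Hence the quadratic form is a non-negative sum, proving $I^{\operatorname{KM}}(\mu)\succeq 0$ and therefore $\nabla^2 f(\mu)\preceq 0$, so $f$ is concave.

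The main obstacle is the bookkeeping in the centering step (showing the $\langle A\rangle^2$ term cancels cleanly against the cross terms from $A'\otimes A'$), and verifying that $\Lambda(p,p')\geq 0$ for all $p,p'>0$; the latter is a standard property of the logarithmic mean and could alternatively be invoked by citing positive semi-definiteness of the Kubo--Mori/Bogoliubov inner product. No additional machinery beyond Lemma~\ref{lem:hessian-kubo-mori} and an eigendecomposition argument is needed.
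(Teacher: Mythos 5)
Your proof is correct and follows essentially the same route as the paper: reduce concavity to negative semi-definiteness of the Hessian via Lemma~\ref{lem:hessian-kubo-mori}, then show $v^{\top} I^{\operatorname{KM}}(\mu)\, v \geq 0$ by centering the observable $A=\sum_{i} v_i Q_i$ to $A_0 = A - \langle A\rangle_{\rho_T(\mu)} I$ and reducing to the Kubo--Mori variance. The only cosmetic difference is the final positivity step: the paper concludes directly that $\operatorname{Tr}[\rho_T(\mu)^{1-s} A_0\, \rho_T(\mu)^{s} A_0]\geq 0$ because it is the trace of a product of two positive semi-definite operators, whereas you diagonalize $\rho_T(\mu)$ and invoke non-negativity of the logarithmic mean---both are valid.
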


\begin{proof}
This follows from Lemma~\ref{lem:hessian-kubo-mori} and because the Kubo--Mori
information matrix $I_{ij}^{\operatorname{KM}}(\mu)$ is known to be positive
semi-definite. See, e.g., \cite[Proposition~4]{minervini2025evolvedquantumboltzmannmachines}.

Here we give an alternative proof that $I^{\operatorname{KM}}(\mu)\geq0$. Let
$v\coloneqq\left(  v_{1},\ldots,v_{c}\right)  \in\mathbb{R}^{c}$. Defining
$W\coloneqq\sum_{i\in\left[  c\right]  }v_{i}Q_{i}$, $W_{0}\coloneqq W-\left\langle
W\right\rangle _{\rho_{T}(\mu)}I$, and observing that $W_{0}$ is Hermitian,
consider that
\begin{align}
v^{T}I^{\operatorname{KM}}(\mu)v  &  =\sum_{i,j=1}^{c}v_{i}I_{ij}
^{\operatorname{KM}}(\mu)v_{j}\\
&  =\frac{1}{T}\sum_{i,j=1}^{c}v_{i}v_{j}\left(  \int_{0}^{1}
ds\ \operatorname{Tr}[\rho_{T}(\mu)^{1-s}Q_{i}\rho_{T}(\mu)^{s}Q_{j}
]-\left\langle Q_{i}\right\rangle _{\rho_{T}(\mu)}\left\langle Q_{j}
\right\rangle _{\rho_{T}(\mu)}\right) \\
&  =\frac{1}{T}\left(  \int_{0}^{1}ds\ \operatorname{Tr}[\rho_{T}(\mu
)^{1-s}W\rho_{T}(\mu)^{s}W]-\left\langle W\right\rangle _{\rho_{T}(\mu
)}\left\langle W\right\rangle _{\rho_{T}(\mu)}\right) \\
&  =\frac{1}{T}\int_{0}^{1}ds\ \operatorname{Tr}\!\left[  \rho_{T}(\mu
)^{1-s}\left(  W-\left\langle W\right\rangle _{\rho_{T}(\mu)}I\right)
\rho_{T}(\mu)^{s}\left(  W-\left\langle W\right\rangle _{\rho_{T}(\mu
)}I\right)  \right] \\
&  =\frac{1}{T}\int_{0}^{1}ds\ \operatorname{Tr}\!\left[  \rho_{T}(\mu
)^{1-s}W_{0}\rho_{T}(\mu)^{s}W_{0}\right] \\
&  \geq0.
\end{align}
The final inequality follows because $\rho_{T}(\mu)^{1-s}$ and $\rho_{T}
(\mu)^{s}$ are positive semi-definite, $W_{0}$ is Hermitian, thus implying
that $W_{0}\rho_{T}(\mu)^{s}W_{0}$ is positive semi-definite, which in turn
implies that $\operatorname{Tr}\!\left[  \rho_{T}(\mu)^{1-s}W_{0}\rho_{T}
(\mu)^{s}W_{0}\right]  $ is non-negative for all $s\in\left[  0,1\right]  $.
\end{proof}

\begin{lemma}
\label{lem:hessian-upper-bound}
Let $F(\mu)$ be the Hessian matrix with matrix elements $\frac{\partial^{2}}
{\partial\mu_{i}\partial\mu_{j}}f(\mu)$, where $f(\mu)$ is defined in~\eqref{eq:legendre-dual-log-part-app-def}. Then the following bound holds:
\begin{equation}
\left\Vert F(\mu)\right\Vert \leq\frac{2}{T}\sum_{i\in\left[  c\right]
}\left\Vert Q_{i}\right\Vert ^{2}.
\end{equation}

\end{lemma}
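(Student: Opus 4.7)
The plan is to control the operator norm of the Hessian $F(\mu)$ by passing through its Schatten $1$-norm (trace norm), exactly as suggested by the sentence preceding~\eqref{eq:smoothness-parameter-choice} in the main text. For any complex matrix, the largest singular value is at most the sum of all singular values, so $\|F(\mu)\|_\infty \leq \|F(\mu)\|_1$ holds unconditionally. Thus, it suffices to bound $\|F(\mu)\|_1$ by $\frac{2}{T}\sum_{i\in[c]} \|Q_i\|^2$.

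To evaluate $\|F(\mu)\|_1$, I would first use Lemma~\ref{lem:hessian-kubo-mori}, which gives $F(\mu) = -I^{\operatorname{KM}}(\mu)$, together with the fact established inside the proof of Lemma~\ref{lem:concavity} that $I^{\operatorname{KM}}(\mu)$ is positive semi-definite. This identifies $F(\mu)$ as a Hermitian, negative semi-definite matrix, so all of its eigenvalues are non-positive. For such a matrix, the trace norm is the sum of absolute values of the eigenvalues, which collapses to $-\operatorname{Tr}[F(\mu)]$. Writing this out in terms of the matrix elements, I get
\begin{equation}
\|F(\mu)\|_1 \;=\; -\operatorname{Tr}[F(\mu)] \;=\; -\sum_{i\in[c]} F_{ii}(\mu) \;=\; \sum_{i\in[c]} \left\vert F_{ii}(\mu)\right\vert,
\end{equation}
where the last equality uses that each diagonal entry of a negative semi-definite matrix is itself non-positive.

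It then only remains to bound each diagonal element. Applying Lemma~\ref{lem:upper-bnd-matrix-elements-hessian} in the special case $i=j$, I obtain
\begin{equation}
\left\vert F_{ii}(\mu)\right\vert = \left\vert\frac{\partial^{2}}{\partial\mu_{i}^{2}}f(\mu)\right\vert \leq \frac{2}{T}\left\Vert Q_i\right\Vert^2,
\end{equation}
and summing over $i\in[c]$ yields the desired bound. Chaining the three steps gives $\|F(\mu)\| \leq \|F(\mu)\|_1 \leq \frac{2}{T}\sum_{i\in[c]} \|Q_i\|^2$.

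I expect no serious obstacle here. The only subtlety is making sure the passage from trace norm to $-\operatorname{Tr}[F(\mu)]$ genuinely uses the negative semi-definiteness of $F(\mu)$; this dependence on Lemma~\ref{lem:concavity} is what makes the simple diagonal bound from Lemma~\ref{lem:upper-bnd-matrix-elements-hessian} sufficient to control the full operator norm, rather than needing a more delicate Gershgorin- or Frobenius-type estimate over all $c^2$ entries.
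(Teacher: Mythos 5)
Your proposal is correct and follows essentially the same route as the paper's proof: bound the spectral norm by the trace norm, use negative semi-definiteness of $F(\mu)$ (from Lemma~\ref{lem:hessian-kubo-mori} and the proof of Lemma~\ref{lem:concavity}) to write $\left\Vert F(\mu)\right\Vert _{1}=-\operatorname{Tr}[F(\mu)]$, and then apply the diagonal case of Lemma~\ref{lem:upper-bnd-matrix-elements-hessian}. The only cosmetic difference is that you note the penultimate step is an equality rather than an inequality, which changes nothing.
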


\begin{proof}
Consider that $\left\Vert F(\mu)\right\Vert \leq\left\Vert F(\mu)\right\Vert
_{1}$; i.e., the spectral norm of $F(\mu)$ is bounded from above by its trace norm. Since
$F(\mu)$ is negative semi-definite (see Lemma~\ref{lem:hessian-kubo-mori} and proof of Lemma~\ref{lem:concavity}), it follows that
\begin{align}
\left\Vert F(\mu)\right\Vert _{1}  & =-\operatorname{Tr}[F(\mu)]\\
& =-\sum_{i\in\left[  c\right]  }\frac{\partial^{2}}{\partial\mu_{i}^{2}}
f(\mu)\\
& \leq\sum_{i\in\left[  c\right]  }\left\vert \frac{\partial^{2}}{\partial
\mu_{i}^{2}}f(\mu)\right\vert \\
& \leq\sum_{i\in\left[  c\right]  }\frac{2}{T}\left\Vert Q_{i}\right\Vert
^{2},
\end{align}
where the last inequality follows from Lemma~\ref{lem:upper-bnd-matrix-elements-hessian}.
\end{proof}

\section{Error analysis for Algorithm~\ref{alg:basic-grad-asc-free-energy}}

\label{app:alg-error-analysis}

As mentioned after Algorithm~\ref{alg:basic-grad-asc-free-energy}, it has
three sources of error:\ the error from approximating the minimum energy
$E(\mathcal{Q},q)$ by the minimum free energy $F_{T}(\mathcal{Q},q)$ at finite
temperature~$T$, the error from gradient ascent arriving at an approximate
global minimum of the free energy, and the error from outputting
\begin{equation}
\tilde{E} \equiv \langle H \rangle_{\rho_T(\mu^M)} + \mu^M\cdot
(q-\langle Q\rangle _{\rho_{T}(\mu^M)})    
\end{equation}
instead of
\begin{equation}
    \langle H \rangle_{\rho_T(\mu^M)} -TS(\rho_{T}
(\mu^{M})) + \mu^M\cdot
(q-\langle Q\rangle _{\rho_{T}(\mu^M)}) 
=  \mu^M\cdot q-T\ln
Z_{T}(\mu^M) = f(\mu^M),
\end{equation}
where the last equality follows from~\eqref{eq:log-part-back-to-free-energy-1}--\eqref{eq:log-part-back-to-free-energy-2}.
Let us label the first two errors
as $\varepsilon_{1}$ and $\varepsilon_{2}$, respectively, and we note that the
third error is related to the first, as seen below.

By setting $T=\frac{\varepsilon_{1}}{\ln d}$, it follows from
\eqref{eq:energy-free-energy-bounds} that
\begin{equation}
\left\vert F_{T}(\mathcal{Q},q)-E(\mathcal{Q},q)\right\vert \leq
\varepsilon_{1}.
\end{equation}
By picking the number of steps, $M$, of
Algorithm~\ref{alg:basic-grad-asc-free-energy} to satisfy $M\geq
\frac{L\left\Vert \mu^{\ast}\right\Vert^2 }{2\varepsilon_{2}}$, it follows from
\cite[Corollary~3.5]{garrigos2024handbookconvergencetheoremsstochastic} that
gradient ascent converges to an $\varepsilon_{2}$-approximate global maximum
after $M$ steps. This means that
\begin{equation}
\left\vert F_{T}(\mathcal{Q},q)-f(\mu^{M})   \right\vert \leq\varepsilon_{2}.
\end{equation}
By applying the triangle inequality, we then find that
\begin{align}
 \left\vert E(\mathcal{Q},q)-\tilde{E} \right\vert 
&  \leq\left\vert E(\mathcal{Q},q)-F_{T}(\mathcal{Q}
,q)\right\vert \notag   +\left\vert F_{T}(\mathcal{Q},q)- f(\mu^{M})  \right\vert
+\left\vert f(\mu^{M})  - \tilde{E} \right\vert \\
&  \leq\varepsilon_{1}+\varepsilon_{2}+TS(\rho_{T}(\mu^{M}))\\
&  \leq\varepsilon_{1}+\varepsilon_{2}+T\ln d\\
&  =2\varepsilon_{1}+\varepsilon_{2},
\end{align}
where we applied~\eqref{eq:von-neumann-entropy-ineqs}\ to arrive at the last
inequality. By picking $\varepsilon_{1}=\varepsilon/4$ and $\varepsilon
_{2}=\varepsilon/2$, we then conclude that $\left\vert E(\mathcal{Q}
,q)-\tilde{E} \right\vert \leq
\varepsilon$, as desired. This explains the choices for $T$ and $M$ made in
Algorithm~\ref{alg:basic-grad-asc-free-energy}.

\section{Proof of Lemma~\ref{lem:reduction-SDP-to-energy-min}}

\label{app:reduction-SDP-to-energy-min}Let us begin by adding a slack variable
$z\geq0$ to convert the inequality constraint $\operatorname{Tr}[X]\leq R$ in
\eqref{eq:R-SDP} to an equality constraint:
\begin{equation}
\alpha_{R}=\min_{X,z\geq0}\left\{  \operatorname{Tr}[CX]:\operatorname{Tr}
[X]+z=R,\ \operatorname{Tr}[A_{i}X]=b_{i}\ \forall i\in\left[  c\right]
\right\}  .
\end{equation}
We can then make the substitutions $W=X/R$, $y=z/R$, and $b_{i}^{\prime}
=b_{i}/R$ for all $i\in\left[  c\right]  $, leading to the follow rewrite of
the semi-definite program:
\begin{align}
\alpha_{R}  &  =\min_{X,z\geq0}\left\{
R\operatorname{Tr}[CX/R]:\operatorname{Tr}[X/R]+z/R=1,\ 
\operatorname{Tr}[A_{i}X/R]=b_{i}/R\ \forall i\in\left[  c\right]
\right\} \\
&  =R\min_{W,y\geq0}\left\{  \operatorname{Tr}[CW]:\operatorname{Tr}
[W]+y=1,\ \operatorname{Tr}[A_{i}W]=b_{i}^{\prime}\ \forall i\in\left[
c\right]  \right\}  .
\end{align}
Finally, let us define a quantum state $\rho$ as
\begin{equation}
\rho=
\begin{bmatrix}
W & v\\
v^{\dag} & y
\end{bmatrix}
,
\end{equation}
where $v\in\mathbb{C}^{d}$ is chosen such that $\rho\geq0$, and define the
observables $C^{\prime}\coloneqq C\oplus\left[  0\right]  $ and $A_{i}
^{\prime}\coloneqq A_{i}\oplus\left[  0\right]  $ for all $i\in\left[
c\right]  $. Then we find that
\begin{align}
1  &  =\operatorname{Tr}[\rho]=\operatorname{Tr}[W]+y,\\
\operatorname{Tr}[CW]  &  =\operatorname{Tr}[C^{\prime}\rho],\\
\operatorname{Tr}[A_{i}W]  &  =\operatorname{Tr}[A_{i}^{\prime}\rho],
\end{align}
and thus the desired equality in~\eqref{eq:reduction-to-states}\ holds.

\section{Stochastic gradient ascent analysis}

\subsection{Background on stochastic gradient ascent}

\label{sec:sgd}

In this appendix, we present some definitions and  results from the
optimization literature. Specifically, we revisit one of the known convergence
results~\cite[Theorem~6.3]{Bubeck2015} associated with the stochastic gradient ascent algorithm.

Consider the following maximization problem:
\begin{equation}
f^{\ast}\coloneqq\sup_{x\in\mathbb{R}^{c}}f(x),
\end{equation}
where $f\colon\mathbb{R}^{c}\rightarrow\mathbb{R}$ is an $\ell$-smooth
function and $f^{\ast}$ is the global maximum. Let $\mathcal{X}\subset
\mathbb{R}^{c}$ be such that $\mathcal{X}$ is a Euclidean ball of
radius $r\geq0$. If $x^{\ast}\coloneqq \operatorname{argmax}_{x\in\mathbb{R}^{c}}f(x)$
and $x^{\ast}\in\mathcal{X}$, then
\begin{equation}
f^{\ast}=\sup_{x\in\mathcal{X}}f(x).
\end{equation}
The projected stochastic gradient ascent algorithm (abbreviated SGA) updates
the iterate according to the following rule:
\begin{equation}
x^{m+1}=\Pi_{\mathcal{X}}(x^{m}+\eta\overline{g}(x^{m})),
\end{equation}
where the superscripts $m+1$ and $m$ on $x$ indicate the iteration of the algorithm, $\Pi_{\mathcal{X}}$ is the Euclidean projection onto $\mathcal{X}$
(i.e., $\Pi_{\mathcal{X}}(y)\coloneqq \operatorname{argmin}_{x\in\mathcal{X}
}\left\Vert y-x\right\Vert _{2}$), $\overline{g}(x)$ is a stochastic gradient
evaluated at some point $x$, and $\eta>0$ is the learning rate parameter.
Furthermore, the SGA algorithm requires the stochastic gradient $\overline
{g}(x)$ to be unbiased, i.e., $\mathbb{E}[\overline{g}(x)]=\nabla f(x)$, for
all $x\in\mathbb{R}^{c}$. Here, the expectation $\mathbb{E}[\cdot]$ is with
respect to the randomness inherent in generating $\overline{g}(x)$. In addition, for all
$x\in\mathbb{R}^{c}$, the stochastic gradient $\overline{g}(x)$ should also
satisfy the following bounded variance condition: there exists a constant
$\sigma\geq0$ such that
\begin{equation}
\mathbb{E}\!\left[  \left\Vert \overline{g}(x)-\nabla f(x)\right\Vert
^{2}\right]  \leq\sigma^{2}.
\label{eq:bounded-variance-condition}
\end{equation}

To this end, the following lemma demonstrates the rate at which SGA\ converges
to an $\varepsilon$-approximate global maximum of $f$. This lemma is a special
case of~\cite[Theorem~6.3]{Bubeck2015}, and we include it here for completeness.

\begin{lemma}
[SGA Convergence]
\label{lem:sgd_conv}
Let $f$ be concave and $L$-smooth on $\mathcal{X}$. Suppose
that $\mathcal{X}$\ is a Euclidean ball of radius $r\geq0$.
Suppose that we have access to a stochastic gradient oracle, which upon input
of $x\in\mathcal{X}$, returns a random vector $\overline{g}(x)$ satisfying
$\mathbb{E}\!\left[  \overline{g}(x)\right]  =\nabla f(x)$ and $\mathbb{E}\!
\left[  \left\Vert \overline{g}(x)-\nabla f(x)\right\Vert ^{2}\right]
\leq\sigma^{2}$. Then projected SGA\ with stepsize $\eta=\left[
L+1/\gamma\right]  ^{-1}$ and $\gamma=\frac{r}{\sigma}\sqrt{\frac{2}{M}}$
satisfies
\begin{equation}
f(x^{\ast})-\mathbb{E}\!\left[  f\!\left(  \overline{x^{M}}\right)  \right]  \leq
r\sigma\sqrt{\frac{2}{M}}+\frac{Lr^{2}}{M},
\end{equation}
where 
$
\overline{x^{M}}\coloneqq \frac{1}{M}\sum_{m=1}^{M}x^{m}$.

\end{lemma}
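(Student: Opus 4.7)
The plan is to adapt the standard analysis of projected stochastic gradient methods to our setting by passing to the convex minimization of $h \coloneqq -f$, which is convex, $L$-smooth on the Euclidean ball $\mathcal{X}$ of radius $r$, and for which $-\overline{g}(x)$ is an unbiased stochastic gradient with variance bounded by $\sigma^2$. The update rule then reads $x^{m+1} = \Pi_{\mathcal{X}}(x^m - \eta \nabla h_\text{stoch}(x^m))$, and the goal becomes bounding $\mathbb{E}[h(\overline{x^M})] - h(x^*)$, where $\overline{x^M}$ is the running average.

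The first step is the standard one-step recursion. Using the non-expansiveness of $\Pi_{\mathcal{X}}$ together with $x^* \in \mathcal{X}$, expand the squared distance to optimum and take conditional expectation on the iterate history. Unbiasedness replaces the stochastic gradient by $\nabla h(x^m)$ in the cross term, and the bounded-variance assumption together with the identity $\mathbb{E}[\|Y\|^2] = \|\mathbb{E}[Y]\|^2 + \mathbb{E}[\|Y - \mathbb{E}[Y]\|^2]$ produces an $\eta^2(\|\nabla h(x^m)\|^2 + \sigma^2)$ term. Convexity of $h$ then converts the cross term into the suboptimality gap $h(x^m) - h(x^*)$, and $L$-smoothness yields the key bound $\|\nabla h(x^m)\|^2 \leq 2L\,(h(x^m) - h(x^*))$ (obtained by applying the descent lemma at $x^m - \tfrac{1}{L}\nabla h(x^m)$ and using $h(x^*) \leq h(x^m - \tfrac{1}{L}\nabla h(x^m))$). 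Combining these gives the core recursion
\begin{equation}
2\eta(1 - \eta L)\,\mathbb{E}\!\left[h(x^m) - h(x^*)\right] \leq \mathbb{E}\!\left[\|x^m - x^*\|^2\right] - \mathbb{E}\!\left[\|x^{m+1} - x^*\|^2\right] + \eta^2 \sigma^2 .
\end{equation}

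The second step is a telescoping summation over $m = 1, \ldots, M$, in which the squared-distance differences collapse. Bounding the initial distance by $\|x^1 - x^*\|^2 \leq r^2$ (using that both $x^*$ and the initialization lie in the radius-$r$ ball centered appropriately, as in Algorithm~\ref{alg:energy-min}) and dividing by $M$ yields
\begin{equation}
\frac{1}{M} \sum_{m=1}^M \mathbb{E}\!\left[h(x^m) - h(x^*)\right] \leq \frac{1}{1-\eta L}\left[\frac{r^2}{2M\eta} + \frac{\eta \sigma^2}{2}\right] .
\end{equation}
Jensen's inequality combined with convexity of $h$ lifts the average of function values to $\mathbb{E}[h(\overline{x^M})]$, translating back to $f(x^*) - \mathbb{E}[f(\overline{x^M})]$ in the concave formulation.

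The final step is to optimize the right-hand side. With the choice $\eta = [L + 1/\gamma]^{-1}$ and $\gamma = \tfrac{r}{\sigma}\sqrt{2/M}$, the factor $1/(1-\eta L)$ evaluates to $L\gamma + 1$, and a direct substitution balances the $\tfrac{r^2}{2M\eta}$ term against the $\tfrac{\eta \sigma^2}{2}$ term to produce the claimed $r\sigma\sqrt{2/M} + Lr^2/M$ rate. The main obstacle I anticipate is the bookkeeping in this balancing step: one must carefully track how the factor $L\gamma + 1 = 1/(1-\eta L)$ interacts with the two terms so that the smoothness contribution scales as $O(1/M)$ and the stochastic contribution as $O(1/\sqrt M)$, rather than being inflated by additional cross terms. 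Once this is handled, the result follows by combining the recursion, telescoping, Jensen, and the stepsize choice.
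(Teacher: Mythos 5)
The paper does not actually prove this lemma; it imports it verbatim from \cite[Theorem~6.3]{Bubeck2015}. Your attempt to reprove it via the classical distance recursion has a genuine gap in the final balancing step---exactly the step you flagged as the anticipated obstacle. Your core recursion forces the prefactor $\frac{1}{1-\eta L}$ onto \emph{both} terms of $\frac{r^2}{2M\eta}+\frac{\eta\sigma^2}{2}$. With the prescribed stepsize one has $\frac{1}{1-\eta L}=\frac{\gamma}{\eta}=1+L\gamma$, so the noise term indeed becomes $\frac{\gamma\sigma^2}{2}=\frac{1}{2}r\sigma\sqrt{2/M}$, but the first term becomes
\[
(1+L\gamma)\,\frac{r^{2}\left(L+1/\gamma\right)}{2M}
=\frac{L r^{2}}{M}+\frac{r^{2}}{2M\gamma}+\frac{L^{2}r^{2}\gamma}{2M},
\]
and the last summand equals $\frac{L\gamma}{2}\cdot\frac{Lr^{2}}{M}=\frac{L^{2}r^{3}}{\sqrt{2}\,\sigma M^{3/2}}$. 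This residual is \emph{not} dominated by $r\sigma\sqrt{2/M}+Lr^{2}/M$: it diverges as $\sigma\to 0$ (equivalently $\gamma\to\infty$, $\eta\to 1/L$), which is precisely the regime where the distance-recursion-plus-co-coercivity argument degenerates because its coefficient $2\eta(1-\eta L)$ vanishes. So the stated bound cannot be reached along your route.

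The fix is the decomposition used in the cited proof, where smoothness enters through the descent lemma rather than through co-coercivity. Writing $h=-f$ and $\tilde g=-\overline g$, bound $h(x^{m+1})-h(x^{*})\le \nabla h(x^{m})^{\top}(x^{m+1}-x^{*})+\frac{L}{2}\Vert x^{m+1}-x^{m}\Vert^{2}$ (descent lemma plus convexity), split $\nabla h(x^{m})=\tilde g(x^{m})+\bigl(\nabla h(x^{m})-\tilde g(x^{m})\bigr)$, use the projection (three-point) inequality $\tilde g(x^{m})^{\top}(x^{m+1}-x^{*})\le\frac{1}{2\eta}\bigl(\Vert x^{m}-x^{*}\Vert^{2}-\Vert x^{m+1}-x^{*}\Vert^{2}-\Vert x^{m+1}-x^{m}\Vert^{2}\bigr)$, and handle the noise cross term by peeling off the zero-mean part against $x^{m}-x^{*}$ and applying Young's inequality with parameter $\gamma$ to the part against $x^{m+1}-x^{m}$. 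The three $\Vert x^{m+1}-x^{m}\Vert^{2}$ contributions, with coefficients $\frac{L}{2}+\frac{1}{2\gamma}-\frac{1}{2\eta}$, cancel exactly when $\eta=[L+1/\gamma]^{-1}$---this is where that stepsize comes from---leaving, after telescoping and Jensen, $\frac{\Vert x^{1}-x^{*}\Vert^{2}}{2\eta M}+\frac{\gamma\sigma^{2}}{2}\le\frac{Lr^{2}}{2M}+\frac{3}{4}r\sigma\sqrt{2/M}$, which implies the claim with no residual term. Your one-step expansion, the unbiasedness/variance splitting, the telescoping, the Jensen step, and the reduction $h=-f$ are all fine; only the way smoothness is exploited must change.
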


\subsection{Stochastic gradient estimation}

\label{app:grad_est} 

In this section, we present a quantum algorithm for estimating the gradient of the objective function $f(\mu)$ defined in~\eqref{eq:obj_fun}, whose elements are given by
\begin{equation}
\frac{\partial}{\partial\mu_{i}} f(\mu)=
\frac{\partial}{\partial\mu_{i}}\left(  \mu\cdot q-T\ln Z_{T}(\mu)\right)
= q_{i}-\left\langle Q_{i}\right\rangle _{\rho_{T}(\mu)}.
\label{eq:gradient-formula_app}
\end{equation}
The gradient is a multivariate vector-valued function with components corresponding to the partial derivatives in $\left(\frac{\partial}{\partial_{\mu_i}} f(\mu)\right)_i$. To estimate this gradient, Algorithm~\ref{alg:est_exp_Q} estimates each term $\left\langle Q_{i}\right\rangle _{\rho_{T}(\mu)}$ individually as a subroutine. The final gradient estimate is then constructed by combining these results, yielding the vector with components $q_i - \overline{Y}_i$, where $\overline{Y}_i$ is an estimate of $\left\langle Q_{i}\right\rangle _{\rho_{T}(\mu)}$. 
Assuming each $Q_i$ is expressed as a linear combination of Pauli strings as defined in~\eqref{eq:Q_pauli}, we can expand the expectation value as follows:
\begin{align}
   \left\langle Q_{i}\right\rangle _{\rho_{T}(\mu)} & = \operatorname{Tr}[Q_i \rho_T(\mu)]\\
& = \operatorname{Tr}\!\left[\left (\sum_{\vec{\jmath}} a_{i, \vec{j}} \sigma_{\vec{\jmath}} \right) \rho_T(\mu) \right]\\
& = \sum_{\vec{j}}a_{i, \vec{j}} \operatorname{Tr}\!\left [ \sigma_{\vec{\jmath}}\, \rho_T(\mu) \right].
\end{align}
Then, assuming that we have access to an oracle that samples an index $\vec{\jmath}$ with probability $\sfrac{a_{i,\vec{\jmath}}}{\left \Vert a_i \right \Vert_1}$ and that we have access to multiple copies of $\rho_T(\mu)$, Algorithm~\ref{alg:est_exp_Q} estimates the above quantity.

\begin{algorithm}[H]
\caption{$\mathtt{estimate\_obs}(\mu,   (a_{\vec\jmath})_{\vec\jmath}, \varepsilon, \delta)$}\label{alg:est_exp_Q} 

\begin{algorithmic}[1]
\STATE \textbf{Input:}
\begin{itemize}\setlength\itemsep{-0.15em}
    \item chemical potential vector $\mu \in \mathbb{R}^c$,
    \item non-negative coefficients $(a_{\vec\jmath})_{\vec\jmath}$ for the Pauli strings $(\sigma_{\vec{\jmath}})_{\vec\jmath}$,
    \item accuracy~$\varepsilon > 0$,
    \item error probability~$\delta \in (0,1)$
\end{itemize}
\STATE $N \leftarrow \sfrac{2\left \Vert a \right \Vert_1^2 \ln(\sfrac{2}{\delta})}{\varepsilon^2}$
\FOR{$n = 0$ to $N-1$}
\STATE Sample a multi-index $\vec{\jmath}$ with probability $\sfrac{a_{\vec{\jmath}}}{\left \Vert a \right \Vert_1}$
\STATE Prepare a register in the state $\rho_T(\mu)$, measure $\sigma_{\vec{\jmath}}$, and store the measurement outcome $b_n$
\STATE Set $Y_n \leftarrow \left \Vert a \right \Vert_1 (-1)^{b_n}$
\ENDFOR
\RETURN $\overline{Y} \leftarrow \frac{1}{N}\sum_{n=0}^{N-1}Y_n$
\end{algorithmic}
\end{algorithm}

Algorithm~\ref{alg:est_exp_Q} requires $N$ samples of the thermal state $\rho_T(\mu)$, with $N$ chosen to ensure convergence to the true expectation value within accuracy $\varepsilon$ and success probability $1-\delta$ (see Appendix~\ref{app:sample_complexity_gradient} for a derivation).

In one case of interest, for estimating $\left\langle Q_{i}\right\rangle _{\rho_{T}(\mu)}$, we call the subroutine $\mathtt{estimate\_obs}(\mu,   (a_{i,\vec\jmath})_{\vec\jmath}, \varepsilon, \delta)$, and suppose that $N_i$ is equal to the number of samples used. Algorithm~\ref{alg:est_exp_Q} outputs an unbiased estimator of the quantity of interest, i.e., $\mathbb{E}[\overline{Y}_i]= \left\langle Q_i \right\rangle _{\rho_T(\mu)}$, as we show below:
\begin{align}
    \mathbb{E}\!\left[\overline{Y}_i\right] & = \mathbb{E}\!\left[\frac{1}{N_i}\sum_{n=0}^{N_i-1}Y_n\right] = \mathbb{E}\!\left[\frac{1}{N_i}\sum_{n=0}^{N_i-1} \left \Vert a_i \right \Vert_1 (-1)^{b_n} \right]\\
    & = \frac{\left \Vert a_i \right \Vert_1}{N_i} \sum_{n=0}^{N_i-1} \mathbb{E}\!\left[  (-1)^{b_n} \right]  = \frac{\left \Vert a_i \right \Vert_1}{N_i} \sum_{n=0}^{N_i-1} \sum_{b_n \in\{0,1\}} p_{b_n}   (-1)^{b_n} ,\label{eq:unbias_proof}
\end{align}
where
\begin{equation}
    p_{b_n} \coloneqq \sum_{\vec{\jmath}} \frac{a_{i,\vec{\jmath}}}{\left \Vert a_i \right \Vert_1} \frac{1 + (-1)^{b_n}\operatorname{Tr}\! \left[ \sigma_{\vec{\jmath}} \rho_T(\mu) \right]} {2}.
\end{equation}
The above expression for $p_{b_n}$ follows from the fact that we first sample a multi-index $\vec{\jmath}$ with probability $\sfrac{a_{i,\vec{\jmath}}}{\left \Vert a_i \right \Vert_1}$ and then we measure the Pauli string $\sigma_{\vec{\jmath}}$ on the state $\rho_T(\mu)$.
Plugging the expression above for $p_{b_n}$ into~\eqref{eq:unbias_proof}, we finally conclude that
\begin{align}
    \mathbb{E}\!\left[\overline{Y}_i\right] & = \sum_{\vec{\jmath}} a_{i,\vec{\jmath}} \operatorname{Tr}\! \left[ \sigma_{\vec{\jmath}}\, \rho_T(\mu) \right] \\
    & = \sum_{\vec{\jmath}}  \operatorname{Tr}\! \left[ a_{i,\vec{\jmath}}\, \sigma_{\vec{\jmath}}\,  \rho_T(\mu)  \right] \\
    & =  \operatorname{Tr}\! \left[ Q_i  \rho_T(\mu)  \right] \\
    & = \left\langle Q_i \right\rangle _{\rho_T(\mu)}.
    \label{eq:expectation-qi-sampavg}
\end{align}

\subsection{Sample complexity}

\label{sec:sample_comp}

In this section, we investigate the sample complexity -- the number of samples of the thermal state~$\rho_T(\mu)$ -- required by Algorithm~\ref{alg:energy-min}  to reach a point $\varepsilon$-close to the optimum. To simplify the discussion, we divide the analysis into two parts. First, we investigate the sample complexities of the gradient estimation in Algorithm~\ref{alg:est_exp_Q} and then analyze the total sample complexity of the energy minimization algorithm itself. This is because the overall energy minimization algorithm employs Algorithm~\ref{alg:est_exp_Q} as a subroutine for stochastic gradient estimation.

\subsubsection{Sample complexity of stochastic gradient estimation}

\label{app:sample_complexity_gradient}

Recall that Algorithm~\ref{alg:est_exp_Q} outputs estimates of the second term appearing in the partial derivative $\frac{\partial_{i}}{\partial \mu_i} f(\mu)$ in~\eqref{eq:gradient-formula_app}. We demonstrated in Appendix~\ref{app:grad_est} that this estimator is unbiased. The next step is to investigate how fast this estimator converges to its expected value, which we do formally in the proof of the following lemma.

\begin{lemma}[Sample complexity of Algorithm~\ref{alg:est_exp_Q}]\label{lem:exp_Q}
    Let $\varepsilon > 0$, $\delta \in (0, 1)$, and Pauli-coefficient vector $a_i$. Then, the number of samples, $N_i$, of $\rho_T(\mu)$ used by Algorithm~\ref{alg:est_exp_Q} to produce an $\varepsilon$-close estimate of $\left\langle Q_i \right\rangle _{\rho_T(\mu)}$ with a success probability not less than $1 - \delta$ is
\begin{equation}
    N_i = \left \lceil \frac{2\left \Vert a_i \right \Vert_1^2 \ln(\sfrac{2}{\delta})}{\varepsilon^2} \right\rceil.
\end{equation}
\end{lemma}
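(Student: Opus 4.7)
The plan is to apply Hoeffding's inequality to the i.i.d. random variables $Y_0, Y_1, \ldots, Y_{N_i-1}$ produced by Algorithm~\ref{alg:est_exp_Q}. The key observation to set this up is that each $Y_n = \|a_i\|_1 (-1)^{b_n}$ takes values in $\{-\|a_i\|_1, +\|a_i\|_1\}$, independently of which multi-index $\vec{\jmath}$ is sampled and which measurement outcome $b_n$ is obtained. Therefore each $Y_n$ almost surely lies in the interval $[-\|a_i\|_1, +\|a_i\|_1]$, which has length $2\|a_i\|_1$. Also, the samples are independent: in each iteration one freshly samples a multi-index $\vec{\jmath}$ from the prescribed distribution and freshly prepares and measures a copy of $\rho_T(\mu)$, so $\{Y_n\}_{n=0}^{N_i-1}$ are i.i.d.

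Next, I would invoke the unbiasedness calculation already carried out in~\eqref{eq:expectation-qi-sampavg}, which gives $\mathbb{E}[Y_n] = \langle Q_i\rangle_{\rho_T(\mu)}$ and hence $\mathbb{E}[\overline{Y}_i] = \langle Q_i \rangle_{\rho_T(\mu)}$. Combining this with boundedness and applying Hoeffding's inequality yields
\begin{equation}
\Pr\!\left[\, \left|\overline{Y}_i - \langle Q_i\rangle_{\rho_T(\mu)}\right| \geq \varepsilon \,\right] \leq 2\exp\!\left(-\frac{2 N_i \varepsilon^2}{(2\|a_i\|_1)^2}\right) = 2\exp\!\left(-\frac{N_i \varepsilon^2}{2\|a_i\|_1^2}\right).
\end{equation}

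To finish, I would require the right-hand side to be at most $\delta$ and solve for $N_i$. Rearranging $2\exp(-N_i\varepsilon^2/(2\|a_i\|_1^2)) \leq \delta$ gives $N_i \geq 2\|a_i\|_1^2 \ln(2/\delta)/\varepsilon^2$, which matches the choice of $N_i$ in step~2 of Algorithm~\ref{alg:est_exp_Q} after taking the ceiling. Since the bound holds with success probability at least $1-\delta$ for this choice, the claim follows.

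There is no real obstacle here: the lemma is a routine Hoeffding-bound calculation once one notes the crucial fact that the factor $(-1)^{b_n}$ makes $|Y_n|$ a constant (equal to $\|a_i\|_1$) rather than a random quantity depending on $\vec{\jmath}$. Variants using Chebyshev's inequality would give a weaker $1/\delta$ dependence, but Hoeffding gives the claimed $\ln(1/\delta)$ scaling. I note that an analogous (almost verbatim) argument will also control the final-step estimator $\tilde G$ in Algorithm~\ref{alg:energy-min} with $a_i$ replaced by the coefficient vector $g$ of $H - \overline{\mu^M}\cdot Q$, which is how this lemma is subsequently used in the stochastic gradient ascent analysis.
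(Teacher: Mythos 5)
Your proof is correct and follows essentially the same route as the paper's: both rely on the unbiasedness computation in~\eqref{eq:expectation-qi-sampavg}, observe that each $Y_n$ lies in $[-\left\Vert a_i\right\Vert_1, \left\Vert a_i\right\Vert_1]$, and apply Hoeffding's inequality (Lemma~\ref{lem:hoeffding}) to obtain the stated $N_i$. The only cosmetic difference is that you write out the exponential tail bound explicitly before solving for $N_i$, whereas the paper cites the sample-count form of the lemma directly.
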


\begin{proof}
     Recall that Algorithm~\ref{alg:est_exp_Q} outputs an unbiased estimator $\overline{Y}$ of $\left\langle Q_i \right\rangle _{\rho_T(\mu)}$: 
\begin{equation}
    \overline{Y} = \frac{1}{N}\sum_{n=0}^{N-1}Y_n,\label{eq:Z2}
\end{equation} 
where $Y_n = \left \Vert a_i \right \Vert_1 (-1)^{b_n}$ and $b_n \in \{0, 1\}$, for all $n \in \{0, \ldots, N-1\}$. This implies that $Y_n$ lies in the range
\begin{equation}
   -\left \Vert a_i \right \Vert_1 \leq Y_n \leq \left \Vert a_i \right \Vert_1. 
\end{equation}
Now, using the Hoeffding inequality (recalled as Lemma~\ref{lem:hoeffding} below), we can say that for $\varepsilon > 0$ and $\delta \in (0,1)$, we have
\begin{equation}
    \Pr\!\left( \left |\overline{Y} - \mathbb{E}\!\left[ \overline{Y}\right] \right|  \geq \varepsilon \right) \leq \delta
\end{equation}
if
\begin{equation}
    N_i \geq \frac{2\left \Vert a_i \right \Vert_1^2 \ln(\sfrac{2}{\delta})}{\varepsilon^2}.
\end{equation}
This is equivalent to 
\begin{equation}
    \Pr \! \left (\left |\overline{Y} - \mathbb{E}\!\left[ \overline{Y}\right] \right| \leq \varepsilon \right) \geq 
    1 - \delta ,\label{eq:Z(2)_bound}
\end{equation}
thus concluding the proof after recalling~\eqref{eq:expectation-qi-sampavg}.
\end{proof}

\begin{lemma}[Hoeffding's Bound~\cite{Hoeffding1963}]
\label{lem:hoeffding}
Suppose that we are given $n$ independent samples $Y_1, \ldots, Y_n$ of a bounded random variable $Y$ taking values in the interval $[a,b]$ and having mean $\mu$. Set $
    \overline{Y_n} \coloneqq \frac{1}{n} (Y_1 + \cdots +Y_n)$
to be the sample mean. Let $\varepsilon > 0 $ be the desired accuracy, and let $1-\delta$ be the desired success probability, where $\delta \in (0,1)$. Then
\begin{equation}
\label{eq:Hoeff-bound}
\Pr[\vert \overline{Y_n} - \mu \vert \leq \varepsilon] \geq 1-\delta,
\end{equation}
if
$    n \geq \frac{M^2}{2\varepsilon^2} \ln \!\left( \frac{2}{\delta}\right)$,
where $M \coloneqq b -  a$.
\end{lemma}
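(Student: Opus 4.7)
The plan is to establish the standard two-sided Chernoff--Hoeffding tail bound
\begin{equation}
\Pr\!\left[\left|\overline{Y_n}-\mu\right|\geq\varepsilon\right]\leq 2\exp\!\left(-\frac{2n\varepsilon^{2}}{M^{2}}\right),
\end{equation}
and then invert it: setting the right-hand side equal to $\delta$ and solving for $n$ yields exactly the threshold $n\geq \frac{M^{2}}{2\varepsilon^{2}}\ln(\sfrac{2}{\delta})$ stated in the lemma, so the contrapositive gives the desired probability lower bound~\eqref{eq:Hoeff-bound}.

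First I would handle one tail at a time. Fix $t>0$ and apply Markov's inequality to the non-negative random variable $\exp(t(\overline{Y_n}-\mu))$:
\begin{equation}
\Pr\!\left[\overline{Y_n}-\mu\geq\varepsilon\right]
=\Pr\!\left[e^{t(\overline{Y_n}-\mu)}\geq e^{t\varepsilon}\right]
\leq e^{-t\varepsilon}\,\mathbb{E}\!\left[e^{t(\overline{Y_n}-\mu)}\right].
\end{equation}
Because $\overline{Y_n}-\mu=\frac{1}{n}\sum_{i=1}^{n}(Y_{i}-\mu)$ and the samples $Y_{1},\ldots,Y_{n}$ are independent, the moment generating function factorizes:
\begin{equation}
\mathbb{E}\!\left[e^{t(\overline{Y_n}-\mu)}\right]
=\prod_{i=1}^{n}\mathbb{E}\!\left[e^{(t/n)(Y_{i}-\mu)}\right].
\end{equation}

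The main technical step is to bound each factor using Hoeffding's lemma, which I would state and prove as a sub-lemma: for any random variable $X$ with $\mathbb{E}[X]=0$ and $X\in[a',b']$ almost surely, one has $\mathbb{E}[e^{sX}]\leq \exp(s^{2}(b'-a')^{2}/8)$ for all $s\in\mathbb{R}$. The proof uses convexity of $x\mapsto e^{sx}$ on $[a',b']$ to upper-bound $e^{sX}$ by the chord, takes expectations, and then shows via a second-derivative computation that the resulting function of $s$ is dominated by $\exp(s^{2}(b'-a')^{2}/8)$. This is the only non-routine calculation in the argument and is the natural obstacle; once it is in hand everything else is assembly.

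Applying Hoeffding's lemma to each $Y_{i}-\mu\in[a-\mu,b-\mu]$, an interval of length $M=b-a$, gives
\begin{equation}
\mathbb{E}\!\left[e^{(t/n)(Y_{i}-\mu)}\right]\leq \exp\!\left(\frac{t^{2}M^{2}}{8n^{2}}\right),
\end{equation}
so that
\begin{equation}
\Pr\!\left[\overline{Y_n}-\mu\geq\varepsilon\right]\leq \exp\!\left(-t\varepsilon+\frac{t^{2}M^{2}}{8n}\right).
\end{equation}
Optimizing over $t>0$ by choosing $t=4n\varepsilon/M^{2}$ yields $\Pr[\overline{Y_n}-\mu\geq\varepsilon]\leq\exp(-2n\varepsilon^{2}/M^{2})$. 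The symmetric argument applied to $-Y_{i}$ gives the same bound on $\Pr[\overline{Y_n}-\mu\leq-\varepsilon]$, and a union bound combines them into the two-sided inequality displayed above. Finally, imposing $2\exp(-2n\varepsilon^{2}/M^{2})\leq\delta$ and solving for $n$ reproduces the stated sample-size threshold, completing the proof.
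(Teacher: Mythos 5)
Your proposal is correct: the Chernoff--MGF argument, the factorization by independence, Hoeffding's lemma for the moment generating function of a bounded zero-mean variable, the optimization $t = 4n\varepsilon/M^{2}$, the union bound for the two-sided tail, and the inversion to obtain $n \geq \frac{M^{2}}{2\varepsilon^{2}}\ln(\sfrac{2}{\delta})$ are all carried out accurately. The paper does not prove this lemma at all --- it simply cites Hoeffding's 1963 result --- and your derivation is precisely the standard proof from that reference, so there is nothing to compare beyond noting that your write-up is a complete and correct reconstruction of it.
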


\subsubsection{Sample complexity of stochastic gradient ascent}

\label{app:sample-comp-SGD}

Using the development above, in the proof of the following theorem, we analyze
the sample complexity of the energy minimization algorithm
(Algorithm~\ref{alg:energy-min}). The following is a restatement of Theorem~\ref{thm:main} in the main text, with the sample complexity stated precisely rather than with big-$O$ notation.

\begin{theorem}
[Sample complexity of energy minimization]Let $H$ and $Q_{i}$ for each
$i\in\lbrack c]$, $c\in\mathbb{N}$, be a Hamiltonian and observables as
defined in~\eqref{eq:H_pauli} and~\eqref{eq:Q_pauli} respectively, and let $h$ and 
$a_{i}$ be the Pauli-coefficient vectors of $H$ and $Q_{i}$, respectively. Let $\varepsilon>0$. Then
to reach, in expectation and with success probability $\geq1-\delta$, an
$\varepsilon$-approximate estimate $\hat{E}$ of the optimal value
in~\eqref{eq:energy-minimization}, i.e., such that
\begin{equation}
\Pr\!\left[\left\vert E(\mathcal{Q},q)-\mathbb{E}\!\left[\hat{E}\right]\right\vert \leq\varepsilon\right] \geq 1-\delta ,
\label{eq:SGA-guarantee}
\end{equation}
the sample complexity of Algorithm~\ref{alg:energy-min} is given by
\begin{equation}
\left\lceil \frac{16r^2}{\varepsilon^{2}}\left(  2  c\varepsilon
^{2}
+\left(8  \ln d +2\delta\right) \sum_{i\in\left[  c\right]  }\left\Vert
a_{i}\right\Vert _{1}^2\right)  \right\rceil \sum_{i\in\left[  c\right]  }\left\lceil
\frac{2\left\Vert a_{i}\right\Vert _{1}^{2}\ln\!\left(  \frac{2}{\delta}\right)
}{\varepsilon^{2}}\right\rceil 
+\left\lceil \frac{8  }{\varepsilon^{2}}\left(  \left\Vert h\right\Vert _{1}+r\sqrt{\sum_{i\in\left[  c\right]  }\left\Vert a_{i}\right\Vert _{1}^{2}}\right)^2  \ln\!\left(  \frac{2}{\delta
}\right)\right\rceil .
\end{equation}
In~\eqref{eq:SGA-guarantee}, the expectation $\mathbb{E}$ is with respect to the $M$ steps of SGA updates, and the probability $\Pr$ is with respect to the final step in estimating $ \left\langle H\right\rangle _{\rho
_{T}(\overline{\mu^{M}})}+\overline{\mu^{M}}\cdot\left(  q-\left\langle
Q\right\rangle _{\rho_{T}(\overline{\mu^{M}})}\right)$. Additionally, the radius $r$ is such that $r\geq \left \| \mu^*\right\|$, where $\mu^*$ satisfies $f(\mu^{\ast})   =F_{T}(\mathcal{Q},q)$ for $T = \frac{\varepsilon}{4\ln d}$ and $f$ defined in~\eqref{eq:obj_fun}.
\end{theorem}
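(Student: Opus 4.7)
The plan is to decompose the total error of Algorithm~\ref{alg:energy-min}'s output $\hat{E} = \overline{\mu^M} \cdot q + \tilde{G}$ into three pieces, mirroring the error analysis of Appendix~\ref{app:alg-error-analysis} but now for the stochastic (rather than deterministic) setting: (i) the thermodynamic approximation error $|E(\mathcal{Q},q) - F_T(\mathcal{Q},q)|$ controlled by the choice of temperature $T$; (ii) the SGA optimization gap $|F_T(\mathcal{Q},q) - \mathbb{E}[f(\overline{\mu^M})]|$ controlled by Lemma~\ref{lem:sgd_conv}; and (iii) the output-transformation plus final-estimation gap, where one rewrites $\overline{\mu^M} \cdot q + \langle H - \overline{\mu^M} \cdot Q\rangle_{\rho_T(\overline{\mu^M})} = f(\overline{\mu^M}) + T\, S(\rho_T(\overline{\mu^M}))$ using~\eqref{eq:log-part-back-to-free-energy-1}--\eqref{eq:log-part-back-to-free-energy-2} and then absorbs the Hoeffding fluctuation of $\tilde{G}$ about its mean. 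I would allocate each piece a constant fraction of $\varepsilon$, which, combined via the triangle inequality, yields the claimed $\varepsilon$-closeness.

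Setting $T = \varepsilon/(4\ln d)$ immediately gives $|E - F_T| \leq \varepsilon/4$ by Lemma~\ref{lem:error-approx-SDP}, and simultaneously $T\, S(\rho_T(\overline{\mu^M})) \leq T\ln d = \varepsilon/4$ by the von Neumann entropy bound in~\eqref{eq:von-neumann-entropy-ineqs}. To invoke Lemma~\ref{lem:sgd_conv}, I need a smoothness constant and a stochastic-gradient variance bound. From Lemma~\ref{lem:hessian-upper-bound} combined with the Pauli-$\ell^1$ bound $\|Q_i\| \leq \|a_i\|_1$ of~\eqref{eq:upper_bound_Q_norm}, I read off $L = \tfrac{8 \ln d}{\varepsilon}\sum_{i\in[c]}\|a_i\|_1^2$. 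For the variance, since Algorithm~\ref{alg:est_exp_Q} returns $\tilde{Q}_i$ with $|\tilde{Q}_i - \langle Q_i\rangle_{\rho_T(\mu)}| \leq \varepsilon$ off an event of probability at most $\delta$, and is always bounded in magnitude by $\|a_i\|_1$, splitting on good/bad events gives $\mathbb{E}[|\tilde{Q}_i - \langle Q_i\rangle|^2] \leq \varepsilon^2 + O(\delta\|a_i\|_1^2)$; summing over $i\in[c]$ yields the quantity $\sigma^2$ appearing in the theorem. Unbiasedness of $\overline{g}(\mu)$ is already established in Appendix~\ref{app:grad_est}, so plugging $L$ and $\sigma^2$ into Lemma~\ref{lem:sgd_conv} with the prescribed step size~\eqref{eq:stepsize-SGA} and iteration count~\eqref{eq:num_iter_SGA} gives $f(\mu^*) - \mathbb{E}[f(\overline{\mu^M})] \leq r\sigma\sqrt{2/M} + Lr^2/M \leq \varepsilon/2$.

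The sample count then tallies straightforwardly. Each of the $M$ SGA iterations calls $\mathtt{estimate\_obs}$ once per constraint at accuracy $\varepsilon$ and failure probability $\delta$, costing $\lceil 2\|a_i\|_1^2 \ln(2/\delta)/\varepsilon^2 \rceil$ samples of the corresponding thermal state by Lemma~\ref{lem:exp_Q}; summed over $i\in[c]$ and multiplied by $M$, this yields the first product in the theorem. For the final step, the observable $H - \overline{\mu^M}\cdot Q$ has Pauli-coefficient vector whose $\ell^1$ norm is bounded, via the triangle inequality and Cauchy--Schwarz, by $\|h\|_1 + \|\overline{\mu^M}\|\sqrt{\sum_i \|a_i\|_1^2} \leq \|h\|_1 + r\sqrt{\sum_i \|a_i\|_1^2}$, exploiting the projection onto the radius-$r$ ball in step~10 of Algorithm~\ref{alg:energy-min}. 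Estimating at accuracy $\varepsilon/4$ and failure probability $\delta$ via Lemma~\ref{lem:exp_Q} then produces the second summand in the sample complexity.

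The main obstacle, in my view, is the bookkeeping of the probability statement in~\eqref{eq:SGA-guarantee}: the randomness of the SGA trajectory appears inside $\mathbb{E}[\hat{E}]$, while the randomness of the final Hoeffding estimation of $\tilde{G}$ appears inside $\Pr$. One has to argue, by conditioning on $\overline{\mu^M}$ and using independence of the final estimation subroutine from the SGA history, that the high-probability guarantee on $\tilde{G}$ passes through the expectation over the SGA trajectory cleanly, and in particular that it does not interact with the bounded-variance argument used to control the intermediate stochastic gradients. Once that separation is formalized, combining it with the three $\varepsilon/4$-type bounds above produces the stated sample-complexity expression.
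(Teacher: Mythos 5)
Your proposal follows essentially the same route as the paper's own proof: the same four-way error decomposition (thermodynamic gap, SGA optimization gap via Lemma~\ref{lem:sgd_conv}, the entropy term $TS(\rho_T(\overline{\mu^M}))$, and the final Hoeffding estimation), the same smoothness constant from Lemma~\ref{lem:hessian-upper-bound} with $\Vert Q_i\Vert \le \Vert a_i\Vert_1$, the same good/bad-event variance bound yielding $\sigma^2$, and the same tally $M\sum_i N_i$ plus the final-step Hoeffding count with range $\Vert h\Vert_1 + r\sqrt{\sum_i \Vert a_i\Vert_1^2}$. The only quibble is your allocation of $\varepsilon/2$ to the SGA gap, which would make the total budget exceed $\varepsilon$; the prescribed $M$ in~\eqref{eq:num_iter_SGA} in fact delivers an $\varepsilon/4$ gap, so the accounting closes exactly as in the paper.
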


\begin{proof}
Note that Algorithm~\ref{alg:energy-min} is a stochastic gradient ascent
algorithm, where the stochastic gradients $\overline{g}(\mu)$, at any given
point $\mu$, are estimated using Algorithm~\ref{alg:est_exp_Q}:
\begin{equation}
\overline{g}(\mu)=\left(  \overline{g}_{1}(\mu),\ldots,\overline{g}_{c}
(\mu)\right)  ^{\mathsf{T}},
\end{equation}
where $\overline{g}_{i}(\mu)$ is the stochastic partial derivative, given as
\begin{equation}
\overline{g}_{i}(\mu)=q_{i}-\overline{Y}_{i}(\mu).
\label{eq:stoch_par_deri}
\end{equation}
Here, Algorithm~\ref{alg:est_exp_Q} evaluates $\overline{Y}_{i}(\mu)$.

From Appendix~\ref{sec:sgd}, we know that in order to use SGA for
optimization, the stochastic gradient should be unbiased. This is true for our
case; i.e., for all $\mu\in\mathbb{R}^{c}$, we have $\mathbb{E}[\overline
{g}(\mu)]=\nabla_{\mu}\left(  \mu\cdot q-T\ln Z_{T}(\mu)\right)  $, and we
showed this previously in Section~\ref{app:grad_est}.

Another requirement for SGA is that the variance of the stochastic gradient
should be bounded from above. Specifically, the stochastic gradient should
satisfy the condition given by~\eqref{eq:bounded-variance-condition} for some constant $\sigma$. Therefore, we
now proceed to obtain this constant for our case. Consider that
\begin{align}
\mathbb{E}\!\left[  \left\Vert \overline{g}(\mu)-\nabla_{\mu}f(\mu)\right\Vert
^{2}\right]   &  =\mathbb{E}\!\left[  \sum_{i\in\left[  c\right]  }\left(  \overline{g}
_{i}(\mu)-\partial_{i}f(\mu)\right)  ^{2}\right]  \\
&  =\sum_{i\in\left[  c\right]  }\mathbb{E}\!\left[  \left(  \overline{g}_{i}(\mu
)-\partial_{i}f(\mu)\right)  ^{2}\right]  \\
&  =\sum_{i\in\left[  c\right]  }\mathbb{E}\!\left[  \left(  q_{i}-\overline{Y}_{i}
(\mu)-\left[  q_{i}-\left\langle Q_{i}\right\rangle _{\rho_{T}(\mu)}\right]
\right)  ^{2}\right]  \\
&  =\sum_{i\in\left[  c\right]  }\operatorname{Var}\!\left[  \overline{Y}_{i}(\mu)\right]  \\
&  \leq\sum_{i\in\left[  c\right]  }(\varepsilon_{c}^{2}+\delta_{c}\left\Vert a_{i}
\right\Vert _{1}^{2})\\
&  \leq c\varepsilon_{c}^{2}+\delta_{c}\sum_{i\in\left[  c\right]  }\left\Vert a_{i}
\right\Vert _{1}^{2}.
\end{align}
The third equality follows from~\eqref{eq:stoch_par_deri}
and~\eqref{eq:gradient-formula_app}. The first inequality
follows directly from the variance bounds of the sample mean $\overline{Y}
_{i}(\mu)$. Indeed, this is a consequence of the following reasoning. Letting
$Y\in\lbrack-J,J]$ be a random variable such that $\Pr(|Y-\mathbb{E}
[Y]|\leq\varepsilon^{\prime})\geq1-\delta^{\prime}$, for $\varepsilon^{\prime
}>0$ and $\delta^{\prime}\in(0,1)$, and defining the set $\mathcal{S}
\coloneqq\{y:|y-\mathbb{E}[Y]|\leq\varepsilon^{\prime}\}$, we find that
\begin{align}
\operatorname{Var}\!\left[  Y\right]   &  =\sum_{y}p(y)|y-\mathbb{E}[Y]|^{2}\\
&  =\sum_{y\in\mathcal{S}}p(y)|y-\mathbb{E}[Y]|^{2}+\sum_{y\in\mathcal{S}^{c}
}p(y)|y-\mathbb{E}[Y]|^{2}\\
&  \leq\sum_{y\in\mathcal{S}}p(y)\varepsilon^{\prime2}+\sum_{y\in
\mathcal{S}^{c}}p(y)J^{2}\\
&  \leq\varepsilon^{\prime2}+\delta^{\prime2}J^{2}.
\end{align}
Applying this inequality, we
conclude the first inequality. Now, comparing the second inequality with the
condition given by~\eqref{eq:bounded-variance-condition}, we obtain the constant $\sigma$ for our
case:
\begin{equation}
\sigma^{2}=c\varepsilon_{c}^{2}+\delta_{c}\sum_{i\in\left[  c\right]  }\left\Vert
a_{i}\right\Vert _{1}^{2}.
\label{eq:sigma-variance-SGA}
\end{equation}

Recall from the convergence result of SGA (Lemma~\ref{lem:sgd_conv}) that, to
obtain an error $\varepsilon_{2}>0$, the total number of iterations, $M$,
should satisfy
\begin{equation}
r\sigma\sqrt{\frac{2}{M}}+\frac{Lr^{2}}{M}\leq\varepsilon_{2}.
\end{equation}
By simple algebra, this inequality can be rewritten as
\begin{equation}
M-\frac{r\sigma\sqrt{2}}{\varepsilon_{2}}\sqrt{M}-\frac{Lr^{2}}{\varepsilon
_{2}}\geq0,
\end{equation}
Applying the quadratic formula and solving for $\sqrt{M}$, we find that the above inequality
is satisfied if the following one holds:
\begin{equation}
M\geq\frac{1}{4}\left[  \frac{r\sigma\sqrt{2}}{\varepsilon_{2}}+\sqrt{\frac{2r^{2}
\sigma^{2}}{\varepsilon_{2}^{2}}+\frac{4Lr^2}{\varepsilon_{2}}}\right]
^{2}.\label{eq:SGA-num-steps-bound}
\end{equation}
With this choice, it follows from Lemma~\ref{lem:sgd_conv} that
\begin{equation}
\left\vert f(\mu^{\ast})-\mathbb{E}\!\left[  f\!\left(  \overline{\mu^{M}}\right)
\right]  \right\vert \leq\varepsilon_{2},
\label{eq:eps2-bnd-proof-SGA-convergence}
\end{equation}
for $f(\mu^{\ast})$ as given in~\eqref{eq:legendre-dual-log-part-app-def} and where
\begin{equation}
\overline{\mu^{M}}\coloneqq \frac{1}{M}\sum_{m=1}^{M}\mu^{m}.
\end{equation}
Additionally, the expectation $\mathbb{E}$ in~\eqref{eq:eps2-bnd-proof-SGA-convergence} is with respect to the randomness generated in the $M$ gradient update steps of the SGA algorithm.
Now recalling from~\eqref{eq:log-part-back-to-free-energy-1}--\eqref{eq:log-part-back-to-free-energy-2} that
\begin{align}
f(\mu^{\ast})  & =F_{T}(\mathcal{Q},q),\\
\mathbb{E}\!\left[  f\!\left(  \overline{\mu^{M}}\right)  \right]    &
=\mathbb{E}\!\left[  \left\langle H\right\rangle _{\rho_{T}(\overline{\mu^{M}}
)}-TS(\rho_{T}(\overline{\mu^{M}}))+\overline{\mu^{M}}\cdot\left(
q-\left\langle Q\right\rangle _{\rho_{T}(\overline{\mu^{M}})}\right)  \right]
,
\end{align}
it follows that
\begin{equation}
\left\vert F_{T}(\mathcal{Q},q)-\mathbb{E}\!\left[  f\!\left(  \overline{\mu^{M}
}\right)  \right]  \right\vert \leq\varepsilon_{2}.
\end{equation}
Now set
\begin{align}
\tilde{E} & \equiv\mathbb{E}\!\left[  \left\langle H\right\rangle _{\rho
_{T}(\overline{\mu^{M}})}+\overline{\mu^{M}}\cdot\left(  q-\left\langle
Q\right\rangle _{\rho_{T}(\overline{\mu^{M}})}\right)  \right]  \\
& = \mathbb{E}\!\left[ \overline{\mu^{M}}\cdot  q + \left\langle H -\overline{\mu^{M}}\cdot
Q\right\rangle _{\rho_{T}(\overline{\mu^{M}})}  \right],
\end{align}
and let $\hat{E}$ be the estimate of $\tilde{E}$ when using Algorithm~\ref{alg:energy-min}. That is, $\tilde{E}$ is the estimate that results when feeding in the observable $H -\overline{\mu^{M}}\cdot
Q$ to Algorithm~\ref{alg:est_exp_Q}, adding $\overline{\mu^{M}}\cdot  q$ to  the result, and then taking the expectation $\mathbb{E}$.
Applying an analysis similar to that in Appendix~\ref{app:alg-error-analysis}, and setting
$T=\frac{\varepsilon_{1}}{\ln d}$, we find that
\begin{align}
\left\vert E(\mathcal{Q},q)-\hat{E}\right\vert  & \leq\left\vert
E(\mathcal{Q},q)-F_{T}(\mathcal{Q},q)\right\vert +\left\vert F_{T}
(\mathcal{Q},q)-\mathbb{E}\!\left[  f\!\left(  \overline{\mu^{M}}\right)  \right]
\right\vert \nonumber\\
& \qquad+\left\vert \mathbb{E}\!\left[  f\!\left(  \overline{\mu^{M}}\right)
\right]  -\tilde{E}\right\vert +\left\vert \tilde{E}-\hat{E}\right\vert \\
& \leq\varepsilon_{1}+\varepsilon_{2}+T\mathbb{E}\!\left[  S(\rho_{T}
(\overline{\mu^{M}}))\right]  +\varepsilon_{3}\\
& \leq\varepsilon_{1}+\varepsilon_{2}+T\ln d+\varepsilon_{3}\\
& =2\varepsilon_{1}+\varepsilon_{2}+\varepsilon_{3}.
\end{align}

Thus, to have an overall error of $\varepsilon$, we can pick $\varepsilon
_{1}=\varepsilon_{2}=\varepsilon_{3}=\frac{\varepsilon}{4}$. We should then
set $T=\frac{\varepsilon}{4\ln d}$, and looking back to
\eqref{eq:SGA-num-steps-bound}, we require
\begin{align}
M  & \geq\frac{1}{4}\left[  \frac{r\sigma\sqrt{2}}{\varepsilon_{2}}+\sqrt{\frac
{2r^{2}\sigma^{2}}{\varepsilon_{2}^{2}}+\frac{4Lr^2}{\varepsilon_{2}}}\right]
^{2}\label{eq:M-ineq-sC}\\
& =\frac{1}{4}\left[  \frac{4\sqrt{2}r\sigma}{\varepsilon}+\sqrt{\frac{32r^{2}\sigma^{2}
}{\varepsilon^{2}}+\frac{16Lr^2}{\varepsilon}}\right]  ^{2}\\
& =4\left[  \frac{\sqrt{2}r\sigma}{\varepsilon}+\sqrt{\frac{2r^{2}\sigma^{2}
}{\varepsilon^{2}}+\frac{Lr^2}{\varepsilon}}\right]  ^{2}\\
& =4\left[  \frac{\sqrt{2}r\sigma}{\varepsilon}+\sqrt{\frac{2r^{2}\sigma^{2}
}{\varepsilon^{2}}+\frac{8r^2\left(  \ln d\right)  \sum_{i\in\left[  c\right]
}\left\Vert a_{i}\right\Vert _{1}^2}{\varepsilon^{2}}}\right]  ^{2}\\
& =\frac{4r^2}{\varepsilon^{2}}\left[  \sqrt{2}\sigma+\sqrt{2\sigma
^{2}+8\left(  \ln d\right)  \sum_{i\in\left[  c\right]  }\left\Vert
a_{i}\right\Vert _{1}^2}\right]  ^{2}.
\end{align}
where we plugged in the choice of the smoothness parameter $L$ from~\eqref{eq:smoothness-parameter-choice}, i.e.,
\begin{equation}
L=\frac{2}{T}\sum_{i\in\left[  c\right]  }\left\Vert a_{i}\right\Vert
_{1}^2=\frac{8\ln d}{\varepsilon}\sum_{i\in\left[  c\right]  }\left\Vert
a_{i}\right\Vert _{1}^2\geq\frac{2}{T}\sum_{i\in\left[  c\right]  }\left\Vert
Q_{i}\right\Vert^2\label{eq:alternate-L} .
\end{equation}
We can simplify the lower bound above by noticing that
\begin{align}
 \frac{4r^2}{\varepsilon^{2}}\left[  \sqrt{2}\sigma+\sqrt{2\sigma
^{2}+8\left(  \ln d\right)  \sum_{i\in\left[  c\right]  }\left\Vert
a_{i}\right\Vert _{1}^2}\right]  ^{2}
& =\frac{4r^2}{\varepsilon^{2}}\left[  \sqrt{2\sigma^{2}}+\sqrt
{2\sigma^{2}+8\left(  \ln d\right)  \sum_{i\in\left[  c\right]
}\left\Vert a_{i}\right\Vert _{1}^2}\right]  ^{2}\\
& \leq\frac{4r^2}{\varepsilon^{2}}\left[  2\sqrt{2\sigma^{2}+8\left(  \ln
d\right)  \sum_{i\in\left[  c\right]  }\left\Vert a_{i}\right\Vert _{1}^2
}\right]  ^{2}\\
& =\frac{16r^2}{\varepsilon^{2}}\left(  2\sigma^{2}+8\left(  \ln d\right)
\sum_{i\in\left[  c\right]  }\left\Vert a_{i}\right\Vert _{1}^2\right)  .
\end{align}
So we set $M$ as follows for simplicity:
\begin{equation}
M=\left\lceil \frac{16r^2}{\varepsilon^{2}}\left(  2\sigma^{2}+8\left(  \ln
d\right)  \sum_{i\in\left[  c\right]  }\left\Vert a_{i}\right\Vert
_{1}^2\right)  \right\rceil ,
\end{equation}
and with this choice, we are guaranteed that~\eqref{eq:M-ineq-sC} holds.
This resolves the minimum number of iterations used by SGA to reach an
$\varepsilon$-approximate optimum point.

Similarly, we can evaluate the step size $\eta$ for SGA, which we do in
the following way. Again recall from Lemma~\ref{lem:sgd_conv} that the step
size $\eta$ is given as follows:
\begin{equation}
\eta=\left[  L+1/\gamma\right]  ^{-1}=\left[  \frac{8\ln d}{\varepsilon}
\sum_{i\in\left[  c\right]  }\left\Vert a_{i}\right\Vert _{1}^2+\frac{\sigma}
{r}\sqrt{\frac{M}{2}}\right]  ^{-1}.\label{eq:eta_exp}
\end{equation}
Recall that each sample complexity $N_{i}$ of Algorithm~\ref{alg:est_exp_Q} is
given by
\begin{equation}
N_{i}=\left\lceil \frac{2\left\Vert a_{i}\right\Vert _{1}^{2}\ln\!\left(
\frac{2}{\delta_{c}}\right)  }{\varepsilon_{c}^{2}}\right\rceil .
\end{equation}
From this, we get the total sample complexity for gradient updates in 
Algorithm~\ref{alg:energy-min}:
\begin{align}
N  & =M\cdot\sum_{i\in\left[  c\right]  }N_{i}\\
& =\left\lceil \frac{16r^2}{\varepsilon^{2}}\left(  2\sigma^{2}+8\left(
\ln d\right)  \sum_{i\in\left[  c\right]  }\left\Vert a_{i}\right\Vert
_{1}^2\right)  \right\rceil \sum_{i\in\left[  c\right]  }\left\lceil \frac{2\left\Vert
a_{i}\right\Vert _{1}^{2}\ln\!\left(  \frac{2}{\delta_{c}}\right)  }
{\varepsilon_{c}^{2}}\right\rceil .
\end{align}
For simplicity, we can set $\varepsilon_{c}=\varepsilon$, $\delta
_{c}=\delta$, substitute \eqref{eq:sigma-variance-SGA}, and the above reduces to
\begin{align}
N & =\left\lceil \frac{16r^2}{\varepsilon^{2}}\left(  2\left(  c\varepsilon
^{2}+\delta\sum_{i\in\left[  c\right]  }\left\Vert a_{i}\right\Vert _{1}^{2}\right)
+8\left(  \ln d\right)  \sum_{i\in\left[  c\right]  }\left\Vert
a_{i}\right\Vert _{1}^2\right)  \right\rceil \sum_{i\in\left[  c\right]  }\left\lceil
\frac{2\left\Vert a_{i}\right\Vert _{1}^{2}\ln\!\left(  \frac{2}{\delta}\right)
}{\varepsilon^{2}}\right\rceil \\
& =\left\lceil \frac{16r^2}{\varepsilon^{2}}\left(  2  c\varepsilon
^{2}
+\left(8  \ln d +2\delta\right) \sum_{i\in\left[  c\right]  }\left\Vert
a_{i}\right\Vert _{1}^2\right)  \right\rceil \sum_{i\in\left[  c\right]  }\left\lceil
\frac{2\left\Vert a_{i}\right\Vert _{1}^{2}\ln\!\left(  \frac{2}{\delta}\right)
}{\varepsilon^{2}}\right\rceil,
\end{align}
such that the dependence of $N$ on $\varepsilon$ is $\varepsilon^{-4}$.

The
last step of Algorithm~\ref{alg:energy-min} involves an extra estimation of $\hat{E}$. By
invoking the Hoeffding bound (Lemma~\ref{lem:hoeffding}) and noticing that
\begin{align}
\left\vert \left\langle H-\mu\cdot Q\right\rangle _{\rho_{T}(\mu)}\right\vert& = 
\left\vert \left\langle H\right\rangle _{\rho_{T}(\mu)}-\mu\cdot\left\langle Q\right\rangle _{\rho_{T}(\mu)}\right\vert \\& \leq
\left\vert \left\langle H\right\rangle _{\rho_{T}(\mu)}\right| + \left|\mu\cdot \left\langle Q\right\rangle _{\rho_{T}(\mu)}\right|\\
& \leq \left \| H\right\| + \left \| \mu \right\| \sqrt{ \sum_{i\in\left[  c\right]  } \left|\left\langle Q_i\right\rangle _{\rho_{T}(\mu)}\right|^2}\\
& \leq \left \| H\right\| + r \sqrt{ \sum_{i\in\left[  c\right]  } \left\| Q_i\right\| ^2}\\
& \leq \left\Vert h\right\Vert _{1}+r \sqrt{\sum_{i\in\left[  c\right]  }\left\Vert a_{i}\right\Vert _{1}^{2}}
\end{align}
for all $\mu$ in a Euclidean ball of radius $r$,
we conclude that the last step of Algorithm~\ref{alg:energy-min} has sample complexity
\begin{equation}
\left\lceil \frac{8  }{\varepsilon^{2}}\left(  \left\Vert h\right\Vert _{1}+r\sqrt{\sum_{i\in\left[  c\right]  }\left\Vert a_{i}\right\Vert _{1}^{2}}\right)^2  \ln\!\left(  \frac{2}{\delta
}\right)\right\rceil ,
\end{equation}
in order to achieve $\varepsilon_{3}=\varepsilon/4$ error with probability
$1-\delta$.\ This concludes the proof.
\end{proof}

\section{Proof of Lemma~\ref{lem:reduction-SDP-to-energy-min_quantum}}

\label{app:reduction-SDP-to-energy-min_quantum}

Let us begin by adding a slack variable $z\geq0$ to convert the inequality
constraint $\operatorname{Tr}[X]\leq R$ in \eqref{eq:R-SDP} to an equality
constraint:
\begin{equation}
\alpha_{R}=\max_{X,z\geq0}\left\{  \operatorname{Tr}[CX]:\operatorname{Tr}
[X]+z=R,\ \operatorname{Tr}[A_{i}X]=b_{i}\ \forall i\in\left[  m\right]
\right\}  .
\end{equation}
We can then make the substitutions $W=X/R$, $y=z/R$, and $b_{i}^{\prime}
=b_{i}/R$ for all $i\in\left[  m\right]  $, leading to the follow rewrite of
the semi-definite program:
\begin{align}
\alpha_{R} &  =\max_{X,z\geq0}\left\{
R\operatorname{Tr}[CX/R]:\operatorname{Tr}[X/R]+z/R=1,\
\operatorname{Tr}[A_{i}X/R]=b_{i}/R\ \forall i\in\left[  m\right]
\right\}  \\
&  =R\max_{W,y\geq0}\left\{  \operatorname{Tr}[CW]:\operatorname{Tr}
[W]+y=1,\ \operatorname{Tr}[A_{i}W]=b_{i}^{\prime}\ \forall i\in\left[
m\right]  \right\}  .
\end{align}
Finally, let us define a quantum state $\rho\in\mathbb{D}_{2d}$ as
\begin{align}
\rho &  =
\begin{bmatrix}
W & W_{01}\\
W_{01}^{\dag} & W_{11}
\end{bmatrix}
\\
&  =|0\rangle\!\langle0|\otimes W+|0\rangle\!\langle1|\otimes W_{01}
+|1\rangle\langle0|\otimes W_{01}^{\dag}+|1\rangle\!\langle1|\otimes W_{11},
\end{align}
where $y=\operatorname{Tr}[W_{11}]$ and define the observables $C^{\prime
}\coloneqq|0\rangle\!\langle0|\otimes C$ and $A_{i}^{\prime}\coloneqq|0\rangle
\!\langle0|\otimes A_{i}$ for all $i\in\left[  m\right]  $. Then we find that
\begin{align}
1 &  =\operatorname{Tr}[\rho]=\operatorname{Tr}[W]+y,\\
\operatorname{Tr}[CW] &  =\operatorname{Tr}[C^{\prime}\rho],\\
\operatorname{Tr}[A_{i}W] &  =\operatorname{Tr}[A_{i}^{\prime}\rho],
\end{align}
and thus the desired equality in \eqref{eq:reduction-to-states_quantum}\ holds.

\section{Estimating the Hessian in free energy minimization}

\label{app:Hessian_est}

Let us first recall the expression for the $(i, j)$-th element of the Hessian of the objective function $f(\mu)$ defined in~\eqref{eq:obj_fun}:
\begin{equation}
  \frac{\partial^{2}}{\partial\mu_{i}\partial\mu_{j}}f(\mu)
    = \frac{1}{T}\left(  
    \left\langle Q_{i}\right\rangle _{\rho_{T}(\mu)}
    \left\langle Q_{j}\right\rangle _{\rho_{T}(\mu)} -\frac{1}{2}\left\langle 
    \left\{  \Phi_{\mu}(Q_{i}),Q_{j}\right\}
    \right\rangle _{\rho_{T}(\mu)}\right).
  \label{eq:hessian-app}
\end
{equation}
The factor $\sfrac{1}{T}$ is an overall scalar that does not affect the structure of the quantum algorithms that estimate the quantity in~\eqref{eq:hessian-app}. For simplicity, we therefore describe the algorithms for estimating the quantity inside the parentheses of the right-hand side of the above equation, and multiply the final output by $\sfrac{1}{T}$ in classical post‑processing.

Estimating the first term of the right-hand side of the above equation is relatively straightforward (see~\cite[Algorithm 2]{patel2024quantumboltzmannmachinelearning} for similar quantity estimation). So, in what follows, we present an algorithm for estimating the second term in greater detail.

Consider the following:
\begin{align}
    -\frac{1}{2}\left\langle \left\{  \Phi_{\mu}(Q_{i}), Q_j \right\}  \right\rangle_{\rho_T(\mu)}
    & = - \frac{1}{2}\operatorname{Tr}[\left\{
    \Phi_{\mu}(Q_{i}), Q_j \right\}  \rho_T(\mu)]\\
    & = \int_{\mathbb{R}} dt\ p(t)\ \left( -\frac{1}{2} \operatorname{Tr}\!\left[ \left\{e^{-i\left(  H-\mu\cdot Q\right)  t/T} Q_i e^{i\left(  H-\mu\cdot Q\right)  t/T} , Q_j \right\} \rho_T(\mu)\right] \right) \\
    & = \sum_{\vec{\ell}} \sum_{\vec{k}} a_{i, \vec{\ell}} \ a_{j, \vec{k}} \int_{\mathbb{R}} dt\ p(t)\ \left( -\frac{1}{2} \operatorname{Tr}\!\left[ \left\{e^{-i\left(  H-\mu\cdot Q\right)  t/T} \sigma_{\vec{\ell}} \ e^{i\left(  H-\mu\cdot Q\right)  t/T} , \sigma_{\vec{k}} \right\} \rho_T(\mu)\right] \right),\label{eq:first_hess_simplify}
\end{align}
where in the last equality we used the fact that $Q_i$ and $Q_j$ can be written as linear combinations of tensor products of Pauli matrices as assumed in~\eqref{eq:Q_pauli}, that is, $Q_i = \sum_{\vec{\ell}} a_{i, \vec{\ell}} \, \sigma_{\vec{\ell}}$ and $Q_j = \sum_{\vec{k}} a_{j, \vec{k}} \, \sigma_{\vec{k}}$.

We are now in a position to present an algorithm (Algorithm~\ref{algo:hessian_est}) for estimating the second term of~\eqref{eq:hessian-app} by using its equivalent form shown in~\eqref{eq:first_hess_simplify}. At the core of our algorithm lies the quantum circuit that estimates the expected value of the anticommutator of two operators presented in~\cite[Appendix B]{minervini2025evolvedquantumboltzmannmachines}. In particular,~\cite[Figure 5.a]{minervini2025evolvedquantumboltzmannmachines} provides the exact circuit required in our case, as it estimates the quantity $-\frac{1}{2} \left\langle \left\{ U, H\right\} \right\rangle_{\rho}$, where $H$ is Hermitian and $U$ is Hermitian and unitary. In our case, we choose $\rho = \rho_T(\mu)$, $U = e^{-i\left(  H-\mu\cdot Q\right)  t/T} \sigma_{\vec{\ell}} \ e^{i\left(  H-\mu\cdot Q\right)  t/T}$, and $H = \sigma_{\vec{k}}$. We then make some further simplifications that follow because $\rho_T(\mu)$ commutes with $e^{-i\left(  H-\mu\cdot Q\right)t/T}$. Accordingly, the quantum circuit that estimates the integrand of~\eqref{eq:first_hess_simplify} is depicted in Figure~\ref{fig:KM}. The algorithm (Algorithm~\ref{algo:hessian_est}) involves running this circuit $N$ times, where $N$ is determined by the desired accuracy and error probability. During each run, the time $t$ for the Hamiltonian evolution is sampled at random with probability $p(t)$ (defined in~\eqref{eq:high-peak-tent-density}), while the indexes $\vec{\ell}$ and $\vec{k}$ are sampled with probability $\sfrac{a_{i,\vec{\ell}}}{\left \Vert a_i \right \Vert_1}$ and $\sfrac{a_{j,\vec{k}}}{\left \Vert a_j \right \Vert_1}$ respectively. The final estimation of the second term of~\eqref{eq:hessian-app} is obtained by averaging the outputs of the $N$ runs.

\begin{algorithm}[H]
\caption{\texorpdfstring{$\mathtt{estimate\_anticommutator\_hessian}(\mu, (a_{i,\vec{\nu}})_{\vec{\nu}}, (a_{j,\vec{\nu}})_{\vec{\nu}}, \varepsilon, \delta)$}{estimate first term}}
\label{algo:hessian_est}
\begin{algorithmic}[1]
\STATE \textbf{Input:} chemical potential vector $\mu$, non-negative coefficients $(a_{i,\vec{\nu}})_{\vec{\nu}}$ and $(a_{j,\vec{\nu}})_{\vec{\nu}}$ for the Pauli strings $(\sigma_{\vec{\nu}})_{\vec{\nu}}$, accuracy $\varepsilon > 0$, error probability $\delta \in (0,1)$
\STATE $N \leftarrow \lceil\sfrac{2 \left \Vert a_i \right \Vert_1^2 \left \Vert a_j \right \Vert_1^2\ln(\sfrac{2}{\delta})}{\varepsilon^2}\rceil$
\FOR{$n = 0$ to $N-1$}
\STATE Initialize the control register to $|1\rangle\!\langle 1 |$
\STATE Prepare the system register in the state $\rho_T(\mu)$
\STATE Sample $\vec{\ell}$, $\vec{k}$, and $t$ with probability $\sfrac{a_{i,\vec{\ell}}}{\left \Vert a_i \right \Vert_1}$, $\sfrac{a_{j,\vec{k}}}{\left \Vert a_j \right \Vert_1}$, and $p(t)$ (defined in~\eqref{eq:high-peak-tent-density}), respectively
\STATE Apply the Hadamard gate to the control register
\STATE Apply the following unitaries to the control and system registers:
\STATE \hspace{0.6cm} \textbullet~Controlled-$\sigma_{\vec{k}}$: $\sigma_{\vec{k}}$ is a local unitary acting on the system register, controlled by the control register
\STATE \hspace{0.6cm} \textbullet~$e^{i\left(  H-\mu\cdot Q\right)  t/T}$: Hamiltonian simulation for time $t$ on the system register
\STATE Apply the Hadamard gate to the control register
\STATE Measure the control register in the computational basis and store the measurement outcome~$\lambda_n$
\STATE Measure the system register in the eigenbasis of $\sigma_{\vec{\ell}}$ and store the measurement outcome $\gamma_n$
\STATE $Y_{n} \leftarrow (-1)^{\lambda_n + \gamma_n}$
\ENDFOR

\STATE \textbf{return} $\overline{Y} \leftarrow  \left \Vert a_i \right \Vert_1 \left \Vert a_j \right \Vert_1 \frac{1}{N}\sum_{n=0}^{N-1}Y_{n}$
\end{algorithmic}
\end{algorithm}

\section{Sample Complexity of the AG19 Quantum MMW Algorithm  for Solving SDPs}

\label{app:samp-comp-of-prior-work}

In this section, we provide a more detailed explanation of how we arrived at the sample complexity expression in~\eqref{eq:sample-comp-prior-method}, as it is not directly evident from the expression in \cite[Theorem~8]{vanApeldoorn2019arXiv}.

The algorithm of~\cite{Apeldoorn2019} is an iterative method consisting of $O\!\left(\gamma^2 \ln d\right)$ iterations, where $\gamma = \Theta(\sfrac{rR}{\varepsilon})$. Each iteration requires $ O\!\left(\gamma^2\right)$ thermal-state samples of the form in~\eqref{eq:grand-canon-thermal-state} to be used for two subroutines: Two-Phase Quantum Search and Two-Phase Quantum Minimum Find. Consequently, the total sample complexity of the AG19 algorithm is 
\begin{equation}
    O\!\left(\gamma^4 \ln d\right) = O\!\left (\frac{r^4R^4}{\varepsilon^4} \ln d\right).
\end{equation}
Finally, note that all the results in~\cite{Apeldoorn2019} are stated for a constant success probability of $2/3$. To boost the success probability to $1-\delta$ for arbitrary $\delta \in (0, 1)$, it suffices to repeat the procedure $O(\ln (\sfrac{1}{\delta}))$ times.
This implies that the final sample complexity of this algorithm with success probability $1-\delta$ is then
\begin{equation}
    O\!\left(\frac{r^4R^4 \ln\! \left(\sfrac{1}{\delta}\right) \ln d}{\varepsilon^4} \right),
\end{equation} 
which is the expression in~\eqref{eq:sample-comp-prior-method}.

\end{document}